\newtheorem{fact}[theorem]{Fact}
\newtheorem{observation}[theorem]{Observation}
\spnewtheorem{mclaim}{Claim}[theorem]{\bfseries}{\itshape}
\Crefname{mclaim}{Claim}{Claims}
\newcommand{\sharedVertex}[1]{C^{\mathrm{V}}_{#1}}
\newcommand{\sharedEdge}[1]{C^{\mathrm{E}}_{#1}}
\newcommand{\forbiddenGraph}{F}
\newcommand{\forbiddenClass}{\mathcal{F}}
\newcommand{\PPP}{\mathcal{P}}
\newcommand{\Hgraph}[1]{H_{#1}}
\newcommand{\subdivStar}[1]{S_{#1}}
\newcommand{\join}{\bowtie}
\newcommand{\td}{{\textup{td}}}
\newcommand{\tw}{{\textup{tw}}}
\newcommand{\cw}{{\textup{cw}}}
\newcommand{\pw}{{\textup{pw}}}
\newcommand{\SM}{{\;{|}\;}} 
\newcommand{\SE}{\,\}}
\newcommand{\SB}{\{\,}
\newcommand{\CCC}{\mathcal{C}}
\newenvironment{claimproof}
{\noindent {\em Proof of Claim:} }
{\hfill $\diamond$ \smallskip}
\title{Bounding Width on Graph Classes of Constant~Diameter}
\titlerunning{Bounding Width on Graph Classes of Constant~Diameter}
\author{Konrad K. Dabrowski
\and
  Tala Eagling-Vose
  \and
  Noleen K\"ohler
  \and
  Sebastian Ordyniak
  \and
  Dani\"el Paulusma
 }
\authorrunning{K.K.\ Dabrowski, T.\ Eagling-Vose, N. K\"{o}hler, S.\
  Ordyniak, D. Paulusma}
\institute{School of Computing, Newcastle University, Newcastle, UK\\\email{konrad.dabrowski@newcastle.ac.uk}\\
  Department of Computer Science, Durham University, Durham, UK\\\email{\{tala.j.eagling-vose,daniel.paulusma\}@durham.ac.uk}\\
  School of Computer Science, University of Leeds, Leeds, UK \\\email{n.koehler@leeds.ac.uk,sordyniak@gmail.com}}
\begin{document}

\maketitle

\begin{abstract}
We determine if the width of a graph class ${\cal G}$ changes from unbounded to bounded if we consider only those graphs from ${\cal G}$ whose diameter is bounded. As parameters we consider treedepth, pathwidth, treewidth and clique-width, and as graph classes we consider classes defined by forbidding some specific graph $F$ as a minor, induced subgraph or subgraph, respectively. Our main focus is on treedepth for $F$-subgraph-free graphs of diameter at most~$d$ for some fixed integer $d$. We give classifications of boundedness of treedepth for $d\in \{4,5,\ldots\}$ and partial classifications for $d=2$ and $d=3$.
\end{abstract}

\section{Introduction}
Graph width parameters play a prominent role in modern graph theory. One of the reasons is that large sets of NP-complete graph problems may become polynomial-time solvable on graph classes on which some width parameter is bounded (by a constant). For example, the celebrated meta-theorem of Courcelle~\cite{Co90} states that every problem definable in MSO$_2$ is polynomial-time solvable on graph classes of bounded treewidth. Another well-known meta-theorem, due to Courcelle, Makowsky and Rotics~\cite{CMR00}, states that every problem definable in MSO$_1$ is polynomial-time solvable on graph classes of bounded clique-width. The logic MSO$_1$ is more restrictive than MSO$_2$. However, any graph class of bounded treewidth has bounded clique-width, whereas the reverse statement does not hold. That is, clique-width is more {\it powerful} than treewidth. 

Due to the above algorithmic implications and also out of a graph-structural interest, there exist many papers in the literature that research whether certain graph classes have bounded width. The framework of graph containment opens the way for a more systematic approach. For example, the recent treewidth dichotomy of Lozin and Razgon~\cite{LR22} determines exactly for which finite sets~${\cal F}$, the class of ${\cal F}$-free graphs has bounded treewidth; here, {\it ${\cal F}$-free} means not containing any graph from ${\cal F}$ as an {\it induced} subgraph. Hickingbotham~\cite{Hi23} showed that in the treewidth dichotomy we may replace ``treewidth'' by ``pathwidth''. In contrast, there still exist some pairs $(H_1,H_2)$ for which boundedness of clique-width is open for $(H_1,H_2)$-free graphs; see the survey~\cite{DJP19} for details.

\medskip
\noindent
{\bf Our Focus.} 
A width parameter~$p$ that is unbounded on a graph class ${\cal G}$ may be bounded on a subclass ${\cal G'}$ of ${\cal G}$. Ideally, we would like ${\cal G}'$ to be as large as possible to optimally benefit from the algorithmic benefits if $p$ is bounded. We consider the {\it diameter} of the input graph and ask:

\medskip
\noindent
{\it For a graph class ${\cal G}$ of unbounded width, what is the largest $d$ such that the graphs in ${\cal G}$ of diameter at most~$d$ have bounded width?}

\medskip
\noindent
This is a natural question, as there exist numerous NP-complete problems that stay NP-complete even on graphs of diameter~$d=2$; see, e.g. the diameter study for $k$-{\sc Colouring}~\cite{DPR22,MS16}, in particular for ${\cal F}$-free graphs in~\cite{DPR22,KS23,MPS22,MS16}. There are even problems that are NP-complete {\it only if} $d=2$ (e.g. {\sc Disconnected Cut}~\cite{MP15}). Answering the above question will have a wide range of algorithmic consequences, in particular due to meta-theorems, as we discussed 
above~\cite{CMR00,CO00}.

\medskip
\noindent
{\bf Our Approach.} We work within the framework of graph containment and thus focus on graph classes defined by some set ${\cal F}$ of forbidden graphs. To get a handle on these, we restrict ourselves to the case when ${\cal F}$ consists of a single graph~$F$.  
To answer our research question, we selected some classical graph containment relations and width parameters. We forbid $F$ as an induced subgraph, subgraph or minor. We say that a graph $G$ is {\it $F$-free}, {\it $F$-subgraph-free} or {\it $F$-minor-free} if it does not contain $F$ as an induced subgraph, subgraph or minor, respectively. Note that $F$-minor-free graphs are $F$-subgraph-free and $F$-subgraph-free graphs are $F$-free. As width parameters, we will consider pathwidth $\pw$, treewidth $\tw$, clique-width $\cw$ and also treedepth $\td$
(which has algorithmic applications for many problems where treewidth cannot be used; see 
e.g.~\cite{GajarskyHlineny12,GanianOrdyniak18,GJW16,IwataOgasawaraOhsaka18,KLO18}). 
We write $p \rhd q$ if $p$ is more powerful than $q$. It is well known~\cite{BGHK95,CO00} that $\cw \rhd \tw \rhd \pw \rhd \td$. 
 
\medskip
\noindent
{\bf Known Results}.
 We first describe, in Table~\ref{t-table},
 the situation without a diameter bound. For $r\geq 1$, the
 graph $P_r$ is the $r$-vertex path. The set ${\mathcal S}$ consists
 of all graphs, every component of which is a path or 
 {\it subdivided claw} (cubic tree with exactly one vertex of
 degree~$3$). The set $\overline{{\mathcal S}}$ consists of all graphs that are  subgraphs of any {\it subdivided star}  (any tree with exactly one vertex of degree at least~$3$).
 
 Table~\ref{t-table} also includes known results on diameter-width. Eppstein~\cite{Ep00} defined a graph class~${\cal G}$ to have the {\it diameter-treewidth} property if the treewidth of every graph in ${\cal G}$ is bounded by a function of the diameter of $G$. For graph classes closed under taking subgraphs, this notion coincides with bounded local treewidth, a crucial notion in bidimensionality theory. We define the properties of diameter-clique-width, diameter-pathwidth and diameter-treedepth analogously. A graph~$G$ is {\it apex ${\cal G}$} for a graph class ${\cal G}$ if $G-v\in {\cal G}$ for some vertex $v\in V(G)$. So, for instance, the class of {\it apex linear forests} consists of all graphs that become {\it linear forests} (disjoint unions of paths) after removing at most one vertex.
 
  The $(k \times k)$-wall, is the $k$-by-$k$ hexagonal grid.
  We first note that by adding a dominating vertex to a wall we obtain a graph of diameter $d=2$ whose clique-width can be arbitrarily large. This graph only contains apex planar graphs as minors. Hence, 
  the result of Eppstein~\cite{Ep00} implies that
  a class of $F$-minor-free graphs of diameter~$2$ has bounded clique-width if and only if $F$ is apex planar. As this is not true for $d=1$ (just take $F=K_6$), we say that $d=2$ is {\it tight} for diameter-clique-width for minors.  
  By adding a dominating vertex to a 
  full binary tree, we find that $d=2$ is also tight for
 diameter-pathwidth for minors.

\begin{table}[t]
\begin{center}
 \aboverulesep=0ex
 \belowrulesep=0ex
\begin{tabular}{l|l|l|l}
\toprule
& \hspace*{17.5mm}minor & induced subgraph$\;\;\;$ & \hspace*{6.5mm} subgraph \\\midrule
$\cw$ &   planar~\cite{DP16} \hspace*{7mm} (\textcolor{blue}{\bf apex planar})            &  $\subseteq_i P_4$~\cite{CO00} (\textcolor{blue}{\bf $\mathbf{\subseteq_i P_4}$}) & $\mathcal{S}$~\cite{BL02} \hspace*{15.5mm}
(?)\\
$\tw$ &       planar~\cite{RS86} \hspace*{8mm}(apex planar~\cite{Ep00})           &  $\subseteq_i P_2$ \hspace*{3.5mm} (\textcolor{blue}{\bf $\mathbf{\subseteq_i P_2}$}) & $\mathcal{S}$~\cite{RS84} 
\hspace*{14mm}
(?)\\
$\pw$ &       forest~\cite{BRST91} 
 \hspace*{11mm}(apex forest~\cite{DEJMW20})                   &  $\subseteq_i P_2$  
 \hspace*{3.5mm}
 (\textcolor{blue}{\bf $\mathbf{\subseteq_i P_2}$}) & $\mathcal{S}$~\cite{RS84}
 \hspace*{14mm}
 (?)\\
$\td$ &  linear forest~\cite{NM12} (\textcolor{blue}{\bf apex linear forest})    
& $\subseteq_i P_2$ \hspace*{3.5mm} (\textcolor{blue}{\bf $\mathbf{\subseteq_i P_2}$}) &  linear forest~\cite{NM12} (\textcolor{blue}{$\mathbf{\overline{{\mathcal S}}}$})\\\bottomrule
\end{tabular}
\end{center}
\caption{Overview of known and new results. Entries without brackets classify the graphs $F$ such that the width of $F$-free graphs is bounded. For example, a class of $F$-minor free graphs has bounded clique-width if and only if $F$ is a planar graph. Entries within brackets classify the graphs $F$ such that the class of $F$-free graphs has the diameter-width property.  We use the notation $F \subseteq_i G$ to indicate that $F$ is an induced subgraph of $G$. Unreferenced results indicate a trivial/folklore result. A ``?'' indicates an open case. Results in bold/blue are new results proven in this paper.}\label{t-table}
\vspace*{-0.5cm}
\end{table}

\medskip
\noindent
{\bf Our Results.}
Table~\ref{t-table} also contains several new results, and all the new and known results together already show a clear impact of bounding the diameter. 
For each of the new results in Table~\ref{t-table} we show that diameter $d=2$ is tight 
(see \Cref{sec:indsubgraph} and \Cref{sec:minor})
except for one result. Namely, to classify the diameter-treedepth property, we prove the following two results in Section~\ref{sec:subgraphfor}, the second of which shows that $d=5$ is tight.  In our proofs, we will exploit a result of Galvin, Rival and Sands~\cite{GALVIN19827},
who proved that a graph of large treedepth must either contain a large complete bipartite graph as a subgraph or a large induced path; see Section~\ref{sec:pre} for a more detailed discussion of this result and its consequences. 

\begin{theorem}[Classification for diameter $d \geq 5$]\label{thm:classificationDiam5}
Let $d\geq 5$. For a graph $\forbiddenGraph$, the class of $\forbiddenGraph$-subgraph-free graphs of diameter at most $d$ has bounded treedepth if and only if $\forbiddenGraph$ is a subgraph of a subdivided star. 
\end{theorem}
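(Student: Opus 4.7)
I would prove the biconditional by handling the two directions separately: the forward implication via the Galvin--Rival--Sands (GRS) theorem plus a pigeonhole on shallow BFS trees, and the reverse via case-by-case explicit constructions.

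Suppose first that $F$ is a subgraph of a subdivided star. Since any graph containing a subdivided star $S \supseteq F$ also contains $F$, every $F$-subgraph-free graph is $S$-subgraph-free, so I may assume $F$ is itself a subdivided star with $p$ rays of lengths $\ell_1, \dots, \ell_p$, and set $\ell := \max_i \ell_i$. Let $G$ be $F$-subgraph-free of diameter at most $d$, and suppose for contradiction that $\td(G)$ is very large. By GRS, $G$ contains either a large $K_{s,s}$ subgraph or a long induced path $P_k$. In the first case, once $s$ is large enough, $K_{s,s}$ itself contains $F$: place the centre on one side of the bipartition and alternate the ray vertices between the two sides. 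In the second case, run BFS from an endpoint of $P_k$ to obtain a spanning tree $T$ of depth at most $d$; consecutive path vertices have BFS levels differing by at most $1$, so by pigeonhole some level $j^*$ contains $\Omega(k/d)$ path vertices. A recursive pigeonhole on $T$ then locates a vertex $u$ at depth at most $j^*-\ell$ with $p$ children whose subtrees each reach level $j^*$, and truncating the root-to-descendant paths to the required lengths $\ell_1,\dots,\ell_p$ realises $F$ inside $T\subseteq G$. Either case contradicts $F$-subgraph-freeness, so $\td(G)$ is bounded by a function of $F$ and $d$.

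For the reverse direction, suppose $F \notin \overline{\mathcal{S}}$: then $F$ either contains a cycle or is a forest with some component containing two vertices of degree at least $3$. For each such $F$ I would construct an infinite family of $F$-subgraph-free graphs of diameter at most $5$ with unbounded treedepth. If $F$ has an odd cycle, the family $\{K_{s,s}\}_s$ (bipartite, diameter $2$, $\td(K_{s,s})=s+1$) serves. If $F$ is bipartite of girth at most $10$, incidence graphs of the generalised pentagon (diameter $5$, girth $10$) are $F$-subgraph-free and have unbounded treedepth.

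The main obstacle is the remaining cases: bipartite $F$ of girth at least $12$, or $F$ a forest with two branching vertices. Moore-type bounds forbid a single high-girth construction of diameter $5$ with unboundedly many vertices once the required girth exceeds $10$, so a single uniform construction cannot work and a case analysis on the structure of $F$ is unavoidable. Here one would likely use sparse bipartite graphs or tree-like structures augmented with a few carefully placed edges to simultaneously avoid the forbidden pattern of $F$ and maintain diameter at most $5$, precisely exploiting the tightness of $d=5$ that distinguishes the classification from the $d\in\{2,3,4\}$ regimes addressed elsewhere in the paper.
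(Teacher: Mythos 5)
Your overall strategy for the forward direction (GRS plus a shallow-distance pigeonhole) is in the right spirit, but the specific pigeonhole you propose is broken. You claim that once some BFS level $j^*$ contains many path vertices, a recursive pigeonhole locates a vertex $u$ at depth at most $j^*-\ell$ with $p$ children whose subtrees each reach level $j^*$. This need not exist: all of the marked level-$j^*$ path vertices could be children of a single vertex at level $j^*-1$ (a vertex of $G$ with many neighbours on the induced path), in which case no vertex at depth $\le j^*-\ell$ has more than one child whose subtree reaches level $j^*$, and truncating tree paths gives you only $K_{1,p}$, not rays of length $\ell$. The branching can be concentrated entirely in the bottom $\ell$ levels, where tree paths are too short. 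The fix is to build the rays out of segments of the induced path $P$ rather than out of BFS-tree edges (a vertex with $p$ neighbours pairwise far apart on $P$ yields $S_{p*[\ell]}$ by walking along $P$ from each neighbour); this is exactly what the paper's Lemma~\ref{lem:subdivStar} does, by fixing $k^d+1$ disjoint subpaths of $P$ of length $2\ell$ and counting, by induction on $\delta\le d$, how many of them any vertex can reach within distance $\delta$.

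The reverse direction is not a proof: you explicitly leave open the cases of bipartite $F$ of girth at least $12$ and of forests with two branching vertices, and these are precisely the cases that carry the content of the classification. (Your generalised-pentagon construction is also off by one: its incidence graph has girth $10$, so it does contain $C_{10}$ and fails for $F$ of girth exactly $10$.) The paper's Observation~\ref{obs:canonicalUnboundedExamplesDiam5} resolves all cases at once with three explicit families of diameter at most $5$ and unbounded treedepth: the $1$-subdivisions of $K_{n,n}$, the $2$-subdivisions of $K_n$, and the graphs $P_n^{1001}$ (a path plus one apex attached according to the periodic pattern $1001$). A connected graph that embeds as a subgraph into all three families for some $n$ must be a subgraph of a subdivided star -- the apex-path family kills a second branching vertex and all cycle lengths $\not\equiv 1,2 \pmod 4$, while the two subdivided complete (bipartite) graphs kill the remaining cycles and adjacent or distance-$2$ branching vertices -- so for every other $F$ one of the three families is $F$-subgraph-free and witnesses unboundedness. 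No Moore-bound obstruction arises because one is free to use different witness families for different $F$; your conclusion that ``a case analysis on the structure of $F$ is unavoidable'' is where the actual work lives, and it is missing.
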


\begin{theorem}[Classification for diameter $4$]\label{thm:classificationDiam4}
For a graph $\forbiddenGraph$, the class of $\forbiddenGraph$-subgraph-free graphs of diameter at most $4$ has bounded treedepth if and only if $\forbiddenGraph$ is a subgraph of a subdivided star or
$\Hgraph{2}^{\ell}$ for some $\ell \in \mathbb{N}$ (see also Fig.~\ref{fig:forbiddenSubgraphs}).
\end{theorem}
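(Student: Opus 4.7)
The plan is to mirror the structure of the proof of Theorem~\ref{thm:classificationDiam5}, strengthening the analysis wherever the diameter-$4$ hypothesis yields additional leverage.

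For the ``if'' direction, I would invoke the Galvin--Rival--Sands theorem recalled in Section~\ref{sec:pre}: any graph of sufficiently large treedepth contains either a large $K_{t,t}$ as a subgraph or a long induced path $P_k$. Every fixed $F$ that is a subgraph of a subdivided star or of $\Hgraph{2}^{\ell}$ embeds as a subgraph into $K_{t,t}$ once $t$ is large enough (both are small trees with bounded branch structure), so in an $F$-subgraph-free graph of sufficiently large treedepth we may assume the presence of a long induced path $P_k = p_1 p_2 \cdots p_k$. The diameter-$4$ hypothesis now forces structure: for any two path vertices $p_i, p_j$ far apart along $P_k$, there is a path of length at most $4$ in $G$ between them using at most three ``shortcut'' vertices outside (or near the ends of) the induced portion. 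Applying Ramsey-type pigeonholing to the shortcut patterns over many pairs $(p_i, p_j)$, I would extract either (i)~a single shortcut vertex adjacent to many widely-spaced points of the induced path, yielding a subdivided-star subgraph with arbitrarily many arbitrarily long arms, or (ii)~two shortcut vertices joined by a short connecting path, each adjacent to many widely-spaced path points, yielding an $\Hgraph{2}^{\ell}$-subgraph for any prescribed $\ell$. Either pattern contains $F$ as a subgraph, contradicting $F$-subgraph-freeness.

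For the ``only if'' direction, suppose $F$ is neither a subgraph of a subdivided star nor a subgraph of $\Hgraph{2}^{\ell}$ for every $\ell$. A structural analysis of such $F$ (with $F$ necessarily containing a cycle, or three or more vertices of degree $\geq 3$ arranged incompatibly with the two-branch-vertex template of $\Hgraph{2}$, or a degree-$3$ vertex with three ``thick'' attached subtrees) would supply, for each case, a family of $F$-subgraph-free graphs of diameter at most~$4$ and unbounded treedepth. The natural recipe is: start with a deep rooted tree (to supply treedepth) and attach one or two apex-style dominating vertices (or a small dominating gadget) to cap the diameter at~$4$, choosing the gadget so as to avoid creating $F$ as a subgraph. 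Where possible I would reuse the constructions from the ``only if'' direction of Theorem~\ref{thm:classificationDiam5}, contracting one layer to pass from diameter~$5$ to diameter~$4$, and only constructing new families for those $F$ where this contraction introduces an $\Hgraph{2}^{\ell}$ subgraph.

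The main obstacle will be the Ramsey step in the ``if'' direction. One must simultaneously extract either a single shortcut vertex adjacent to many path vertices spread at least $2\ell+1$ apart on the induced path, or two shortcut vertices joined by a short path with the analogous property on each side, while ensuring that the resulting arms (respectively, the two branches) are internally vertex-disjoint so as to form an honest subgraph. Iterating pigeonhole over the finitely many possible shortcut patterns is precisely what forces the dichotomy ``subdivided star or $\Hgraph{2}^{\ell}$'' in the theorem statement; controlling the dependence of $k$ on $\ell$ and on the number of arms of $F$ is the most delicate quantitative step. A secondary but important point is to pinpoint why diameter~$4$ forces $\Hgraph{2}^{\ell}$ whereas diameter~$5$ does not: informally, at diameter~$5$ the two branch vertices can be placed far enough apart along a shortcut to avoid $\Hgraph{2}^{\ell}$, while at diameter~$4$ any such two branching vertices are forced to lie within a bounded-length shortcut, which is exactly the $\Hgraph{2}^{\ell}$-template.
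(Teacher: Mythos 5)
There is a genuine gap in your ``if'' direction. The theorem quantifies over all admissible $\forbiddenGraph$ separately: for $\forbiddenGraph=\subdivStar{k*[\ell]}$ with $k\geq 4$ you must show that every diameter-$4$ graph of huge treedepth contains a large subdivided star, and for $\forbiddenGraph=\Hgraph{2}^{\ell}$ you must show that every such graph contains a large $\Hgraph{2}^{\ell}$ --- i.e.\ you need \emph{both} containments. Your Ramsey step only delivers one of the two patterns, and the two patterns are incomparable: a subdivided star has a single branch vertex and so never contains $\Hgraph{2}^{\ell}$, while $\Hgraph{2}^{\ell}$ has maximum degree $3$ and so never contains $\subdivStar{k*[\ell]}$ for $k\geq 4$. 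Hence the sentence ``Either pattern contains $\forbiddenGraph$ as a subgraph'' is false in general: if $\forbiddenGraph$ is a $4$-armed subdivided star and your pigeonholing happens to return outcome (ii), you obtain no contradiction with $\forbiddenGraph$-subgraph-freeness, and symmetrically for $\forbiddenGraph=\Hgraph{2}^{\ell}$ and outcome (i). The paper avoids this by proving two separate lemmas, each a proof by contradiction \emph{under the relevant freeness assumption}: \Cref{lem:subdivStar} shows that $\subdivStar{\ell_1,\dots,\ell_k}$-subgraph-freeness plus diameter $d$ is incompatible with a long induced path, via a counting argument over $k^d+1$ disjoint subpaths (each vertex reaches at most $k^{\delta}$ of them within distance $\delta$, contradicting the diameter bound); and \Cref{lem:H2} shows that $\Hgraph{2}^{\ell}$-subgraph-freeness plus diameter $4$ forces every pair of far-apart path vertices to be at distance exactly $4$, and then three consecutive such length-$4$ shortcuts are forced to overlap in a way that creates $\Hgraph{2}^{\ell}$. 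Your unified dichotomy cannot substitute for these two case-specific arguments without substantial repair.

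Your ``only if'' direction is also underspecified, and the ``contract one layer of the diameter-$5$ construction'' idea is not how the paper proceeds. The paper uses exactly two canonical families of diameter at most $4$ and unbounded treedepth --- the $1$-subdivisions of $K_{n,n}$ and the graphs $P_n^{1001}$ --- and then observes (\Cref{obs:canonicalUnboundedExamplesDiam4} together with the subsequent structural remark) that any $\forbiddenGraph$ that is a subgraph of both families for some $n$ must already be a subgraph of some $\subdivStar{\ell_1,\dots,\ell_k}$ or some $\Hgraph{2}^{\ell}$. The difference from diameter $5$ is simply that the third canonical family used there, the $2$-subdivisions of $K_n$, has diameter $5$ and so is unavailable at diameter $4$; that family is precisely what excludes $\Hgraph{2}^{\ell}$ at diameter $5$, which matches the intuition in your closing paragraph. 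If you replace your Ramsey dichotomy by the two lemmas above and make the two-family argument explicit, the proof goes through.
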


\noindent
Theorems~\ref{thm:classificationDiam5} and~\ref{thm:classificationDiam4} show that there is a considerable scope for improvement by considering graphs of diameter $2$ and $3$, in line with our research question. This is unlike the other cases in Table~\ref{t-table}, for which $d=2$ is always tight.

We were not able to give complete classifications for treedepth under the subgraph relation for $d=2$ and $d=3$. However, by a deeper exploration of the result of Galvin, Rival and Sands~\cite{GALVIN19827} and using a result on polarity graphs, due to Erd{\H{o}}s, R{\'e}nyii and S{\'o}s~\cite{ERTS66},
we were able to prove a variety of results summarized in the state-of-the-art theorem below.
We again refer to Fig.~\ref{fig:forbiddenSubgraphs} for an explanation of the various graphs that we define in these statements. 

  \begin{figure}[H]
        \centering
        \includegraphics[width=\textwidth]{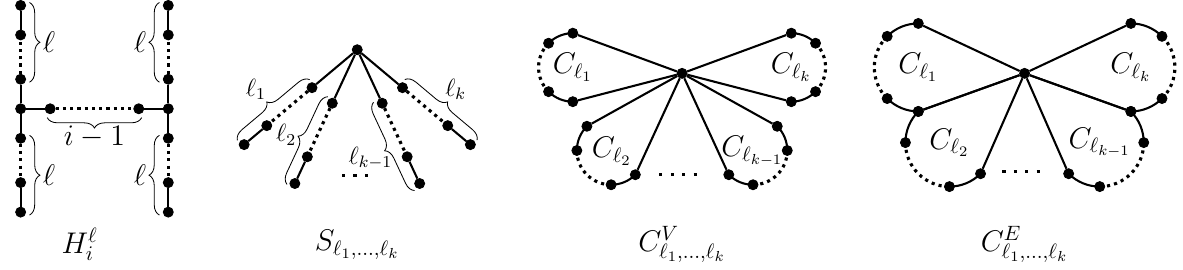}
        \caption{The subdivided ``H''-graph $H_i^\ell$, the subdivided star $S_{\ell_1,\ldots, \ell_k}$, the $V$-type graph $C^V_{\ell_1,\ldots,\ell_k}$ (set of cycles sharing one common vertex) and the $E$-type graph $C^V_{\ell_1,\ldots,\ell_k}$ (set of consecutive cycles sharing an edge). We write $C^V_{\ell_1,\ldots,\ell_k}= C_{k\times [\ell_1]}^V$ and $C^E_{\ell_1,\ldots,\ell_k}=C_{k\times [\ell_1]}^E$ if $\ell_1=\dots=\ell_k$, and $C^V_{\ell_1,\ldots,\ell_k}=C_{i\times [\ell_1],k-i \times [\ell_k]}^V$ if $\ell_1=\dots=\ell_i$, $\ell_{i+1}=\dots=\ell_k$. }
        \label{fig:forbiddenSubgraphs}
\vspace*{-2mm}
    \end{figure}

\begin{theorem}[Partial classification for diameters $2$ and $3$]\label{t-d23}
For a graph~$\forbiddenGraph$ and $d \in \{2,3\}$,
the class of $\forbiddenGraph$-subgraph-free graphs of diameter at most~$d$ has:\\[-5mm]
\begin{enumerate}[(i)]
\item bounded treedepth if:
\begin{enumerate}[1.]
\item $d=3$ and $\forbiddenGraph$ is an acyclic apex linear forest (\Cref{thm:diam3aalf});
\item $d=3$ and $\forbiddenGraph$ is~$C_8$ (\Cref{thm:C8d3});
\item $d=2$ and $\forbiddenGraph$ is a bipartite, $C_4$-subgraph-free apex linear forest that contains exactly one cycle (\Cref{thm:diam2-unicyclic});
\item $d=2$ and $\forbiddenGraph$ is $\sharedVertex{2\ell_1,2\ell_2}$ for some $\ell_1,\ell_2 \geq 3$ (\Cref{thm:diam2-CV24});
\item $d=2$ and $\forbiddenGraph$ is a subgraph of $\sharedVertex{k*[2\ell]}$ for some $\ell\geq 3$ and $k\geq 1$ (\Cref{thm:diam2-CVktimesl}) or
\item $d=2$ and $\forbiddenGraph$ is $\sharedEdge{2\ell_1,2\ell_2}$ for some $\ell_1,\ell_2\geq 3$ (\Cref{thm:CEtwoLength});\\[-9pt]
\end{enumerate}

\item unbounded treedepth if:
\begin{enumerate}[1.]
\item $d=2$ and $\forbiddenGraph$ is not bipartite (\Cref{obs:canonicalUnboundedExamples});
\item $d=2$ and $\forbiddenGraph$ 
is bipartite, but is not an apex linear forest (\Cref{obs:canonicalUnboundedExamples});
\item $d=2$ and $\forbiddenGraph$ is a supergraph of~$C_4$ (\Cref{thm:c4}~\cite{ERTS66});
\item$d=2$ and  $\forbiddenGraph$ is a supergraph of 
$\sharedVertex{12\times [6],12\times [8]}$
(\Cref{thm:CVunbounded});
\item $d=2$ and $\forbiddenGraph$ is a supergraph of $\sharedEdge{k*[2\ell]}$ for some $\ell \geq 3$ and $k = 2 (2\ell -3)$ (\Cref{thm:edgekCl});
\item $d=3$ and $\forbiddenGraph$ is a supergraph of~$C_6$ (\Cref{thm:c6}) or
\item $d=3$ and $\forbiddenGraph$ is a supergraph of either $\sharedVertex{4\ell_1,4\ell_2}$, $\sharedVertex{2\ell,2\ell}$,  $\sharedEdge{4\ell_1,4\ell_2}$ or $\sharedEdge{2\ell,2\ell}$ for $\ell_1,\ell_2 \geq 2$ and $\ell\geq 4$ (\Cref{lem:samecyc}).
\end{enumerate}
\end{enumerate}
\end{theorem}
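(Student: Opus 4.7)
The proof is naturally structured as a compendium of thirteen sub-arguments, one per item, but they share common strategies which I will describe. I separate the \emph{boundedness} direction (i), where we must show that $\forbiddenGraph$-subgraph-free graphs of diameter at most $d$ have bounded treedepth, from the \emph{unboundedness} direction (ii), where we must exhibit diameter-$d$ graphs of unbounded treedepth that avoid $\forbiddenGraph$ as a subgraph.

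For the boundedness items (i)1--6, the engine is the Galvin--Rival--Sands theorem: a graph of sufficiently large treedepth contains either $K_{t,t}$ as a subgraph or $P_k$ as an induced subgraph. In each case I would assume, for contradiction, an $\forbiddenGraph$-subgraph-free graph $G$ of diameter at most $d$ whose treedepth exceeds a suitable Ramsey-type bound, obtain one of the two substructures, and then locate a copy of $\forbiddenGraph$ in $G$. When $\forbiddenGraph$ is acyclic (item (i)1 and the tree part of (i)3), a sufficiently long induced $P_k$ or a sufficiently large $K_{t,t}$ already contains $\forbiddenGraph$ directly. When $\forbiddenGraph$ contains a cycle (items (i)2--6), the long-induced-path alternative still helps: in a diameter-$d$ graph with $d \leq 3$, the endpoints of a long induced path are at distance at most $d$ in $G$, which forces short shortcuts through outside vertices and creates cycles of many lengths. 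The more delicate case is the $K_{t,t}$ outcome, where the plan is to take the bipartition $(A,B)$, classify the short connecting paths of length at most $d-1$ between pairs in $A\cup B$ by type, and use nested pigeonhole to extract a large homogeneous sub-configuration from which a copy of $\sharedVertex{\cdot}$, $\sharedEdge{\cdot}$, $C_8$ or $\Hgraph{2}^\ell$ of the prescribed cycle lengths can be carved out.

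For the unboundedness items (ii)1--7, I would construct explicit families of graphs of bounded diameter and unbounded treedepth that avoid the prescribed $\forbiddenGraph$. The complete bipartite graph $K_{n,n}$ already settles (ii)1 (diameter $2$, unbounded treedepth, bipartite, hence avoiding any non-bipartite $\forbiddenGraph$) and, in a suitable form, (ii)2. For (ii)3, the polarity graphs of Erd\H{o}s--R\'enyi--S\'os give $C_4$-free graphs of diameter $2$ with $\Theta(n^{3/2})$ edges and hence unbounded treedepth; by monotonicity they avoid every supergraph of $C_4$. Items (ii)4--5 and (ii)7 I would handle by blow-ups or modifications of polarity graphs, or by related incidence/generalised polygon constructions, tuned to avoid the specific $\sharedVertex{\cdot}$ or $\sharedEdge{\cdot}$ pattern while preserving the diameter bound and producing large treedepth. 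For (ii)6, a diameter-$3$ graph with unbounded treedepth and no $C_6$ subgraph can be obtained by attaching carefully chosen pendant structures to a $C_6$-free host of large girth, ensuring the diameter stays at $3$ but no $C_6$ is ever created.

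The main obstacle lies in the boundedness direction when $\forbiddenGraph$ contains cycles: the Ramsey-plus-short-paths template must be deployed carefully enough to produce cycles of exactly the prescribed lengths, with exactly the prescribed sharing of a vertex or an edge, while ruling out any accidental shortcut through the ambient graph that would either create a forbidden $C_4$ in (i)3 or collapse a cycle to a shorter even length. This forces several layers of pigeonhole together with parity bookkeeping on the connecting paths, and is exactly the phenomenon that prevents a complete classification at $d\in\{2,3\}$ and leaves the statement partial.
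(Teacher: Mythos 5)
Your proposal correctly identifies the Galvin--Rival--Sands dichotomy as the engine for the boundedness items, but you have inverted where the difficulty lies. Every graph $\forbiddenGraph$ in items (i)1--6 is bipartite and a subgraph of $P_n\join K_1$ (otherwise \Cref{obs:canonicalUnboundedExamples} already gives unboundedness), so $\forbiddenGraph$ is a subgraph of $K_{t,t}$ for a suitable $t$ and the $K_{t,t}$ outcome of \Cref{cor:longInducedPath} is dismissed in one sentence in every proof in the paper; there is no ``bipartition plus nested pigeonhole on connecting paths'' step in that branch. All of the work sits in the long-induced-path outcome, and your claim that for acyclic $\forbiddenGraph$ ``a sufficiently long induced $P_k$\ldots already contains $\forbiddenGraph$ directly'' is false: an acyclic apex linear forest such as $\subdivStar{k*[\ell]}$ has a vertex of degree $k\geq 3$ and is not a subgraph of a path. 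The proof of \Cref{thm:diam3aalf} is the hardest argument in this direction precisely because one must use the diameter-$3$ condition to manufacture many disjoint subdivided stars along the path and then, via an iterated counting argument over sets $\Gamma_0\supseteq\Gamma_1\supseteq\cdots$, link one star's branches to $k$ other stars by disjoint paths. (You also mention $\Hgraph{2}^{\ell}$, which belongs to the diameter-$4$ classification, not to this theorem.)

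On the unboundedness side the gaps are more serious. For (ii)2 the witness must be $P_n\join K_1$, not $K_{n,n}$ ``in a suitable form'': a bipartite $\forbiddenGraph$ is contained in $K_{n,n}$ for large $n$, so $K_{n,n}$ cannot work there. For (ii)4, (ii)5 and (ii)7 the paper does not modify polarity graphs at all; it builds a long induced path $P_n$ (which forces treedepth $\geq\log n$ by \Cref{fact:pathLogtd}) together with a constant-size set of apex vertices joined to the path by a carefully chosen periodic binary pattern, and the whole proof consists of verifying that this pattern simultaneously enforces diameter $2$ (resp.\ $3$) and excludes the specific $V$-type or $E$-type configuration. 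A ``blow-up of a polarity graph'' is not shown to avoid, say, $\sharedVertex{12\times[6],12\times[8]}$, and nothing in your sketch suggests how it would. For (ii)6, attaching pendant structures to a large-girth host cannot yield diameter $3$ with unbounded treedepth; the paper instead uses polarity graphs of generalized quadrangles (girth-$8$ incidence geometries), whose existence is a nontrivial input from finite geometry. These constructions are the actual content of the unboundedness half of the theorem, and your proposal does not supply them.
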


\noindent
From Theorem~\ref{t-d23} we note a jump from boundedness for diameter~$d=2$ to unboundedness for diameter $d=3$ if $F=C_6$. Moreover, there is a difference for $d=2$ when we forbid a graph $F$ whose cycles share a unique vertex or an edge with a common end-vertex, i.e. a {\it $V$-type} graph $\sharedVertex{k*[2\ell]}$ or {\it $E$-type} graph $\sharedEdge{k*[2\ell]}$, respectively. The treedepth also becomes unbounded even for $d=2$ if the cycles of a $V$-type graph or $E$-type graph have only two different lengths.

We further emphasize that our results only hold for bounding treedepth and do not generalize to other width parameters such as pathwidth. We give specific examples exhibiting this in \Cref{s-comp}.

We show Observation~\ref{obs:canonicalUnboundedExamples} in the beginning of \Cref{sec:subgraph}.
We prove \Cref{thm:diam3aalf} in \Cref{sec:subgraphtree}.
\Cref{sec:unicyclic} contains the proof of \Cref{thm:c6}, \Cref{thm:diam2-unicyclic} and \Cref{thm:C8d3}. Finally, we show \Cref{thm:diam2-CV24}, \Cref{thm:CEtwoLength}, \Cref{thm:diam2-CVktimesl},  \Cref{thm:CVunbounded}, \Cref{thm:edgekCl} and \Cref{lem:samecyc} in \Cref{sec:subgraphmult}.

\section{Preliminaries}\label{sec:pre}

We only consider finite, simple, undirected graphs $G=(V(G),E(G))$.
Let $G$ be a graph. We denote the {\it neighbourhood} of a vertex $v\in V(G)$ by $N_G(v)=\{ u\in V(G)\; |\; uv\in E(G)\}$ (we may also just write $N(v)$). For a subset $S\subseteq V(G)$, we write $N_G(S)=\bigcup_{v \in S}N_G(v)$.
A graph~$H$ is a {\it subgraph} of $G$ if $H$ can be obtained from $G$ by a sequence of vertex deletions and edge deletions, whereas $H$ is an {\it induced subgraph} of $G$ if $H$ can be obtained from $G$ by a sequence of vertex deletions.
For a vertex set $S\subseteq V(G)$, we write $G[S]$ to denote the subgraph of $G$ {\it induced by} $S$, that is, the graph obtained from $G$ after deleting the vertices not in $S$. 
The {\it contraction} of an edge $e=uv$ in $G$ replaces $u$ and $v$ by a new vertex~$w$ that is adjacent to every vertex in $(N_G(u)\cup N_G(v)) \setminus \{u,v\}$ (without creating parallel edges). We let $G/e$ denote the graph obtained from $G$ after contracting $e$.
A graph $H$ is a {\it minor} of $G$ if $H$ can be obtained from $G$ by a sequence of edge deletions, edge contractions and vertex deletions.
For a set of graphs $\forbiddenClass$, we say that $G$ is
\emph{$\forbiddenClass$-subgraph free}, \emph{$\forbiddenClass$-free}, or \emph{$\forbiddenClass$-minor
  free} if $G$ does not contain any graph in $\forbiddenClass$ as a subgraph, induced
subgraph or minor, respectively.

We may refer to a path $P$ with vertices $u_0,\ldots,u_l$ and edges $u_{i-1}u_i$ for $1\leq i\leq l$ by the
sequence $(u_0,u_1,u_2,\dots,u_l)$. The \emph{length} of $P$ is its number of edges~$l$.
The \emph{distance} $d_G(u,v)$ between two vertices $u$ and $v$ of a graph $G$ is the length of a
shortest path from $u$ to $v$.
For two graphs $G$ and $H$, we let $G\join H$ denote
the graph obtained from the disjoint union of $G$ and $H$ after adding
all edges between the vertices in $V(G)$ and the vertices in $V(H)$.

\paragraph{Special forbidden graphs.} We let $K_n$, $C_n$ and $P_n$ denote the complete graph, cycle and path, respectively on $n$ vertices. We further let $\Hgraph{1}$ be the graph consisting of an edge $e$ with a pair of pendant vertices at each endpoint of $e$. We denote by $\Hgraph{i}$ the graph obtained from $\Hgraph{1}$ by subdividing $e$ exactly $i-1$ times (i.e. $e$ is replaced by a path of length $i$). Let $\Hgraph{i}^{\ell}$ be the graph obtained from $\Hgraph{i}$ by subdividing the edges to the pendant vertices $\ell-1$ times respectively.
We call the graphs $\Hgraph{i}$ and $\Hgraph{i}^{\ell}$,
$i,\ell\in \mathbb{N}$  \emph{$\Hgraph{}$-graphs}. For
$k,\ell_1,\dots, \ell_k\in \mathbb{N}$ we let
$\subdivStar{\ell_1,\dots,\ell_k}$ denote the star with $k$ pendant
vertices in which the edges to the pendant vertices have been
subdivided $\ell_1-1,\dots, \ell_k-1$ times, respectively. The graphs
$\subdivStar{\ell_1,\dots,\ell_3}$ are also called \emph{tripods}.
We let
$\sharedVertex{\ell_1,\dots,\ell_k}$ be the graph obtained by taking
cycles $C_{\ell_1},\dots,C_{\ell_k}$ and one vertex $v_i$ for each
cycle $C_{\ell_i}$ and then identifying $v_1,\dots, v_k$.
We further let $\sharedEdge{\ell_1,\dots,\ell_k}$ be the graph
obtained from $\sharedVertex{\ell_1,\dots,\ell_k}$ after identifying
$n_i$ and $n_{i+1}$ for every $i \in [k-1]$, where for every $i \in [k]$, $n_i$ is one of the
two neighbours of the
vertex $v$ common to all cycles in $C_{\ell_i}$.
If $\ell_1 = \dots = \ell_k$ we may also denote these graphs by
$\sharedVertex{k * [\ell_1]}$ and $\sharedEdge{k * [\ell_1]}$, respectively.

\paragraph{Treedepth}
An \emph{elimination forest} of a graph $G$ is a rooted
forest $T$ such that $V(G)=V(T)$ and for every $uv\in E(G)$
both $u$ and $v$ are on the same root-to-leaf path of $T$. The \emph{treedepth} $\td(G)$ is the minimum
height of an elimination forest of $G$. 
\begin{fact}[\cite{NM12}]\label{fact:pathLogtd}
  For a graph~$G$ with a longest path of length $\ell$,
  $\log(\ell) \leq  \td(G)\leq \ell$.
\end{fact}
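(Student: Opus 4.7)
The plan is to prove the two inequalities separately, using only standard structural properties of treedepth: subgraph-monotonicity for the lower bound and a depth-first-search construction for the upper bound.

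For the upper bound $\td(G) \leq \ell$, I would construct an explicit elimination forest. Run depth-first search on each connected component of $G$, yielding a rooted DFS forest $T$. A well-known property of DFS is that every non-tree edge joins a vertex to one of its ancestors in $T$; thus, for every edge $uv \in E(G)$, the vertices $u$ and $v$ lie on a common root-to-leaf path of $T$. This is exactly the definition of an elimination forest, so $\td(G) \leq \mathrm{height}(T)$. Moreover, any root-to-leaf branch of $T$ consists of vertices connected by tree edges, hence forms a path in $G$, whose length is at most $\ell$ by hypothesis. Therefore $\mathrm{height}(T) \leq \ell$, proving the upper bound.

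For the lower bound $\log(\ell) \leq \td(G)$, I would first invoke subgraph-monotonicity: if $H$ is a subgraph of $G$, then restricting an optimal elimination forest of $G$ to $V(H)$ yields an elimination forest of $H$ of no greater height, so $\td(H) \leq \td(G)$. Since $G$ contains a path on $\ell+1$ vertices as a subgraph, it suffices to bound the treedepth of $P_{\ell+1}$ from below. I would prove by induction on $n$ that $\td(P_n) \geq \log_2(n+1)$: letting $v$ be the root of an optimal elimination forest, $P_n - v$ is a disjoint union of two subpaths, at least one of which has $\lceil (n-1)/2 \rceil$ vertices, so $\td(P_n) \geq 1 + \td(P_{\lceil(n-1)/2\rceil}) \geq 1 + \log_2(\lceil(n-1)/2\rceil + 1) \geq \log_2(n+1)$.

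There is no real obstacle here; both bounds are textbook and follow from the defining property of elimination forests combined with the DFS property, respectively a binary-search style recursion on the path. The only minor care point is matching the off-by-one conventions between ``length'' (number of edges) and the number of vertices, which I would handle uniformly by phrasing both bounds in terms of the vertex count of the longest path.
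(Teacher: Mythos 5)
The paper gives no proof of this statement: it is quoted as a known fact with a citation to [NM12], so there is nothing internal to compare against. Your argument is the standard textbook proof (DFS forest for the upper bound, subgraph-monotonicity plus the balanced-splitting recursion on $P_n$ for the lower bound) and it is correct; the only caveat is the off-by-one you already flag, since with the height-measured-in-vertices convention used in the paper's definition the DFS argument yields $\td(G)\leq \ell+1$ rather than $\ell$, but the fact is clearly stated only up to such additive slack.
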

The following theorem, combined with Fact~\ref{fact:pathLogtd},
has a useful consequence.
\begin{theorem}[\cite{GALVIN19827}]\label{thm:inducPathComBi}
  For all $r,s,\ell\in \mathbb{N}$, there is a number $c(r,s,\ell)$
  such that every $K_{r,s}$-subgraph-free graph with a path of length $c(r,s,\ell)$
  has an induced $P_\ell$.
\end{theorem}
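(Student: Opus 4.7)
The plan is to prove the contrapositive by contradiction. Assume $G$ is $K_{r,s}$-subgraph-free, contains a path $P = (v_0, v_1, \ldots, v_n)$ of length $n$, yet has no induced $P_\ell$; I aim to bound $n$ in terms of $r,s,\ell$. Since passing to $G[V(P)]$ preserves all three properties, I may assume $V(G) = V(P)$, so $G$ is just the path $P$ together with some set of chords.

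First, I would observe that every $\ell$ consecutive vertices $v_i, v_{i+1}, \ldots, v_{i+\ell-1}$ must span at least one chord, for otherwise they would already induce a $P_\ell$. Sliding a window of width $\ell$ along $P$ and pigeonholing on the offset of the chord inside the window (at most $\binom{\ell}{2}$ choices), after passing to a long enough sub-path I may assume that a single fixed chord length $t \in \{2,\ldots,\ell-1\}$ recurs at a long sequence of positions $i_1 < i_2 < \cdots < i_m$, yielding chords $v_{i_k}v_{i_k + t}$.

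Next, I would apply Ramsey's theorem to this family of chords, colouring each ordered pair $(k,k')$ by the mutual configuration of the two corresponding chords (nested, crossing, or disjoint). For sufficiently large $m$ this produces an arbitrarily large monochromatic sub-family in which the chords form a very regular pattern along $P$. In each of the three cases, iterated pigeonhole on the adjacencies between the arc-interiors of one chord and the endpoints of the others should assemble two disjoint sets $A,B \subseteq V(G)$ of sizes $r$ and $s$ spanning a $K_{r,s}$ subgraph, contradicting $K_{r,s}$-freeness.

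The main obstacle is this final biclique extraction. One must guarantee simultaneously that the $r+s$ biclique vertices are all distinct, that the two sides come from \emph{different} chords (to avoid artefacts of a single small cycle), and that all $rs$ edges between the two sides genuinely appear in $G$ rather than being realised only via chord-induced shortcuts. Making this precise typically forces a nested Ramsey/pigeonhole argument whose depth depends on $r$, $s$ and $\ell$, which is why the resulting bound $c(r,s,\ell)$ is expected to be of tower type rather than polynomial.
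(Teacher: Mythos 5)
The paper does not prove this statement; it is imported verbatim from Galvin, Rival and Sands \cite{GALVIN19827}, so there is no in-paper argument to compare against. Judged on its own, your proposal has a genuine gap at the final step, and it is not merely a matter of technical effort.

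The only consequence of the hypothesis ``no induced $P_\ell$'' that your argument retains after step one is the window condition: every $\ell$ consecutive vertices of $P$ span a chord. Everything afterwards (the offset pigeonhole, the Ramsey step, the attempted biclique extraction) operates solely on the graph of chords. But the window condition together with $K_{r,s}$-subgraph-freeness does \emph{not} bound the length of the path, so no argument that uses only these two facts can reach a contradiction. Concretely, take $G=P_n^2$, the square of an $n$-vertex path (add the chord $v_iv_{i+2}$ for every $i$). With $\ell=3$, every window of three consecutive vertices spans a chord; $G$ has maximum degree $4$ and is therefore $K_{1,5}$-subgraph-free; and $n$ is arbitrary. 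Your proof run on this graph would dutifully extract arbitrarily many pairwise disjoint chords of length $2$ and then be asked to assemble a $K_{1,5}$, which does not exist. (Of course $P_n^2$ does contain an induced $P_3$, e.g.\ $v_0,v_1,v_3$ --- but your argument never revisits the ``no induced $P_\ell$'' hypothesis after deriving the window condition, so it cannot distinguish $P_n^2$ from a genuine counterexample.) A second, smaller issue is the claim that a large family of disjoint equal-length chords should yield a biclique ``via adjacencies between arc-interiors and endpoints'': a path decorated with disjoint triangles has arbitrarily many disjoint chords of length $2$ and contains no $K_{2,2}$ at all.

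What is missing is a mechanism for re-using the full hypothesis repeatedly: in the actual Galvin--Rival--Sands argument one does not stop after finding one layer of chords, but uses them (or long jumps of single vertices over many blocks of the path) to build a new, shorter induced path or a nested block structure, and then applies the ``no induced $P_\ell$'' assumption again at the next level; the $K_{r,s}$ emerges only after this recursion forces some small set of vertices to be adjacent to many pairwise far-apart blocks. Your outline correctly identifies the reduction to $G[V(P)]$ and the window condition, but the recursion --- not a deeper pigeonhole within a single chord family --- is the essential ingredient.
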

\begin{corollary}\label{cor:longInducedPath}
    For all $r,s,\ell\in \mathbb{N}$, there is a number $c(r,s,\ell)$
    such that every $K_{r,s}$-subgraph-free graph of treedepth $c(r,s,\ell)$ has an induced $P_\ell$.
\end{corollary}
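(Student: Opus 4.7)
The plan is to combine \Cref{thm:inducPathComBi} with \Cref{fact:pathLogtd} in a single step. The point is that large treedepth already forces a long (not necessarily induced) path, and in the $K_{r,s}$-subgraph-free regime the Galvin--Rival--Sands theorem upgrades such a path to a long induced one.

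More concretely, I would take $c(r,s,\ell)$ to be exactly the constant supplied by \Cref{thm:inducPathComBi} for the parameters $r,s,\ell$. Let $G$ be a $K_{r,s}$-subgraph-free graph with $\td(G)\geq c(r,s,\ell)$. Applying the upper bound of \Cref{fact:pathLogtd} to a longest path of $G$ of length $\ell'$ gives $c(r,s,\ell)\leq \td(G)\leq \ell'$, so $G$ contains a path of length at least $c(r,s,\ell)$. Since $G$ is $K_{r,s}$-subgraph-free, \Cref{thm:inducPathComBi} then yields an induced $P_\ell$ in $G$.

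There is essentially no obstacle: both ingredients are quoted just above the statement, and the corollary is a direct composition of them. The only thing to be mindful of is that the ``$\ell$'' appearing in \Cref{fact:pathLogtd} refers to the length of an arbitrary longest path (not an induced one), which is exactly what is fed into \Cref{thm:inducPathComBi}, so the two results plug together without any loss. No hidden computation or case analysis is needed, and the resulting bound $c(r,s,\ell)$ inherits whatever growth rate is hidden in the Galvin--Rival--Sands constant.
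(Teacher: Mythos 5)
Your argument is correct and is exactly the intended derivation: the paper itself gives no separate proof, merely noting that \Cref{thm:inducPathComBi} ``combined with Fact~\ref{fact:pathLogtd} has a useful consequence,'' which is precisely your composition of the treedepth-to-path bound with the Galvin--Rival--Sands upgrade to an induced path. Nothing is missing.
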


\paragraph{Treewidth and Pathwidth.} Treewidth is an important decompositional parameter for (sparse) graphs that informally measures how close a graph is to being a tree and that has many algorithmic applications, see,
e.g.~\cite{Marx10b} for a brief introduction into treewidth. Moreover,
pathwidth is strongly related to treewidth and informally measures the
similarity of a graph to a path. Here, we will
not formally define treewidth or pathwidth, since we will only use the following
well-known facts about treewidth and pathwidth.
\begin{fact}\label{fact:TW}
  Let $\CCC$ be any class of graphs that contains an arbitrary large
  clique or bi-clique as a minor. Then, $\CCC$ has unbounded treewidth
  and unbounded pathwidth. Moreover, if $\CCC$ contains all trees,
  then it has unbounded pathwidth.
\end{fact}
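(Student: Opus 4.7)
The plan is to handle the two assertions separately, relying on the minor-monotonicity of treewidth for the first, and on the classical pathwidth lower bound for complete binary trees for the second. Both are textbook facts, and the statement is presumably being recorded here so that it can be invoked later in the paper whenever a large clique, bi-clique, or tree substructure is exhibited.

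For the clique/bi-clique minor assertion, I would use the well-known values $\tw(K_n)=n-1$ and $\tw(K_{n,n})=n$, combined with the fact that treewidth does not increase under taking minors (no vertex deletion, edge deletion, or edge contraction can raise it). Therefore any graph that contains $K_n$ (respectively $K_{n,n}$) as a minor has treewidth at least $n-1$ (respectively $n$). If $\CCC$ contains arbitrarily large cliques or bi-cliques as minors, its treewidth is thus unbounded, and since $\tw(G)\le\pw(G)$ for every graph $G$, its pathwidth is unbounded as well.

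For the tree assertion I would invoke the classical result, essentially due to Scheffler (and also implicit in Ellis, Sudborough and Turner), that the complete binary tree $T_h$ of height $h$ satisfies $\pw(T_h)=\lceil h/2\rceil$, and hence has pathwidth tending to infinity in $h$. A self-contained sketch uses the equivalence of pathwidth with the vertex separation number: in any linear layout of $T_h$, at the moment the root crosses from one side to the other, both subtrees are partially laid out on each side, and within at least one subtree the same configuration recurs, yielding a recursive lower bound $\pw(T_h)\ge 1+\pw(T_{h-2})$. Since $\CCC$ contains every tree, it contains every $T_h$, so its pathwidth is unbounded.

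The main (and really only) obstacle is that the tree assertion genuinely cannot be reduced to the first part: trees admit only $K_2$ as a minor, so the clique/bi-clique argument gives no information about them. Hence a qualitatively different pathwidth lower bound, specific to trees, is required, and the Scheffler-type recursion above is what supplies it; everything else is an immediate consequence of minor-monotonicity of $\tw$ and of the inequality $\tw\le\pw$.
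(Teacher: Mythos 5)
Your proof is correct; the paper states this fact as well known and provides no proof of its own, and your argument (minor-monotonicity of treewidth with $\tw(K_n)=n-1$ and $\tw(K_{n,n})=n$, the inequality $\tw\le\pw$, and the Scheffler-type lower bound $\pw(T_h)\ge 1+\pw(T_{h-2})$ for complete binary trees) is exactly the standard justification the paper implicitly relies on. You also correctly identify that the tree assertion needs a separate argument, since trees have treewidth $1$ and the clique/bi-clique reasoning gives nothing there.
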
   

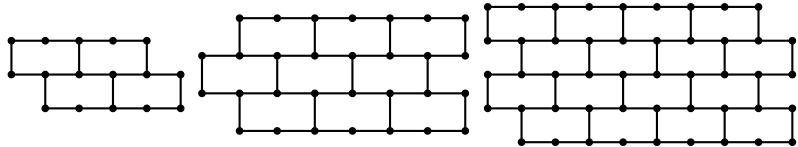
\begin{figure}
\begin{center}
\begin{minipage}{0.2\textwidth}
\centering
\begin{tikzpicture}[scale=0.45, every node/.style={scale=0.4}]
\GraphInit[vstyle=Simple]
\SetVertexSimple[MinSize=6pt]
\Vertex[x=1,y=0]{v10}
\Vertex[x=2,y=0]{v20}
\Vertex[x=3,y=0]{v30}
\Vertex[x=4,y=0]{v40}
\Vertex[x=5,y=0]{v50}

\Vertex[x=0,y=1]{v01}
\Vertex[x=1,y=1]{v11}
\Vertex[x=2,y=1]{v21}
\Vertex[x=3,y=1]{v31}
\Vertex[x=4,y=1]{v41}
\Vertex[x=5,y=1]{v51}

\Vertex[x=0,y=2]{v02}
\Vertex[x=1,y=2]{v12}
\Vertex[x=2,y=2]{v22}
\Vertex[x=3,y=2]{v32}
\Vertex[x=4,y=2]{v42}

\Edges(    v10,v20,v30,v40,v50)
\Edges(v01,v11,v21,v31,v41,v51)
\Edges(v02,v12,v22,v32,v42)

\Edge(v01)(v02)

\Edge(v10)(v11)

\Edge(v21)(v22)

\Edge(v30)(v31)

\Edge(v41)(v42)

\Edge(v50)(v51)

\end{tikzpicture}
\end{minipage}
\begin{minipage}{0.3\textwidth}
\centering
\begin{tikzpicture}[scale=0.5, every node/.style={scale=0.4}]
\GraphInit[vstyle=Simple]
\SetVertexSimple[MinSize=6pt]
\Vertex[x=1,y=0]{v10}
\Vertex[x=2,y=0]{v20}
\Vertex[x=3,y=0]{v30}
\Vertex[x=4,y=0]{v40}
\Vertex[x=5,y=0]{v50}
\Vertex[x=6,y=0]{v60}
\Vertex[x=7,y=0]{v70}

\Vertex[x=0,y=1]{v01}
\Vertex[x=1,y=1]{v11}
\Vertex[x=2,y=1]{v21}
\Vertex[x=3,y=1]{v31}
\Vertex[x=4,y=1]{v41}
\Vertex[x=5,y=1]{v51}
\Vertex[x=6,y=1]{v61}
\Vertex[x=7,y=1]{v71}

\Vertex[x=0,y=2]{v02}
\Vertex[x=1,y=2]{v12}
\Vertex[x=2,y=2]{v22}
\Vertex[x=3,y=2]{v32}
\Vertex[x=4,y=2]{v42}
\Vertex[x=5,y=2]{v52}
\Vertex[x=6,y=2]{v62}
\Vertex[x=7,y=2]{v72}

\Vertex[x=1,y=3]{v13}
\Vertex[x=2,y=3]{v23}
\Vertex[x=3,y=3]{v33}
\Vertex[x=4,y=3]{v43}
\Vertex[x=5,y=3]{v53}
\Vertex[x=6,y=3]{v63}
\Vertex[x=7,y=3]{v73}

\Edges(    v10,v20,v30,v40,v50,v60,v70)
\Edges(v01,v11,v21,v31,v41,v51,v61,v71)
\Edges(v02,v12,v22,v32,v42,v52,v62,v72)
\Edges(    v13,v23,v33,v43,v53,v63,v73)

\Edge(v01)(v02)

\Edge(v10)(v11)
\Edge(v12)(v13)

\Edge(v21)(v22)

\Edge(v30)(v31)
\Edge(v32)(v33)

\Edge(v41)(v42)

\Edge(v50)(v51)
\Edge(v52)(v53)

\Edge(v61)(v62)

\Edge(v70)(v71)
\Edge(v72)(v73)
\end{tikzpicture}
\end{minipage}
\begin{minipage}{0.35\textwidth}
\centering
\begin{tikzpicture}[scale=0.45, every node/.style={scale=0.4}]
\GraphInit[vstyle=Simple]
\SetVertexSimple[MinSize=6pt]
\Vertex[x=1,y=0]{v10}
\Vertex[x=2,y=0]{v20}
\Vertex[x=3,y=0]{v30}
\Vertex[x=4,y=0]{v40}
\Vertex[x=5,y=0]{v50}
\Vertex[x=6,y=0]{v60}
\Vertex[x=7,y=0]{v70}
\Vertex[x=8,y=0]{v80}
\Vertex[x=9,y=0]{v90}

\Vertex[x=0,y=1]{v01}
\Vertex[x=1,y=1]{v11}
\Vertex[x=2,y=1]{v21}
\Vertex[x=3,y=1]{v31}
\Vertex[x=4,y=1]{v41}
\Vertex[x=5,y=1]{v51}
\Vertex[x=6,y=1]{v61}
\Vertex[x=7,y=1]{v71}
\Vertex[x=8,y=1]{v81}
\Vertex[x=9,y=1]{v91}

\Vertex[x=0,y=2]{v02}
\Vertex[x=1,y=2]{v12}
\Vertex[x=2,y=2]{v22}
\Vertex[x=3,y=2]{v32}
\Vertex[x=4,y=2]{v42}
\Vertex[x=5,y=2]{v52}
\Vertex[x=6,y=2]{v62}
\Vertex[x=7,y=2]{v72}
\Vertex[x=8,y=2]{v82}
\Vertex[x=9,y=2]{v92}

\Vertex[x=0,y=3]{v03}
\Vertex[x=1,y=3]{v13}
\Vertex[x=2,y=3]{v23}
\Vertex[x=3,y=3]{v33}
\Vertex[x=4,y=3]{v43}
\Vertex[x=5,y=3]{v53}
\Vertex[x=6,y=3]{v63}
\Vertex[x=7,y=3]{v73}
\Vertex[x=8,y=3]{v83}
\Vertex[x=9,y=3]{v93}

\Vertex[x=0,y=4]{v04}
\Vertex[x=1,y=4]{v14}
\Vertex[x=2,y=4]{v24}
\Vertex[x=3,y=4]{v34}
\Vertex[x=4,y=4]{v44}
\Vertex[x=5,y=4]{v54}
\Vertex[x=6,y=4]{v64}
\Vertex[x=7,y=4]{v74}
\Vertex[x=8,y=4]{v84}

\Edges(    v10,v20,v30,v40,v50,v60,v70,v80,v90)
\Edges(v01,v11,v21,v31,v41,v51,v61,v71,v81,v91)
\Edges(v02,v12,v22,v32,v42,v52,v62,v72,v82,v92)
\Edges(v03,v13,v23,v33,v43,v53,v63,v73,v83,v93)
\Edges(v04,v14,v24,v34,v44,v54,v64,v74,v84)

\Edge(v01)(v02)
\Edge(v03)(v04)

\Edge(v10)(v11)
\Edge(v12)(v13)

\Edge(v21)(v22)
\Edge(v23)(v24)

\Edge(v30)(v31)
\Edge(v32)(v33)

\Edge(v41)(v42)
\Edge(v43)(v44)

\Edge(v50)(v51)
\Edge(v52)(v53)

\Edge(v61)(v62)
\Edge(v63)(v64)

\Edge(v70)(v71)
\Edge(v72)(v73)

\Edge(v81)(v82)
\Edge(v83)(v84)

\Edge(v90)(v91)
\Edge(v92)(v93)
\end{tikzpicture}
\end{minipage}
\caption{A wall of height $2$, $3$ and~$4$, respectively.}\label{fig:walls}
\end{center}
\end{figure}

\paragraph{Clique-Width.}  
We will not formally define clique-width, but refer to
e.g. the survey~\cite{DJP19} for more details. We will
need the following facts.
The first one
is well known and follows immediately from the definition of clique-width. Here, we denote the complement of a graph $G$ by 
$\overline{G}$.

\begin{fact}\label{fact:CW-COMP}
 For every graph $G$, it holds that
  $\cw(G)\leq 2\cw(\overline{G})$.
\end{fact}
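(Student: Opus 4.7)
The plan is to show that any $k$-expression for $\overline{G}$ (with $k=\cw(\overline{G})$) can be converted into a $(2k)$-expression for $G$, thereby establishing $\cw(G)\leq 2k$. The construction uses two disjoint copies of the label set, $A=\{1,\ldots,k\}$ and $B=\{1',\ldots,k'\}$, for a total of $2k$ labels, where the copy $B$ plays the role of an auxiliary register used only transiently during the simulation.

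The construction proceeds by structural induction on the given $\overline{G}$-expression $\tau$. Singletons translate directly, as do the $\rho_{i\to j}$ operations (applied, if necessary, to both copies of the label). The central case is the disjoint union: if $\tau_1\oplus\tau_2$ occurs in $\tau$, then no edges of $\overline{G}$ lie between $V(\tau_1)$ and $V(\tau_2)$, so in $G$ every pair of vertices across the two parts must be joined. I would realise this in the $G$-expression by first recursively obtaining expressions $\tau_1^*,\tau_2^*$ that use only labels in $A$, then renaming the labels of $\tau_2^*$ from $A$ to $B$ by a sequence of $\rho_{i\to i'}$ operations, forming the disjoint union, applying $\eta_{i,j'}$ for every $i\in A$ and every $j'\in B$ to insert the join edges, and finally renaming $B$ back to $A$ via $\rho_{i'\to i}$ for each $i$.

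The edge-addition operation $\eta_{i,j}$ appearing in $\tau$ creates edges in $\overline{G}$, which are non-edges of $G$, so the corresponding step in the $G$-expression should do nothing; correctness then relies on the invariant that at this point of the construction the $G$-expression has not yet introduced any edge between the current $i$-labelled and $j$-labelled vertex classes. This invariant is exactly what the duplicate label set $B$ is designed to maintain: whenever a disjoint union step in $\tau$ is translated into a join step in the $G$-expression, the $A$/$B$ split keeps the two sides of the union cleanly separated so that subsequent relabellings do not inadvertently merge label classes that later need to be kept apart.

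The main obstacle is formalising this invariant and verifying that it is preserved across each of the four cases of the induction. Once the invariant is set up, the verification is a routine case analysis; and the factor of two in the final bound is exactly the cost of having the auxiliary label copy $B$ available to keep the bookkeeping consistent.
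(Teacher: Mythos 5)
There is a genuine error in the central step. At a disjoint union node $\tau_1\oplus\tau_2$ of the expression for $\overline{G}$, you assert that ``no edges of $\overline{G}$ lie between $V(\tau_1)$ and $V(\tau_2)$'' and therefore join all cross pairs in $G$. This is false: the union operation itself adds no edges, but the $\eta_{i,j}$ operations applied \emph{above} that union node in $\tau$ routinely create edges of $\overline{G}$ between the two parts. Those cross pairs are then non-edges of $G$, yet your construction has already joined them. Concretely, take $\overline{G}=P_3$ with the $2$-expression $\eta_{1,2}\bigl((1(v)\oplus 2(u))\oplus 2(w)\bigr)$, so that $G=\overline{P_3}$ has the single edge $uw$. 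Your translation joins $v$ to $u$ at the first union and $v,u$ to $w$ at the second, producing $K_3$; the final $\eta_{1,2}$ is skipped, so the output is $K_3\neq G$. The invariant you appeal to (that when $\eta_{i,j}$ occurs in $\tau$, the partially built $G$-expression has no edges between the current $i$- and $j$-classes) fails in exactly this example, and the auxiliary label set $B$ cannot rescue it: you rename $B$ back to $A$ at the end of each union gadget, so the primed labels provide no separation that persists to the point where $\eta_{i,j}$ is processed.

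The underlying obstruction is that a purely local, syntax-directed translation cannot decide at a union node which cross pairs to join; that depends on which $\eta$ operations occur later. The standard repair is a look-ahead: for each union node and each pair of labels $(a,b)$, all cross pairs carrying labels $(a,b)$ at that node have the same eventual adjacency in $\overline{G}$ (once two vertices share a label they are treated identically by all subsequent operations), so one can define the set $S$ of label pairs that eventually become adjacent and, in the expression for $G$, join precisely the cross pairs whose label pair lies outside $S$ --- this is where your doubling trick with $\eta_{a,b'}$ is genuinely needed, including for $a=b$. Equivalently, one routes the argument through NLC-width: $\mathrm{nlcw}(G)=\mathrm{nlcw}(\overline{G})\leq \cw(\overline{G})$ and $\cw(G)\leq 2\,\mathrm{nlcw}(G)$. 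The paper itself does not prove the fact but cites it as well known; the factor $2$ does come from the label-doubling you describe, but only after the edge-addition decisions have been moved down to the union nodes via the look-ahead.
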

Finally, we need the well known notion of a wall, which is illustrated
in Figure~\ref{fig:walls} which we will not
formally define (see, e.g.~\cite{Ch15} for a
formal definition). It is well known that walls have unbounded
clique-width (see e.g.~\cite{DBLP:journals/dam/KaminskiLM09}).
A $k$-subdivided wall is the graph obtained from a wall by subdividing it each edge~$k$ times.
\begin{fact}[{\cite{LR06}}]\label{fact:walls}
For every $k \geq 0$, the class of $k$-subdivided walls has unbounded clique-width.
\end{fact}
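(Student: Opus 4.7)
The plan is to reduce to the $k=0$ case, namely that the class of (unsubdivided) walls already has unbounded clique-width, a well-known result that follows, for example, from the grid-based lower bounds of Kami\'nski, Lozin and Milani\v{c}. Once that base case is available, the task reduces to showing that, for any fixed $k$, performing a $k$-subdivision cannot collapse unbounded clique-width down to bounded clique-width \emph{provided the underlying graphs have bounded maximum degree}. Since every wall has maximum degree at most~$3$ and $k$ is fixed, the statement then follows.

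The reduction I would use is the following. Suppose for contradiction that there is a constant $c$ such that every $k$-subdivided wall admits a $c$-expression. Fix an $n \times n$ wall $W$ and its $k$-subdivision $W^{(k)}$, and let $\phi$ be a $c$-expression building $W^{(k)}$. I would process $\phi$ bottom-up while simultaneously maintaining a ``compressed'' parse tree that builds $W$ itself: each internal vertex of a subdivided edge is removed and replaced by auxiliary information attached to the two endpoints of that edge in $W$. Concretely, each vertex $v$ of $W$ is represented by a tuple recording, for each of its (at most three) incident half-edges, the current clique-width label of the most recently added subdivision vertex on that half-edge together with the number of subdivision vertices still to be attached. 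Because $\Delta(W)\le 3$ and each subdivision path has only $k+1$ edges, the number of possible such tuples is bounded by a function $g(c,k)$. Translating $\phi$ through this compression yields a $g(c,k)$-expression for $W$, contradicting the unboundedness of clique-width for walls.

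The main obstacle, and the step that requires the most care, is verifying that each of the standard clique-width operations in $\phi$ (introducing a vertex with a label, disjoint union, label renaming, and the join operation that adds all edges between two label classes) can be faithfully simulated on compressed tuples while only blowing up the label alphabet by a function of $c$ and $k$. Bounded degree is essential here, because it caps the number of simultaneously ``unfinished'' subdivision paths incident to any single vertex of $W$, keeping the tuple alphabet finite. The join operation is the most delicate, since it may simultaneously initiate many subdivision paths at once; however, the compressed representation handles this uniformly by updating, in parallel, the tuple component corresponding to the appropriate half-edge of every vertex in one of the two label classes. This book-keeping is exactly the content of the Lozin--Rautenbach argument in~\cite{LR06}, so the proposed proof essentially reconstructs their construction using bounded-degree-preserving subdivision.
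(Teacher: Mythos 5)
First, a point of comparison: the paper does not prove this statement at all --- Fact~\ref{fact:walls} is imported verbatim from~\cite{LR06} --- so there is no in-paper argument to measure yours against, and your proposal has to stand on its own as a proof.

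As written, it does not. The base case $k=0$ is fine, but the reduction from $k$-subdivided walls to walls is only sketched, and the sketch breaks exactly where you flag the difficulty. Your compression invariant stores, at each wall-vertex $v$ and each of its at most three half-edges, ``the label of the most recently added subdivision vertex on that half-edge''. This presumes that every subdivision path is built monotonically outward from its two endpoints, but a $c$-expression is under no such obligation: it may introduce and join the $k$ internal vertices of an edge $uv$ in any order, creating partial segments attached to neither $u$ nor $v$ (indeed $u$ and $v$ may not yet exist in the relevant subexpression). Such floating segments cannot be recorded in endpoint tuples, so the invariant is not maintainable. A second, independent problem arises when a join $\eta_{i,j}$ of $\phi$ completes one or more subdivision paths: in the compressed expression you can only join entire tuple-classes, so if two wall-vertices $u,u'$ carry the same tuple and so do $v,v'$, the simulated join adds all four edges $uv,uv',u'v,u'v'$ where only a matching was intended. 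Excluding this requires a structural analysis of which label-classes can ever be joined in a valid expression for the subdivided wall (for instance, that $K_{2,2}$-freeness forces every joined pair of classes to induce a complete bipartite subgraph of a degree-$3$ graph, hence to be near-singletons), none of which appears in your write-up. Finally, deferring ``exactly this book-keeping'' to~\cite{LR06} is circular --- that is the very reference you are being asked to replace --- and also mischaracterizes it: the standard argument does not perform expression surgery but goes through treewidth. A clean correct route is: a $k$-subdivided wall has maximum degree $3$ (hence no $K_{2,2}$ subgraph) and contains the wall as a topological minor, so the class has unbounded treewidth; since $K_{n,n}$-subgraph-free graphs satisfy $\tw(G)\leq 3(n-1)\cw(G)-1$ (Gurski and Wanke; see also~\cite{LR06}), bounded clique-width would force bounded treewidth, a contradiction.
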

Finally, we need the following facts, providing the relationships
between the width parameters defined thus far.
\begin{fact}[{\cite{CR05}}]\label{fact:decrel}
  For every graph~$G$, it holds that $\cw(G) \leq 3 \times 2^{\tw(G)-1}$ and $\tw(G) \leq \pw(G)\leq \td(G)$.
\end{fact}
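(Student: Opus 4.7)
The plan is to prove the three inequalities in turn, saving the non-trivial one for last.

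The inequality $\tw(G) \leq \pw(G)$ is immediate from the definitions: a path decomposition is, by definition, a tree decomposition whose underlying tree happens to be a path.

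For $\pw(G) \leq \td(G)$ I would proceed constructively. Let $T$ be an elimination forest of $G$ of minimum height $h = \td(G)$. For each leaf $\ell$ of $T$, define a bag $B_\ell$ consisting of $\ell$ together with all its proper ancestors in $T$; then $|B_\ell| \leq h$. Arrange the bags in a path in the left-to-right order of a DFS traversal of the leaves. The vertex- and edge-coverage axioms of a path decomposition follow because each vertex $u$ lies in $B_\ell$ for every leaf $\ell$ in the subtree of $T$ rooted at $u$, and any edge $uv$ of $G$ places $u$ and $v$ on a common root-to-leaf path of $T$. The contiguity axiom follows because the leaves in the subtree rooted at $u$ form a consecutive range in DFS order. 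Hence $\pw(G) \leq h - 1 \leq \td(G)$.

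For the main inequality $\cw(G) \leq 3 \cdot 2^{\tw(G)-1}$ I would take a nice tree decomposition of $G$ of optimal width $w = \tw(G)$ and build a clique-width expression bottom-up. The labels employed correspond, roughly, to subsets of the current bag, tracking for each vertex still present which future neighbours it must be adjacent to. At an introduce node a fresh label is assigned to the new vertex; at a forget node the labels for subsets containing the forgotten vertex are merged with those that do not via $\rho$-operations; at a join node the two sub-expressions are combined with $\union$ and the pending edges are added by $\eta$-operations.

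The main obstacle, and the substantive content of the Corneil--Rotics theorem~\cite{CR05}, is a delicate label-accounting argument showing that no more than $3 \cdot 2^{w-1}$ labels are ever simultaneously active. A naive implementation would use roughly $2^{w+1}$ labels; shaving the constant down to $3 \cdot 2^{w-1}$ requires observing which label-classes can be identified or reused across the decomposition (for instance, the label representing the full subset and the label representing the empty subset can be consolidated). Since reproducing this quantitative accounting is exactly the point of~\cite{CR05}, I would cite their construction for the precise constant rather than rederive it here.
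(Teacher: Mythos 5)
The paper does not prove this statement at all: it is stated as a Fact with a citation to~\cite{CR05}, so there is no in-paper argument to compare against. Your proposal is sound -- the direct arguments for $\tw(G)\leq\pw(G)$ and for $\pw(G)\leq\td(G)$ via the bags of leaf-to-root paths in an elimination forest are correct and standard, and deferring the quantitative bound $\cw(G)\leq 3\times 2^{\tw(G)-1}$ to the construction of~\cite{CR05} is exactly what the paper itself does.
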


\section{Induced Subgraph Relation}\label{sec:indsubgraph}
Let $\forbiddenGraph$ be a graph and let $\mathcal{C}$ be the class of
$\forbiddenGraph$-free graphs with diameter at most~$d$. In this
section, we provide dichotomies characterizing exactly when $\mathcal{C}$ has
bounded treedepth, pathwidth, treewidth, or clique-width.
Note that the theorem shows that $d=2$ is tight.

\begin{theorem}\label{thm:subgraphdec}
  Let $\forbiddenGraph$ be a graph.
  The class of $\forbiddenGraph$-free graphs of diameter at most~$d$, for $d\geq
  1$, has bounded treedepth/pathwidth/treewidth if and only if either:
  \begin{itemize}
  \item $d=1$ and $\forbiddenGraph$ is a clique or
  \item $d\geq 2$ and $\forbiddenGraph\in \{K_1,K_2\}$. 
  \end{itemize}
\end{theorem}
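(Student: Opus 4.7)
My plan is to prove both directions of the equivalence, using \Cref{fact:decrel} to reduce the statement to a treedepth upper bound in the ``if'' direction and to a treewidth lower bound in the ``only if'' direction: since $\tw(G)\leq \pw(G)\leq \td(G)$, bounded treedepth implies bounded pathwidth and treewidth, while unbounded treewidth implies unbounded pathwidth and treedepth. So it suffices in the ``if'' direction to bound treedepth by the sizes of the graphs in the class, and in the ``only if'' direction to exhibit $F$-free graphs of diameter at most~$d$ with arbitrarily large treewidth.

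For the ``if'' direction I handle the three listed subcases in turn, and in each the class is essentially trivial. If $d=1$ and $F=K_r$, then every diameter-$1$ graph is complete, and any complete $F$-free graph has strictly fewer than $r$ vertices, so treedepth at most $r-1$. If $d\geq 2$ and $F=K_1$ the class is empty. If $d\geq 2$ and $F=K_2$ any graph in the class is edgeless, but a nonempty edgeless graph on at least two vertices is disconnected and therefore has infinite diameter; hence the class contains only graphs on at most one vertex.

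For the ``only if'' direction I suppose $F$ satisfies neither of the listed conditions and split on whether $F$ is a clique. If $F$ is not a clique, then $F$ has two nonadjacent vertices, i.e.\ $2K_1$ is an induced subgraph of $F$. Consequently the complete graphs $K_n$, which contain no induced $2K_1$, are $F$-free; they have diameter at most $1\leq d$ and $\tw(K_n)=n-1$, giving unbounded treewidth. This simultaneously handles the full $d=1$ obstruction and, for $d\geq 2$, every $F\notin\{K_1,K_2\}$ that is not a clique (in particular the degenerate graph $F=2K_1$). The only remaining case is $d\geq 2$ with $F=K_r$ for some $r\geq 3$; here the complete bipartite graphs $K_{n,n}$ are bipartite and so triangle-free, hence $K_r$-free, have diameter $2\leq d$, and satisfy $\tw(K_{n,n})=n$, again giving unbounded treewidth.

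The proof is essentially bookkeeping and presents no real obstacle; the only subtle point is making the case split in the ``only if'' direction exhaustive, specifically ensuring that the edgeless two-vertex graph $2K_1$ (which is neither $K_1$ nor $K_2$) is caught by the $K_n$ construction under ``$F$ is not a clique'' rather than being missed between the two cases.
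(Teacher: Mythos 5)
Your proof is correct and follows essentially the same route as the paper: complete graphs $K_n$ witness unboundedness when $F$ is not a clique, complete bipartite graphs $K_{n,n}$ handle $F=K_r$ with $r\geq 3$ for $d\geq 2$, and the remaining cases are finite classes. The only difference is that you spell out the degenerate subcases (such as $F=2K_1$ and the edgeless graphs for $F=K_2$) a bit more explicitly, which the paper leaves implicit.
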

\begin{proof}
  Let $\forbiddenGraph$ be a graph such that the class $\forbiddenClass$ of
  $\forbiddenGraph$-free graphs of diameter at most~$d$ has bounded treedepth/pathwidth/treewidth. If $\forbiddenGraph$ is not a clique, then $K_n$ is
  $\forbiddenGraph$-free, and therefore $\forbiddenClass$ contains
  graphs with diameter $1$ and unbounded treedepth/pathwidth/treewidth. Note that this completes the proof for $d=1$, since there are finitely many connected 
  $F$-free graphs with diameter at most~$1$, if $F$ is a clique.
  If $\forbiddenGraph=K_i$ for $i\geq3$, then $K_{n,n}$ is
  $\forbiddenGraph$-free, and therefore $\forbiddenClass$ contains
  graphs of diameter $2$ with arbitrarily large treedepth/pathwidth/treewidth. Hence,
  $\forbiddenGraph \in \{K_1, K_2\}$. Since the class of connected $K_i$-free
  graphs is finite if $i \in \{1,2\}$, the
  theorem follows. \qed
\end{proof}
The following theorem now provides our dichotomy result for
clique-width. Note that $d=2$ is tight.

\begin{theorem}\label{thm:IS-CW}
Let $d\geq 2$. For a graph~$\forbiddenGraph$, the class of
  $\forbiddenGraph$-free graphs of diameter at most $d$
  has bounded clique-width if and only if $\forbiddenGraph$ is an
  induced subgraph of $P_4$.
\end{theorem}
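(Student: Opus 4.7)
\emph{Forward direction.} Suppose $F\subseteq_i P_4$. Then every induced $P_4$ in a graph contains every induced subgraph of $P_4$, so any graph with an induced $P_4$ also has an induced $F$; contrapositively, every $F$-free graph is $P_4$-free. By Courcelle and Olariu~\cite{CO00}, the class of $P_4$-free graphs (cographs) has clique-width at most~$2$, and hence so does the class of $F$-free graphs of diameter at most~$d$.

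\emph{Reverse direction, reduction to minimal obstructions.} Suppose $F\not\subseteq_i P_4$. The plan is to exhibit, for each such~$F$, a family of $F$-free graphs of diameter exactly~$2$ with unbounded clique-width; since $d\geq 2$, this suffices. A direct case analysis using the Ramsey number $R(3,3)=6$ shows that the minimal graphs $H$ failing $H\subseteq_i P_4$ are precisely $K_3$, $3K_1$, $C_4$, $2K_2$ and $C_5$: no $3$-vertex graph outside $\{K_3,3K_1\}$ is such an obstruction; on $4$ vertices only $C_4$ and $2K_2$ qualify (since $P_4$ itself is a subgraph of $P_4$); on $5$ vertices $C_5$ is the unique graph with $\alpha=\omega=2$; and on $\geq 6$ vertices Ramsey forces containment of $K_3$ or $3K_1$. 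Since $F'$-freeness implies $F$-freeness whenever $F'\subseteq_i F$, it is enough to produce, for each of these five minimal obstructions $F'$, a diameter-$2$ family of $F'$-free graphs of unbounded clique-width.

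\emph{Easy cases.} For $F'\in\{C_4,2K_2,C_5\}$, all three would be handled simultaneously by the class of split graphs, which is known to coincide with the class of $(2K_2,C_4,C_5)$-free graphs: starting from a bipartite graph $B$ of unbounded clique-width (e.g.\ a point-line incidence graph of a projective plane) in which every two vertices on one side share a common neighbour on the other side, and turning that side into a clique, gives diameter-$2$ split graphs whose clique-width inherits the unboundedness from~$B$. For $F'=3K_1$, the complements $\{\overline{W_n}\}$ of $k$-subdivided walls $W_n$ are $3K_1$-free (since $W_n$ is triangle-free), have diameter~$2$ once $W_n$ is large enough to contain no dominating edge, and have unbounded clique-width by \Cref{fact:CW-COMP} and \Cref{fact:walls}, via $\cw(\overline{W_n})\geq \cw(W_n)/2$.

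\emph{Main obstacle: the case $F'=K_3$.} Here we need a triangle-free diameter-$2$ family of unbounded clique-width. The bipartite route is closed off, because any bipartite diameter-$2$ graph is complete bipartite and hence of clique-width~$2$; the construction must therefore be non-bipartite triangle-free, a regime already constrained by the Moore bound. The plan is to realise such a family via an explicit combinatorial construction — for instance, Cayley graphs of cyclic groups on suitable sum-free generating sets, or incidence structures derived from generalised quadrangles — and to certify unboundedness of the clique-width by exhibiting arbitrarily large induced $k$-subdivided walls inside these graphs and invoking \Cref{fact:walls}. This step is expected to be the most technical part of the argument, as it requires simultaneously controlling triangle-freeness, diameter, and the presence of arbitrarily large subdivided-wall minors as induced subgraphs.
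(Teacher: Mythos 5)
Your overall architecture is sound up to a point: the reduction to the five minimal non-induced-subgraphs of $P_4$ (namely $K_3$, $3K_1$, $C_4$, $2K_2$, $C_5$) is correct, the forward direction via cographs matches the paper, and your $3K_1$ case via complements of walls coincides with how the paper handles forests containing $3P_1$ or $2P_2$. However, there is a genuine gap precisely where the theorem is hardest: the case $F'=K_3$. You correctly observe that a triangle-free, diameter-$2$ family of unbounded clique-width cannot be bipartite, but you then only name candidate constructions (Cayley graphs on sum-free sets, generalised quadrangles) without building one, without verifying triangle-freeness or diameter~$2$ for any of them, and without any argument that they contain large induced subdivided walls or otherwise have unbounded clique-width. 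This is not a routine verification to be deferred: many natural triangle-free diameter-$2$ families (e.g.\ blow-ups of $C_5$ by independent sets) have \emph{bounded} clique-width, so the existence of such a family is exactly the content that must be proved. The paper resolves it with an explicit graph $G_h$: a wall $W_h$ with proper $2$-colouring $\{R,B\}$, two adjacent apexes $r,b$ joined to $R$ and $B$ respectively, plus one vertex $x_i$ per wall vertex $w_i$ with a carefully designed neighbourhood ($w_i$, the $x_j$ for $w_j\in N_{W_h}(w_i)$, and the opposite colour class minus $N_{W_h}(w_i)$); triangle-freeness and diameter~$2$ are then checked case by case, and unbounded clique-width follows since $W_h$ survives as an induced subgraph.

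A secondary issue: in the split-graph case you assert that a point-line incidence graph of a projective plane has unbounded clique-width and that completing one side of the bipartition ``inherits'' the unboundedness. Neither claim is immediate — unbounded clique-width of these incidence graphs does not follow from unbounded treewidth or degree, and preservation under completing one side requires that subgraph complementation preserves boundedness of clique-width, which is not among the paper's stated tools (\Cref{fact:CW-COMP} only covers full complementation). The paper sidesteps both points: for any induced cycle $C_k$ with $k\ge 4$ in $F$ it takes a $k$-subdivided wall joined to $K_1$, which has diameter~$2$, is $C_k$-free, and has unbounded clique-width directly by \Cref{fact:walls}; together with the $C_3$ case and the forest case this covers everything without split graphs.
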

\begin{proof}
  Suppose that~$\forbiddenGraph$ contains an induced cycle~$C_k$.
  Let~$G$ be a $k$-subdivided wall and let $G'=G \join K_1$ and note that~$G'$ has diameter $2$.
  Observe that~$G$ is $C_k$-free and if $k>3$, then $G'$ is also $C_k$-free and therefore $F$-free.
  By Fact~\ref{fact:walls}, the class of $k$-subdivided walls has unbounded clique-width, so the clique-width of~$G'$ can be arbitrarily large.
  Therefore, if~$F$ contains a cycle, then we may assume that this cycle is a~$C_3$.  

    \begin{table}[]
    \centering
    \begin{tabular}{c|c|c}
      & $w_i$ & $x_i$\\
      \hline
      $w_n$ & adjacent & $w_i$\\
      $x_n$ & $w_n$ & adjacent\\
      $w_b$ & $x_i$ & adjacent\\
      $x_b$ & adjacent & $w_i$\\
      $w_r$ & $r$ & $\in N_{W_h}(w_r) \setminus N_{W_h}(w_i)$\\
      $x_r$ & $\in N_{W_h}(w_i) \setminus N_{W_h}(w_r)$ & $\in B \setminus (N_{W_h}(w_i) \cup N_{W_h}(w_j))$ \\
      $r$ & adjacent & $x_i$\\
      $b$ & $w_{n}$ & $w_b$\\
    \end{tabular}
    \caption{Table used in the proof of~\Cref{thm:IS-CW} to show that $G_h$
      has diameter $2$. Here, $w_i \in R$ and $w_n$, $w_b$, and $w_r$
      are arbitrary vertices in $N_{W_h}(w_i)$, $B
      \setminus N_{W_h}(w_i)$, and $R \setminus \{w_i\}$,
      respectively. Moreover, $x_i$, $x_n$, $x_b$, $x_r$ are the
      vertices corresponding to $w_i$, $w_n$, $w_b$, and $w_r$ in $X$,
      respectively. Each cell of the table either states that the two
      vertices corresponding to the row and column are adjacent or
      provides a common neighbour of both vertices.}
    \label{tab:commonNei}
  \end{table}

  We will now construct a class of $C_3$-free graphs of diameter~$2$ with
  unbounded clique-width. Namely, for every~$h$, we construct the
  graph~$G_h$ that is obtained from the wall~$W_h$ of height~$h$ as
  follows. Let
  $\{R,B\}$ be a proper $2$-colouring of $W_h$. We add two new vertices
  $r$ and $b$ together with the edges $(r, v)$ for every $v \in R$, $(b, v)$ for every $v \in B$ and the edge $(r, b)$. Moreover,
  for every $w_i \in V(W_h)$ we add a vertex $x_i$.
  If $w_i \in R$, then $x_i$ is
  adjacent to every $v \in B \setminus N_{W_h}(w_i)$, to $x_j$ for every $w_j \in
  N_{W_h}(w_i)$ and to $w_i$; see Figure~\ref{fig:clique-width-xi} for an illustration. We describe the
  neighbourhood of $x_i$ for $w_i \in B$ similarly interchanging the sets
  $R$ and $B$. This completes the construction of $G_h$.
  
  \begin{figure}
    \centering
    \includegraphics[page=1, width=0.7\linewidth]{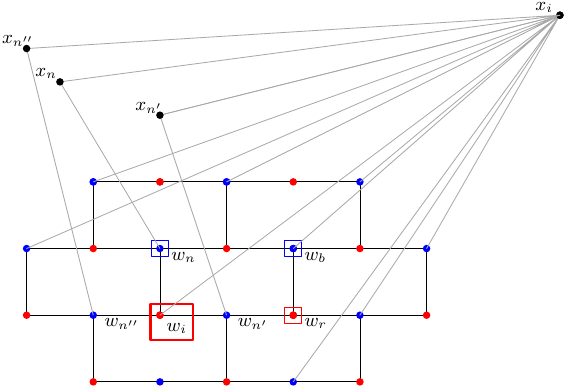}
    \caption{The construction used in~\Cref{thm:IS-CW} to provide a
      $C_3$-free class of graphs with diameter~$2$ and unbounded
      clique-width. The illustration shows a wall $W$ with a proper
      $2$-colouring using colours red and blue together with one additional
      vertex $x_i$ for every $w_i \in V(W)$.}
    \label{fig:clique-width-xi}
  \end{figure}

  We first show that $G_h$ does not contain an induced~$C_3$. This is because the graph
  $G_h[V(W_h)\cup \{r,b\}]$ is bipartite with bipartition $\{R\cup
  \{b\},B\cup \{r\}\}$ and therefore $C_3$-free. Therefore, any
  $C_3$ in $G_h$ would have to have at least one vertex from $X=\SB
  x_i\SM w_i \in V(W_h)\SE$. Since no vertex in $X$ is adjacent to both end-points
  edge of $W_h$, there can be no triangle containing exactly one vertex from~$X$.
  Similarly, since no two adjacent
  vertices in $X$ have a common neighbour in~$W_h$, there is no~$C_3$
  containing exactly two vertices from $X$. Finally, any~$C_3$
  containing three vertices from $X$ would give rise to a~$C_3$
  in $W_h$, which cannot exist since $W_h$ is bipartite.
  
  We show next that $G_h$ has diameter~$2$. To see this first consider a
  vertex $w_i$ and assume without loss of generality that $w_i \in
  R$.
  In relation to $w_i$, every vertex of $W_h$ is in one of the following three sets: $N_{W_h}(w_i)$, $B
  \setminus N_{W_h}(w_i)$ or $R \setminus \{w_i\}$. Let $w_n$, $w_b$ and
  $w_r$ be arbitrary vertices from each of these
  sets respectively. Table~\ref{tab:commonNei} now gives, for each pair of vertices,
  either a common neighbour or shows that they are adjacent, which
  shows that every pair of vertices in $G_h$ (apart from $r$ and $b$,
  which are adjacent) have distance at most~$2$.
  Since $G_h$ is $C_3$-free and therefore $\forbiddenGraph$-free, this completes the case when~$\forbiddenGraph$ contains an induced~$C_3$.
    
  It remains to consider the case when~$\forbiddenGraph$ does not contain an induced cycle, i.e. when it is a forest. If
  $\forbiddenGraph$ is not an induced subgraph of $P_4$, then it contains an induced $2P_2$ or $3P_1$.
  A wall is $\{C_3,C_4\}$-free, so the complement of a wall is $\{2P_2,3P_1\}$-free, and therefore $\forbiddenGraph$-free.
  Complements of walls have diameter~$2$ and they have arbitrarily large clique-width by Facts~\ref{fact:CW-COMP} and~\ref{fact:walls}.

  We may therefore assume that $\forbiddenGraph$ is an induced subgraph of~$P_4$.
  It is readily seen and well-known that $P_4$-free graphs have clique-width at most~$2$ even without the restriction on diameter. \qed
\end{proof}

\section{Minor Relation}\label{sec:minor}
Here, we
present our dichotomy for minor-closed classes of graphs of
bounded diameter. That is, we provide dichotomies characterizing
exactly for which graph classes~$\forbiddenClass$ , the
$\forbiddenClass$-minor-free class of graphs of diameter at most~$d$ has
bounded treedepth or bounded clique-width, respectively. W

Recall that a graph~$G$ is an apex planar graph if there is a vertex $v\in V(G)$
such that $G-v$ is planar. Analogously, $G$ is an \emph{apex forest}
or \emph{apex linear forest} if for some $v\in V(G)$, $G-v$ is a
forest or linear forest, respectively. Note that a forest is an cyclic
undirected graph, and a linear forest is a disjoint union of paths.

Recall that a class of graphs $\mathcal{C}$ has the diameter-treewidth property if there is a function
$f:\mathbb{N}\rightarrow \mathbb{N}$ such that every subgraph of a
graph in $\mathbb{C}$ with diameter at most~$d$ has treewidth at most~$f(d)$.  

\begin{theorem}[Theorem 1 in \cite{Ep00}]\label{thm:Eppstein}
  Let $\mathcal{C}$ be a minor-closed family of graphs. Then, $\mathcal{C}$ has the diameter-treewidth property if and only if $\mathcal{C}$ does not contain all apex planar graphs.
\end{theorem}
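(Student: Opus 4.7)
My plan is to handle the two directions separately. For the \emph{only if} direction, assume $\mathcal{C}$ contains every apex planar graph. To refute the diameter-treewidth property, I would exhibit a family in $\mathcal{C}$ of bounded diameter with unbounded treewidth: take the $k\times k$ grid $G_k$ and form $G_k\join K_1$ by adding a universal vertex. This graph is apex planar, so by assumption it lies in $\mathcal{C}$, and its diameter is $2$ thanks to the universal vertex. Its treewidth is at least $\tw(G_k)=k$, which grows with $k$, so no bounding function can exist.

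For the \emph{if} direction, assume some apex planar graph $A_0$ is not in $\mathcal{C}$. Let $P := A_0 - v$ for an apex vertex $v$ of $A_0$, so $P$ is planar, and note that $A_0$ is a subgraph, hence a minor, of $A := P \join K_1$. Because $\mathcal{C}$ is minor-closed, $A \notin \mathcal{C}$ too, so every $G \in \mathcal{C}$ is $A$-minor-free and I need to bound $\tw(G)$ for such $G$ of diameter at most $d$ by some $f(d)$. Fix $t$ large enough that $P$ is a minor of the $t\times t$ grid, which exists because $P$ is planar. By the Robertson--Seymour excluded grid theorem there is an $N(t)$ such that treewidth exceeding $N(t)$ forces a $t\times t$ grid minor. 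The remaining task is therefore to show that, in a graph of diameter at most $d$, a sufficiently large grid minor can always be ``capped'' by an apex to yield $A$ as a minor, since this would contradict $A$-minor-freeness and hence bound the treewidth.

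This capping is the main obstacle. The plan is to run a BFS from an arbitrary vertex $r$ to produce layers $L_0,\ldots,L_d$ and to observe that a $T\times T$ grid minor, for $T$ very large in terms of $d$, must by pigeonhole already contain a $t\times t$ sub-grid minor whose branch sets all lie in some $O(1)$-wide window $L_i\cup\cdots\cup L_{i+c}$ of consecutive BFS layers. One can then contract the union of all BFS layers strictly above this window into a single connected vertex $u^*$; short BFS-tree paths of length at most $d$ certify that $u^*$ is adjacent, after contraction, to every branch set of the sub-grid, so the whole configuration realises $P\join K_1 = A$ as a minor of $G$, the desired contradiction. The delicate quantitative point, and the content of Eppstein's original proof, is calibrating how large $T$ must be relative to $t$ and $d$ so that the window-restricted sub-grid is still large enough to contain $P$ as a minor; a modern alternative bypasses the pigeonhole step by invoking the layered-treewidth bound for $H$-minor-free graphs directly, which immediately yields an $f(d)$ linear in $d$.
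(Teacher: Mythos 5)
First, a framing point: the paper does not prove this statement at all --- it is imported verbatim as Theorem~1 of Eppstein~\cite{Ep00} and used as a black box --- so there is no in-paper proof to compare against; your proposal has to stand on its own. Your ``only if'' direction does: grid plus universal vertex is apex planar, has diameter~$2$ and unbounded treewidth, which is exactly the standard refutation (and the same construction the paper uses in its corollary). The setup of the ``if'' direction is also right: reduce to showing that a graph of diameter at most~$d$ containing a sufficiently large grid minor must contain $P \join K_1$ as a minor, where $P$ is planar.

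The gap is in the mechanism you propose for that last step. Since the graph has diameter at most~$d$, a BFS from $r$ produces only $d+1$ layers in total, so ``a $t\times t$ sub-grid whose branch sets all lie in an $O(1)$-wide window of consecutive layers'' is vacuously satisfied by the entire graph and leaves nothing ``strictly above'' the window to contract into an apex. If you instead insist the window start at some layer $L_i$ with $i\ge 1$, then for the contraction of $L_0\cup\dots\cup L_{i-1}$ to yield a vertex adjacent to every branch set of the sub-grid, every such branch set must be wholly contained in layers $\ge i$ and must meet $L_i$; a pigeonhole over the $d+1$ possible minimum depths gives only an unstructured subset of grid cells with this property, not a sub-grid, and the branch sets and connecting paths between the selected cells may still dip below $L_i$ and be swallowed by the contraction. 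The genuine content of Eppstein's proof --- which you explicitly defer to --- is precisely how to take the $\le t^2$ apex-to-branch-set paths of length $\le d$ and contract their union into a single apex without destroying the grid, e.g.\ by spreading $t^2$ target subgrids far apart inside a much larger grid and re-routing the connecting channels around the $O(t^2 d)$ vertices those paths occupy. As written, your proposal is a correct high-level outline in which the central lemma is asserted (or delegated to the layered-treewidth machinery as a black box) rather than proved, and the one concrete mechanism offered for it would fail.
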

We obtain the following as a corollary. Note that $d=2$ is tight.
\begin{corollary}
Let $d\geq 2$.
  For a class of graphs $\forbiddenClass$, 
  the class $\mathcal{C}$ of $\forbiddenClass$-minor-free graphs of
  diameter at most~$d$ has bounded treewidth/clique-width if and only if
  $\forbiddenClass$ contains some apex planar graph.
\end{corollary}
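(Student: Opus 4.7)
The plan is to deduce the corollary directly from \Cref{thm:Eppstein}, using one preliminary observation: every minor of an apex planar graph is again apex planar. I would verify this by tracking the apex vertex $v$ across minor operations on a graph $G$ with $G-v$ planar. Deleting or contracting an edge not incident to $v$ leaves $v$ apex over a minor of the planar graph $G-v$; deleting $v$ leaves a minor of $G-v$, which is planar; and contracting an edge $vu$ produces a single vertex that inherits the apex role (deleting it yields a minor of $G-v$). Iterating over any sequence of minor operations, apex planarity is preserved.

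For the ``if'' direction, suppose $\forbiddenClass$ contains an apex planar graph $H$. Then the class of $\forbiddenClass$-minor-free graphs is a minor-closed family that excludes $H$, so it does not contain all apex planar graphs. \Cref{thm:Eppstein} therefore yields the diameter-treewidth property, and the treewidth of graphs in $\mathcal{C}$ is bounded by some function $f(d)$. An application of \Cref{fact:decrel} lifts this to a clique-width bound of $3\cdot 2^{f(d)-1}$, giving both conclusions at once.

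For the ``only if'' direction, I would assume no graph of $\forbiddenClass$ is apex planar and consider the family $G_h := W_h \join K_1$, a wall of height $h$ together with a dominating vertex. Each $G_h$ has diameter $2$, so it lies in $\mathcal{C}$ for every $d \geq 2$; being apex planar, the key observation tells us that none of its minors can lie in $\forbiddenClass$, so $G_h$ is $\forbiddenClass$-minor-free. As already pointed out in the introduction, via Facts~\ref{fact:TW} and~\ref{fact:walls}, the family $\{G_h\}_{h\in \mathbb{N}}$ has both unbounded treewidth and unbounded clique-width, which gives the required unboundedness for $\mathcal{C}$ and simultaneously shows that $d=2$ is tight. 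The only step requiring any real care is the minor-preservation observation; once that is in place, the corollary is a direct appeal to \Cref{thm:Eppstein} and \Cref{fact:decrel}.
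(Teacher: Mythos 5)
Your proposal is correct and follows essentially the same route as the paper: the ``if'' direction is a direct application of \Cref{thm:Eppstein} plus \Cref{fact:decrel}, and the ``only if'' direction uses the wall with a universal vertex as an apex planar, diameter-$2$ witness of unbounded treewidth and clique-width. Your explicit verification that minors of apex planar graphs are apex planar is a welcome precision that the paper leaves implicit when asserting that $\mathcal{C}$ contains all apex planar graphs of diameter at most~$d$.
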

\begin{proof}
  First assume that $\forbiddenClass$ contains some apex planar graph. Then
  $\mathcal{C}$ has bounded treewidth (and therefore also bounded
  clique-width~\Cref{fact:decrel}) by \Cref{thm:Eppstein}. On the
  other hand, if $\forbiddenClass$ does not contain an apex planar graph,
  then $\mathcal{C}$ contains all apex planar graphs of diameter at most~$d$
  and therefore $\mathcal{C}$ contains any wall with one added universal
  vertex, which due to~\Cref{fact:walls} and~\Cref{fact:decrel}
  implies that $\mathcal{C}$ does not have bounded treewidth or
  bounded clique-width. \qed
\end{proof}

\begin{theorem}[{Theorem 1.5 in~\cite{DEJMW20}}]\label{thm:localpw}
  Let $\mathcal{C}$ be a minor-closed family of graphs. Then
  $\mathcal{C}$ has bounded local pathwidth if and only if
  $\mathcal{C}$ does not contain all apex forests.
\end{theorem}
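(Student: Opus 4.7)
My plan is to reduce the statement to Eppstein's theorem (\Cref{thm:Eppstein}) together with the classical forest-minor characterisation of bounded pathwidth. For the \emph{forward} (necessity) direction, consider $G_h = T_h \join K_1$, where $T_h$ is the complete binary tree of height~$h$. Since $G_h - v$ is the forest $T_h$, the graph $G_h$ is an apex forest; it has diameter~$2$ because the apex is universal; and $\pw(G_h) \geq \pw(T_h) = \Theta(h)$, because complete binary trees have unbounded pathwidth. Hence any minor-closed class containing all apex forests contains graphs of diameter $2$ of arbitrarily large pathwidth, so it cannot have bounded local pathwidth.

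For the \emph{backward} direction, assume $\mathcal{C}$ is minor-closed and excludes some apex forest $F = F_0 + v_0$ as a minor. Let $G$ be a subgraph of some member of $\mathcal{C}$ with diameter at most~$d$; then $G$ is still $F$-minor-free. Because every apex forest is apex planar, \Cref{thm:Eppstein} yields $\tw(G)\leq g(d)$ for some function~$g$. It remains to upgrade treewidth to pathwidth, and for this I would invoke the classical theorem of Bienstock--Robertson--Seymour--Thomas: a graph class has bounded pathwidth if and only if it excludes \emph{some} forest as a minor. It therefore suffices to prove the following structural claim: if a graph $G$ has diameter at most~$d$ and contains a sufficiently large forest $H = H(F,d)$ as a minor, then $G$ contains the apex forest~$F$ itself as a minor. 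Given such an $H$, fix a vertex $w$ of~$G$; every branch set of the model is within distance $d$ of~$w$. A Ramsey-type pigeonholing on the short $w$-to-branch-set paths, combined with the bounded treewidth of $G$ (which caps how many pairwise-disjoint paths can pass through any small separator), lets us select a sub-family of branch sets that can be joined to a single apex candidate $v$ by internally disjoint paths of length at most~$d$. Contracting these paths together with the branch sets realises $F = F_0 + v_0$ as a minor of~$G$, contradicting $F$-minor-freeness.

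The main obstacle is precisely this last extraction step: we need internally disjoint short paths from one common vertex to many branch sets while keeping the branch sets themselves disjoint, and the quantitative bound on $|V(H)|$ as a function of $d$ and $F$ that guarantees this is where the real work of~\cite{DEJMW20} lies. The diameter bound controls the length of the would-be apex edges, and the treewidth bound controls the local congestion around the apex candidate; removing either one breaks the argument, which is consistent with the essential role these bounds play in Eppstein's theorem and the Bienstock--Robertson--Seymour--Thomas characterisation.
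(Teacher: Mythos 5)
First, a point of comparison: the paper does not prove this statement at all --- Theorem~\ref{thm:localpw} is imported verbatim as Theorem~1.5 of~\cite{DEJMW20}, so there is no in-paper proof to measure yours against. Judged on its own terms, your forward direction is complete and correct: $T_h \join K_1$ is an apex forest of diameter~$2$ whose pathwidth grows with~$h$, so a minor-closed class containing all apex forests cannot have bounded local pathwidth. (This is exactly the construction the paper uses in the corollary it derives from Theorem~\ref{thm:localpw}.)

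The backward direction, however, has a genuine gap. After invoking Eppstein's theorem and Bienstock--Robertson--Seymour--Thomas, you reduce everything to the claim that a graph of diameter at most~$d$ containing every sufficiently large forest as a minor must contain the fixed apex forest~$F$ as a minor. By the BRST characterisation, ``contains every forest on $n$ vertices as a minor for large $n$'' is equivalent to ``has large pathwidth'', so this claim is not a stepping stone towards the theorem --- it \emph{is} the backward direction of the theorem, restated in contrapositive form. The paragraph you offer in its support (Ramsey-type pigeonholing on the short $w$-to-branch-set paths, combined with bounded treewidth ``capping how many pairwise-disjoint paths can pass through any small separator'') is not an argument: it does not identify the separator, does not explain how bounded treewidth produces a single apex candidate adjacent (after contraction) to many branch sets, and does not address the main difficulty, namely that the short paths from the apex candidate to the various branch sets may pass through \emph{other} branch sets of the model, destroying both the disjointness of the paths and the integrity of the forest minor. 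You concede this yourself when you write that the quantitative bound ``is where the real work of~\cite{DEJMW20} lies''. As written, the proposal is a reduction of the cited theorem to itself; note also that the treewidth bound you import from \Cref{thm:Eppstein} is never used in an essential way. For a template of how such an extraction actually goes, see the paper's own proof of \Cref{thm:td-minor}, which carries out the analogous but much easier step for paths: there the linear order on $P$ makes the ``every $d$ consecutive vertices of $P$ see $R_v$'' pigeonhole work, and it is precisely the absence of that linear structure in a tree model that makes the pathwidth case genuinely harder.
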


We obtain the following corollary, where we note that $d=2$ is tight.
\begin{corollary}
Let $d\geq 2$.
  For  class of graphs $\forbiddenClass$, 
  the class $\mathcal{C}$ of $\forbiddenClass$-minor-free graphs of
  diameter at most~$d$ has bounded pathwidth if and only if
  $\forbiddenClass$ contains some apex forest.
\end{corollary}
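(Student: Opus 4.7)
The plan is to mirror the proof of the immediately preceding corollary on treewidth/clique-width, substituting \Cref{thm:localpw} for \Cref{thm:Eppstein} and supplying a family of apex forests of diameter~$2$ whose pathwidth is unbounded. Throughout, the key observation is that the class $\mathcal{C}$ of $\forbiddenClass$-minor-free graphs is minor-closed, by transitivity of the minor relation.

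For the ($\Leftarrow$) direction, suppose $\forbiddenClass$ contains some apex forest $F_0$. Then $F_0\notin \mathcal{C}$, so the minor-closed class $\mathcal{C}$ does not contain all apex forests. By \Cref{thm:localpw}, $\mathcal{C}$ has bounded local pathwidth, and hence there is a function $g:\mathbb{N}\to \mathbb{N}$ such that every $G\in \mathcal{C}$ of diameter at most~$d$ satisfies $\pw(G)\leq g(d)$. Since $d$ is fixed, this is a constant bound.

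For the ($\Rightarrow$) direction, argue by contrapositive: suppose $\forbiddenClass$ contains no apex forest. Then every apex forest is $\forbiddenClass$-minor-free and therefore lies in $\mathcal{C}$. Consider the family $\{\,T\join K_1 \;:\; T \text{ is a tree}\,\}$. Each such graph becomes a tree (hence a forest) after deleting the apex vertex, so it is an apex forest; and because the apex is adjacent to every vertex of $T$, its diameter is at most~$2\leq d$. Thus all these graphs belong to $\mathcal{C}$. Since pathwidth is monotone under taking subgraphs and the class of trees has unbounded pathwidth by \Cref{fact:TW}, we obtain $\pw(T\join K_1)\geq \pw(T)$, which is unbounded over all trees $T$. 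Hence $\mathcal{C}$ has unbounded pathwidth, completing the equivalence.

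The argument is essentially immediate once \Cref{thm:localpw} is in hand; the only point requiring any attention is exhibiting a concrete family of apex forests of bounded diameter with unbounded pathwidth, which the join-with-$K_1$ construction (over trees of growing pathwidth) supplies. No substantial obstacle is anticipated.
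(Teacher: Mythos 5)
Your proposal is correct and follows essentially the same route as the paper: the forward direction invokes Theorem~\ref{thm:localpw} exactly as the paper does, and the reverse direction uses the same family $T\join K_1$ over all trees $T$ together with Fact~\ref{fact:TW}. The only difference is that you spell out the (routine) details slightly more explicitly than the paper's two-line proof.
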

\begin{proof}
  First assume that $\forbiddenClass$ contains some apex forest. Then
  $\mathcal{C}$ has bounded pathwidth by \Cref{thm:localpw}. On the
  other hand, if $\forbiddenClass$ does not contain an apex forest,
  then $\mathcal{C}$ contains all apex forests of diameter at most~$d$
  and therefore $\mathcal{C}$ contains $T\join K_1$, where $T$ is any tree, which has diameter at most~$2$ and which due to~\Cref{fact:TW} implies that $\mathcal{C}$ does not have bounded pathwidth. \qed
\end{proof}

An analogous result holds for treedepth as well, where again $d=2$ is tight.

\begin{theorem}
\label{thm:td-minor}
Let $d\geq 2$.
For a class of graphs $\forbiddenClass$, the
   class $\mathcal{C}$ of $\forbiddenClass$-minor-free graphs of diameter at most~$d$ has bounded treedepth if and only if $\forbiddenClass$ contains an apex linear forest.
\end{theorem}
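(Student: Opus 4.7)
The plan is to prove both directions separately, mirroring the structure of the treewidth and pathwidth corollaries above.

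For the ``only if'' direction, suppose $\forbiddenClass$ contains no apex linear forest. A short case check shows that the class of apex linear forests is closed under taking minors: vertex or edge deletions obviously preserve the structure; contracting an edge inside the linear-forest part shortens one of its paths (still a linear forest); and contracting an edge incident to the apex $v$ produces a new vertex whose neighbours still lie in the linear forest $L$ minus one vertex (again a linear forest). Hence no $F \in \forbiddenClass$ is a minor of any apex linear forest, and so for every $n$ the apex linear forest $P_n \join K_1$ lies in $\mathcal{C}$ (its diameter is $2 \leq d$). Since $P_n \join K_1$ contains $P_n$ as a subgraph, Fact~\ref{fact:pathLogtd} gives $\td(P_n \join K_1) \geq \log(n-1)$, so $\mathcal{C}$ has unbounded treedepth.

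For the ``if'' direction, fix an apex linear forest $F = v + L \in \forbiddenClass$ with $|V(L)| = n-1$. Any linear forest on $n-1$ vertices is a spanning subgraph of $P_{n-1}$, so $F$ is a subgraph (and hence a minor) of the fan $F^* := P_{n-1} \join K_1$. Thus $\mathcal{C}$ is contained in the class of $F^*$-minor-free graphs of diameter at most $d$. Since $F^* - v = P_{n-1}$ is a forest, $F^*$ is an apex forest, so Theorem~\ref{thm:localpw} yields a uniform bound $\pw(G) \leq p(F^*,d)$ for every $G \in \mathcal{C}$. By Fact~\ref{fact:pathLogtd}, it then suffices to bound the length of a longest path in any $G \in \mathcal{C}$.

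The key structural step is to establish: for every $n, d$ there is $m = m(n,d)$ such that any graph of diameter at most $d$ containing a path of length $m$ as a subgraph contains $F^*$ as a minor. I would prove this via BFS. Let $P = (x_0, \ldots, x_m)$ be a long path in~$G$ and run a BFS from an arbitrary vertex $u$, obtaining layers $L_0, \ldots, L_d$. Consecutive vertices of $P$ sit in layers differing by at most one, so by iterating pigeonhole on the layer assignment and on common ancestors inside the BFS tree we can extract indices $i_1 < \cdots < i_{n-1}$ whose vertices all lie in a common layer, share a common BFS-ancestor $w$, and are spaced far enough apart on $P$ that the segments of $P$ between consecutive $x_{i_s}$'s are pairwise disjoint from the BFS paths from $w$ to each $x_{i_s}$. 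These $n-1$ path segments form the path branches of a fan minor, while the union of the BFS paths together with $w$ forms the apex branch; after pruning to ensure vertex-disjointness, we obtain $F^*$ as a minor of $G$, contradicting $F^*$-minor-freeness.

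The main obstacle is precisely this construction of pairwise vertex-disjoint apex and path branches from a long path in a small-diameter graph: the spacing of the $x_{i_s}$ (at least $2d$ say) and the trimming of the BFS paths must be balanced carefully so that no apex-routing sub-path collides with an unintended path segment. An alternative shortcut, if a local-treedepth analogue of Theorem~\ref{thm:localpw} is available in the literature, would be to invoke such a result directly in place of the BFS argument.
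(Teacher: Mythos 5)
Your ``only if'' direction is correct and matches the paper: apex linear forests are minor-closed, so $P_n\join K_1$ lies in $\mathcal{C}$ and Fact~\ref{fact:pathLogtd} gives unbounded treedepth. In the ``if'' direction, your reduction is also sound in outline: replacing $F$ by the fan $F^*=P_{n-1}\join K_1$ is legitimate (since $F$ is a minor of $F^*$, every $F$-minor-free graph is $F^*$-minor-free), and by the upper bound in Fact~\ref{fact:pathLogtd} it indeed suffices to bound the length of a longest path in $\mathcal{C}$ (which makes your detour through Theorem~\ref{thm:localpw} and bounded pathwidth superfluous). Everything therefore hinges on your ``key structural step'': a graph of diameter at most~$d$ with a sufficiently long path must contain $F^*$ as a minor.

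That step is where the genuine gap lies, and your sketch does not close it. The apex branch you build -- the union of BFS-tree paths from a common ancestor $w$ to the chosen vertices $x_{i_1},\dots,x_{i_{n-1}}$ -- may pass through vertices of $P$ lying inside the very segments you want to use as path branches, so the branch sets need not be disjoint. Your proposed remedy (spacing the $x_{i_s}$ at least $2d$ apart and ``trimming'') does not obviously work: each chosen vertex contributes up to $d-1$ internal BFS vertices that can collide with $P$, but only one new segment, so a naive pigeonhole over segments fails for every $d\geq 2$, and discarding dirty segments breaks the adjacency between consecutive path branches. You flag this as ``the main obstacle'' but do not resolve it, and it is the heart of the proof. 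The paper sidesteps the disjointness problem by a different device: it first applies Corollary~\ref{cor:longInducedPath} (so the long path $P$ is \emph{induced}, at the cost of separately handling the $K_{n,n}$-subgraph case, which trivially yields the fan minor), then takes the apex branch to be the connected component $R_v$ of an off-path vertex $v$ in $G-V(P)$. This set is disjoint from $P$ by construction, and the diameter bound is used to argue that every window of $d$ consecutive vertices of $P$ contains a neighbour of $R_v$, after which contracting $R_v$ and the path segments yields $P_{|V(F)|}\join K_1$ as a minor. To complete your proof you would need either to import this component-based construction (noting that inducedness of $P$ is what prevents chords from short-circuiting the distance argument) or to supply a genuinely new argument guaranteeing an apex branch disjoint from the chosen path segments; as written, the BFS route does not deliver one.
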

\begin{proof}
  We start by showing the reverse direction of the statement.
  Let $F$ be a linear apex forest contained in $\forbiddenClass$.
  We claim that for every $d$, there is a constant $c(d,|V(F)|)$ such that every graph
  with diameter at most~$d$ and treedepth at least $c(d,|V(F)|)$
  contains $P_{|V(F)|} \join K_1$ as a minor. Since $P_{|V(F)|}
  \join K_1$ contains every apex linear forest of size at most
  $|V(F)|+1$ as a minor (and therefore also $F$), this shows the reverse direction of the
  statement given in the theorem.
  
  By~\Cref{cor:longInducedPath}, there is a constant $c(|V(F)|,|V(F)|,d|V(F)|)$
  such that every graph with treedepth at least $c(|V(F)|,|V(F)|,d|V(F)|)$ either
  contains $K_{|V(F)|,|V(F)|}$ or $P_{d|V(F)|}$ as an induced subgraph. Clearly,
  $K_{|V(F)|,|V(F)|}$ contains $P_{|V(F)|}\join K_1$ as a minor, as required.
  So suppose that $G$ is a graph with diameter at most~$d$ containing
  $P=P_{d|V(F)|}$ as an induced subgraph.
  
  Now consider any vertex $v \in V(G)\setminus V(P)$, which must exists
  because $G$ has diameter at most~$d$, and let $R_v$ be the
  set of all vertices reachable from $v$ in $G\setminus V(P)$. Then,
  $R_v$ is connected and, moreover, if $P$ contains a subpath~$P'$ of
  $d$ vertices that have no neighbour in $R_v$, then $v$ has
  distance at least $d+1$ to some vertex on $P'$, contradicting our
  assumption that $G$ has diameter at most~$d$. But then, $P$ can have
  length at most $d|V(P)|$, since otherwise 
  we can obtain $K_1 \join P_{|V(P)|}$ as a
  minor of $G$ by first contracting all edges in $R_v$ and then
  contracting edges on $P$ adjacent to vertices not adjacent to $R_v$.
  
  If instead $\forbiddenClass$ does not contains any apex linear forests, then it
  contains $P_n\join K_1$ for every $n$, which has diameter at most~$2$ and unbounded treedepth because of~\Cref{fact:pathLogtd}. \qed
\end{proof}

\section{Subgraph Relation}\label{sec:subgraph}

Let $\forbiddenGraph$ be a graph and let $\mathcal{C}$ be the class of
$\forbiddenGraph$-subgraph free graphs with diameter at most~$d$. In this
section, we provide a complete characterization characterizing exactly when
$\mathcal{C}$ has bounded treedepth for $d\geq 4$ and we provide a
partial characterization for the case that $d \in \{2,3\}$.

\noindent
By Fact~\ref{fact:pathLogtd}, complete
bipartite graphs and graphs $P_n\join K_1$ have unbounded treedepth. As both classes consist of graphs of diameter at most~$2$, we obtain:
\begin{observation}\label{obs:canonicalUnboundedExamples}
Let $d\geq 2$.
  For a graph $\forbiddenGraph$, the class of
  $\forbiddenGraph$-subgraph-free graphs with diameter at most~$d$ has unbounded treedepth if 
  there is no integer $n$ such that
  $\forbiddenGraph$ is a subgraph of both $K_{n,n}$ and of $P_n\join
  K_1$.
\end{observation}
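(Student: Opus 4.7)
The plan is to unpack the two families $(K_{n,n})_n$ and $(P_n\join K_1)_n$ already highlighted immediately before the observation. Both families have diameter at most~$2$ (in $K_{n,n}$ any two vertices share a neighbour on the other side; in $P_n\join K_1$ the apex dominates), and both have unbounded treedepth: $K_{n,n}$ contains a Hamiltonian path of length $2n-1$ and $P_n\join K_1$ contains $P_n$, so Fact~\ref{fact:pathLogtd} gives treedepth at least $\log(2n-1)$ and $\log(n-1)$, respectively. Thus the proof reduces to showing that the hypothesis forces at least one of these two families to be entirely $\forbiddenGraph$-subgraph-free.

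First I would rewrite the hypothesis in a more usable form by exploiting the monotonicity of both families under subgraph containment: if $\forbiddenGraph\subseteq K_{n_0,n_0}$ then $\forbiddenGraph\subseteq K_{n,n}$ for every $n\geq n_0$ (since $K_{n_0,n_0}$ is a subgraph of $K_{n,n}$), and analogously $\forbiddenGraph\subseteq P_{n_0}\join K_1$ implies $\forbiddenGraph\subseteq P_n\join K_1$ for all $n\geq n_0$. Consequently, the assumption that no single $n$ realises $\forbiddenGraph$ as a subgraph of both $K_{n,n}$ and $P_n\join K_1$ is equivalent to the disjunction:
\emph{(a)} $\forbiddenGraph$ is a subgraph of no $K_{n,n}$, or \emph{(b)} $\forbiddenGraph$ is a subgraph of no $P_n\join K_1$. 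Indeed, if $\forbiddenGraph\subseteq K_{n_1,n_1}$ and $\forbiddenGraph\subseteq P_{n_2}\join K_1$ both held, then taking $n=\max(n_1,n_2)$ would contradict the hypothesis.

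In case~\emph{(a)}, every graph $K_{n,n}$ is by definition $\forbiddenGraph$-subgraph-free, so the infinite family $\{K_{n,n}:n\geq 1\}$ witnesses unbounded treedepth inside the class of $\forbiddenGraph$-subgraph-free graphs of diameter at most~$d$ (using $d\geq 2$). In case~\emph{(b)}, the family $\{P_n\join K_1:n\geq 1\}$ plays exactly the same role. Either way the class has unbounded treedepth, as required.

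I do not anticipate a real obstacle: the content of the argument is the monotonicity observation in the second paragraph, together with the two diameter/treedepth computations already flagged by the authors. The only point that needs explicit attention is that the hypothesis, as stated, is a single quantifier-over-$n$ condition rather than the disjunction of two "for all $n$" conditions, and unpacking that equivalence is the only non-cosmetic step of the proof.
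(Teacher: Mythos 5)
Your proposal is correct and matches the paper's (very terse) justification: the authors simply note that both families $\{K_{n,n}\}$ and $\{P_n\join K_1\}$ have diameter at most~$2$ and unbounded treedepth, and conclude the observation. Your explicit unpacking of the hypothesis into the disjunction via monotonicity, and the treedepth lower bounds via Fact~\ref{fact:pathLogtd}, is exactly the intended argument spelled out in full.
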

Every subgraph of $P_n\join K_1$ can have at most one
component which is not a path. Moreover, in a graph of large enough
treedepth, we can find any number of disjoint paths as subgraphs. Hence, \Cref{obs:canonicalUnboundedExamples} implies the following:

\begin{observation}
  For any graph $\forbiddenGraph$, there is a component $C$
  of $\forbiddenGraph$ such that the class of
  $\forbiddenGraph$-subgraph-free graphs with diameter $d$ has bounded
  treedepth if and only if the class of $C$-subgraph-free graphs with
  diameter $d$ has
  bounded treedepth.
\end{observation}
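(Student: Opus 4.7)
The plan is to prove the observation by an iterated peeling argument: we remove a single path component of $\forbiddenGraph$ at each step and show that boundedness of the treedepth of the $\forbiddenGraph$-subgraph-free diameter-$d$ class is preserved. In the base case, when $\forbiddenGraph$ is connected, we simply take $C := \forbiddenGraph$ and the iff holds trivially.

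The inductive step rests on the following key lemma: if $\forbiddenGraph = \forbiddenGraph' \sqcup P_a$ for some path component $P_a$ of $\forbiddenGraph$, then the $\forbiddenGraph$-subgraph-free diameter-$d$ class has bounded treedepth if and only if the $\forbiddenGraph'$-subgraph-free diameter-$d$ class does. The forward direction is immediate, since $\forbiddenGraph' \subseteq \forbiddenGraph$ gives that every $\forbiddenGraph'$-subgraph-free graph is also $\forbiddenGraph$-subgraph-free, so the $\forbiddenGraph'$-class is contained in the $\forbiddenGraph$-class. For the reverse direction, suppose the $\forbiddenGraph'$-class has treedepth at most $t$, and set $N := a(|V(\forbiddenGraph')|+1)$. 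By \Cref{fact:pathLogtd} any graph of treedepth at least $N$ contains a path of length $N$ as a subgraph, and such a path contains at least $|V(\forbiddenGraph')|+1$ pairwise vertex-disjoint copies of $P_a$. If some $\forbiddenGraph$-subgraph-free diameter-$d$ graph $G$ had $\td(G) \geq N$ and also contained $\forbiddenGraph'$ as a subgraph, then by pigeonhole at least one of those disjoint $P_a$-copies would be vertex-disjoint from the copy of $\forbiddenGraph'$ (which spans only $|V(\forbiddenGraph')|$ vertices), producing $\forbiddenGraph' \sqcup P_a = \forbiddenGraph$ in $G$ and contradicting $\forbiddenGraph$-subgraph-freeness. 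Hence any such $G$ must already be $\forbiddenGraph'$-subgraph-free, so $\td(G) \leq t$; overall the $\forbiddenGraph$-class has treedepth at most $\max(N-1, t)$, which establishes the lemma.

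Iterating the lemma peels off path components of $\forbiddenGraph$ one at a time. If the reduction terminates at a connected residue, that residue is a component $C$ of the original $\forbiddenGraph$ and serves as the required witness. If instead every component of $\forbiddenGraph$ is a path, then $\forbiddenGraph$ is a subgraph of the single long path $P_{|V(\forbiddenGraph)|}$, and \Cref{fact:pathLogtd} directly bounds the treedepth of the $\forbiddenGraph$-class as well as of the $C$-class for any path component $C$ of $\forbiddenGraph$, so the iff is trivially satisfied because both sides read ``bounded''. The main obstacle lies in the reverse direction of the key lemma: we must convert a treedepth bound on the strictly smaller class $\forbiddenGraph'$ into a bound on the larger class $\forbiddenGraph$. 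This is exactly where \Cref{fact:pathLogtd} together with the pigeonhole over $|V(\forbiddenGraph')|+1$ disjoint copies of $P_a$ becomes essential, turning any treedepth-witness in the $\forbiddenGraph$-class into enough disjoint $P_a$-copies to force the formation of an $\forbiddenGraph$-copy from any supposed $\forbiddenGraph'$-subgraph.
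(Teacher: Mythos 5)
Your peeling lemma is correct, and on the cases it covers it takes essentially the same route as the paper: the paper's entire justification is that a graph of large treedepth contains a long path (the upper bound $\td(G)\leq\ell$ in \Cref{fact:pathLogtd}), hence arbitrarily many disjoint paths, and a pigeonhole over the $|V(\forbiddenGraph')|$ vertices of the rest of $\forbiddenGraph$ then assembles a forbidden copy. Your inductive formulation with the explicit bound $\max(N-1,t)$ is simply a more careful write-up of that argument, and both directions of your key lemma are sound.

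The gap is in your closing case analysis. The peeling process can terminate at a \emph{disconnected} residue, namely whenever $\forbiddenGraph$ has two or more components that are not paths; you treat only the cases ``connected residue'' and ``every component is a path'', so these $\forbiddenGraph$ receive no witness $C$ at all. The paper disposes of such $\forbiddenGraph$ via \Cref{obs:canonicalUnboundedExamples}: a subgraph of $P_n\join K_1$ has at most one non-path component (the apex can serve only one), so such an $\forbiddenGraph$ is a subgraph of no $P_n\join K_1$ and its class has unbounded treedepth. But observe that this only shows the $\forbiddenGraph$-class is unbounded; it does not supply a component whose class is also unbounded, and indeed none need exist: for $\forbiddenGraph=2C_8$ and $d=3$ the $\forbiddenGraph$-class is unbounded by \Cref{obs:canonicalUnboundedExamples} (the graphs $P_n\join K_1$ contain no two disjoint cycles), while the only component is $C_8$, whose class is bounded by \Cref{thm:C8d3}, so the stated equivalence fails for every choice of $C$. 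In other words, the observation is really only valid --- and only ever invoked in the paper --- for graphs $\forbiddenGraph$ with at most one non-path component, i.e.\ those that survive \Cref{obs:canonicalUnboundedExamples}. Your proof needs at minimum to state this restriction explicitly (or to handle the disconnected-residue case by an appeal to \Cref{obs:canonicalUnboundedExamples} together with an argument, where possible, that some non-path component also yields an unbounded class), rather than leaving that case silently uncovered.
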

Therefore, from now on we consider all forbidden subgraphs
$\forbiddenGraph$ to be connected. \\

\subsection{Comparing the Four Width Parameters}\label{s-comp}
The following theorem gives a contrast between treedepth and pathwidth (or treewidth) in our setting. 

\begin{theorem}
    The class of $\Hgraph{3}$-subgraph-free graphs of diameter at most~$2$ has bounded pathwidth but unbounded treedepth. 
\end{theorem}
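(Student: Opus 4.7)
For unbounded treedepth, I would take $G_n := P_n \join K_1$: a path on vertices $v_1,\dots,v_n$ together with one extra vertex $u$ adjacent to every $v_i$. The diameter is at most $2$ since $u$ is universal, and $\td(G_n) \geq \log_2 n$ by Fact~\ref{fact:pathLogtd} because $P_n \subseteq G_n$. The nontrivial step is to show $\Hgraph{3} \not\subseteq G_n$. Suppose $\phi : V(\Hgraph{3}) \hookrightarrow V(G_n)$ is an edge-preserving injection, and let $c_1, c_2$ denote the two degree-$3$ centres of $\Hgraph{3}$. Since $\phi(c_i)$ must have degree at least $3$ in $G_n$, each $\phi(c_i)$ is either $u$ or an internal path vertex $v_j$ with $1<j<n$, in which case $N_{G_n}(v_j) = \{v_{j-1}, v_{j+1}, u\}$. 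If neither $\phi(c_1)$ nor $\phi(c_2)$ equals $u$, then for $i=1,2$ the map $\phi$ sends $N_{\Hgraph{3}}(c_i)$ bijectively onto $N_{G_n}(\phi(c_i))$, so $u$ lies in both $\phi(N_{\Hgraph{3}}(c_1))$ and $\phi(N_{\Hgraph{3}}(c_2))$; but $N_{\Hgraph{3}}(c_1) \cap N_{\Hgraph{3}}(c_2) = \emptyset$ since $c_1$ and $c_2$ are at $\Hgraph{3}$-distance $3$, contradicting injectivity. If instead $\phi(c_1) = u$ and $\phi(c_2) = v_j$, the same bijection argument at $c_2$ requires $u \in \phi(N_{\Hgraph{3}}(c_2))$, but $\phi^{-1}(u) = c_1 \notin N_{\Hgraph{3}}(c_2)$, once again contradicting injectivity. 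Hence $G_n$ is $\Hgraph{3}$-subgraph-free and the sequence $(G_n)_{n\in\mathbb N}$ witnesses unbounded treedepth.

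For bounded pathwidth, the plan is to show that every $\Hgraph{3}$-subgraph-free graph $G$ of diameter at most $2$ admits a path decomposition of width bounded by an absolute constant. The strategy is to locate a bounded-size ``hub'' set $D \subseteq V(G)$ such that $D$ dominates $G$ and such that $G - D$ has bounded pathwidth; then $\pw(G) \leq |D| + \pw(G-D) + 1$. To bound $|D|$, I would argue that two non-adjacent vertices $x, y$ of sufficiently large degree already force an $\Hgraph{3}$-copy: by diameter $2$ they share a common neighbour $w$, and by using either a second common neighbour or a vertex in $N(w_1) \cap N(y)$ for a further neighbour $w_1 \in N(x)$, one builds a length-$3$ path $x\text{--}w_1\text{--}w_2\text{--}y$; the abundance of free neighbours at $x$ and $y$ then furnishes the four leaves required for $\Hgraph{3}$. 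Consequently the set $D$ of ``high-degree'' vertices has size bounded by an absolute constant, and by taking $D$ to additionally include the few low-degree vertices adjacent to many of the high-degree ones, one can ensure that $D$ dominates $G$.

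The main obstacle is the second step: bounding $\pw(G-D)$. Bounded maximum degree alone is insufficient here, so I would show that the components of $G - D$, while possibly of arbitrary size, are structurally ``path-like''. The point is that any long path or branching structure inside $G-D$, together with the domination provided by $D$ and the diameter-$2$ constraint on $G$, would yield a length-$3$ path in $G$ whose endpoints have enough free neighbours in $D$ to assemble an $\Hgraph{3}$-copy, contradicting $\Hgraph{3}$-freeness. Concretely, I would prove that each component of $G-D$ is contained in a ``thickened path'' with $O(1)$ vertices in each position, so that a path decomposition of bounded width is obtained by combining $D$ with a sliding window along this thickened path. The delicate combinatorial step is verifying that $\Hgraph{3}$-freeness plus diameter $2$ really does rule out arbitrarily large caterpillar-like or grid-like structures in $G-D$; this is where the bulk of the case analysis lies, but it relies only on iterated applications of the $\Hgraph{3}$-free / diameter-$2$ interaction described in the previous paragraph.
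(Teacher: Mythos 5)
Your treedepth half is correct and is essentially the paper's argument: the paper derives unboundedness from \Cref{obs:canonicalUnboundedExamples}, whose witness family is precisely $P_n\join K_1$, and your explicit check that $\Hgraph{3}$ is not a subgraph of $P_n\join K_1$ (via the two degree-$3$ centres having disjoint neighbourhoods) is exactly the fact needed to make that observation applicable.

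The pathwidth half, however, has a genuine gap, in two places. First, the auxiliary claim you use to bound the hub set $D$ --- that two non-adjacent vertices of sufficiently large degree force a copy of $\Hgraph{3}$ --- is false: $K_{2,n}$ (and also $K_{3,n}$) is $\Hgraph{3}$-subgraph-free, has diameter~$2$, and contains non-adjacent vertices of arbitrarily large degree. Your construction breaks down exactly there: a common neighbour of $x$ and $y$ gives a path of length $2$, and upgrading it to a path of length $3$ requires an \emph{edge} between a neighbour of $x$ and a neighbour of $y$, which diameter~$2$ does not supply when $N(x)\subseteq N(y)$ and the common neighbourhood is independent (a ``second common neighbour'' does not help, since the two common neighbours need not be adjacent). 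Second, and more importantly, the step you yourself identify as the main obstacle --- that every component of $G-D$ is a ``thickened path'' of bounded width --- is asserted but never proved; as written, this half is a plan rather than a proof. The paper avoids all of this with a much shorter route: if $\pw(G)\geq c(4,4,6)$ then, since $\Hgraph{3}$ is a subgraph of $K_{4,4}$ and $\pw\leq\td$, \Cref{cor:longInducedPath} yields an induced $P_6$; the diameter-$2$ and $\Hgraph{3}$-freeness constraints applied to that path then force a single vertex $x$ that dominates $G$ and for which $G-x$ has maximum degree~$2$, whence $\pw(G)\leq 3$. To salvage your decomposition approach you would need both to repair the degree lemma (two high-degree vertices do not suffice) and to actually carry out the structural analysis of $G-D$; neither step is routine.
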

\begin{proof}
It follows from \Cref{obs:canonicalUnboundedExamples} that the class of
$\Hgraph{3}$-subgraph-free graphs of diameter at most~$2$ has unbounded pathwidth.

    Let $G$ be some $\Hgraph{3}$-subgraph-free graph of diameter at most~$2$ with pw$(G) \geq c(4,4,6)$. As $\Hgraph{3}$ is a subgraph of $K_{4,4}$, we find that $G$ contains an induced $6$-vertex path $P= (p_0,p_1,p_2,p_3,p_4,p_5)$ due to Corollary~\ref{cor:longInducedPath}.

    As $G$ has diameter at most~$2$,
    we find that $p_1$ and $p_4$ have a common neighbour~$x$. For the same reason,  $p_0$ and $p_4$ have a common neighbour $y$. If $y\neq x$, then $G$ contains $\Hgraph{3}$ as a subgraph with $p_1$ and $p_4$ as its vertices of degree~$3$. Hence, $x=y$. For the same reason, $p_5$ must be adjacent to $x$.
    If $p_2$ has some neighbour not in $V(P) \cup \{x\}$ then $G$ contains $\Hgraph{3}$ as a subgraph with $p_2$ and $x$ as its vertices of degree~$3$. Note that $p_2$ and $p_5$ must have a common neighbour, as $G$ has diameter at most~$2$. Hence, $p_2$ is also adjacent to $x$, and by symmetry the same holds for $p_3$. The shortest path from $p_2$ to every vertex in $G-V(P)$ must contain $x$ implying $x$ dominates $G$. We also note that $p_1$, $p_3$, and $p_4$ are of degree~$3$, due to the $\Hgraph{3}$-subgraph-freeness of $G$.
        
    We now claim $G-x$ has degree at most~$2$, which implies pw$(G) \leq 3$. Namely, if $G-x$ contains some vertex~$v$ of degree at least~$3$ with some neighbour $v'$, then there must be at least three consecutive path vertices of $P$ not contained in $N_G[v]$. Say $p_i$ is the middle of these three path vertices. As $p_i$ has degree~$3$ and needs to have a common neighbour with $v$, we find that $v'$ must be adjacent to $x$. However, we now find that $G$ contains $\Hgraph{3}$ in which vertices $v$ and $p_i$ have degree~$3$, a contradiction. \qed
\end{proof}

The following theorem gives a contrast between treewidth and clique-width in our setting. We note that $C_3$-subgraph-free graphs, or equivalently, $C_3$-free graphs have unbounded clique-width by Theorem~\ref{thm:IS-CW}.

\begin{theorem}\label{t-cwtw}
For every $r\geq 2$, the class of $C_{2r+1}$-subgraph-free graphs of diameter at most~$2$ has bounded clique-width but unbounded treewidth.
\end{theorem}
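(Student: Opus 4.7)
The plan splits into a direct construction for the unboundedness of treewidth and a structural argument for the boundedness of clique-width.

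For the unbounded-treewidth part, the family $\{K_{n,n}\}_{n\geq 1}$ will suffice: each $K_{n,n}$ is bipartite, hence contains no odd cycles at all and in particular no $C_{2r+1}$ as a subgraph; it has diameter two; and by \Cref{fact:TW} the class of complete bipartite graphs has unbounded treewidth (and even unbounded pathwidth).

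For the bounded-clique-width part, the first step is to prove that every $C_{2r+1}$-subgraph-free graph $G$ of diameter at most two is \emph{$P_{2r}$-free} in the induced-subgraph sense. Suppose, towards a contradiction, that $v_1 v_2\cdots v_{2r}$ is an induced $P_{2r}$ in $G$. The endpoints $v_1$ and $v_{2r}$ are non-adjacent (they are at distance $2r-1\geq 3$ in the path and the path is induced) but lie at distance at most two in $G$, so they share a common neighbour $w$. Inducedness of the path rules out $w\in\{v_1,\dots,v_{2r}\}$: the only vertex of the path adjacent to $v_1$ apart from $v_1$ itself is $v_2$, but $v_2 v_{2r}\notin E(G)$, and symmetrically $w=v_{2r-1}$ is excluded because $v_1 v_{2r-1}\notin E(G)$. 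Therefore $v_1 v_2\cdots v_{2r} w v_1$ is a $C_{2r+1}$ subgraph of $G$, contradicting $C_{2r+1}$-subgraph-freeness.

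The second step is to show that $P_{2r}$-free graphs of diameter at most two have bounded clique-width; this is the main obstacle. For $r=2$ the conclusion reduces to ``$P_4$-free,'' so $G$ is a cograph and has clique-width at most two, and no diameter hypothesis is actually needed. For general $r\geq 3$ the situation is genuinely harder, since unrestricted $P_{2r}$-free graphs (once $2r\geq 6$) can have unbounded clique-width, so the diameter-two hypothesis must be used essentially. The strategy I would pursue is a structural argument: fix a vertex $v\in V(G)$, partition $V(G)\setminus\{v\}$ into $N(v)$ and $V(G)\setminus N[v]$, exploit that diameter two forces every vertex of $V(G)\setminus N[v]$ to have a neighbour in $N(v)$, and combine a careful analysis of the bipartite graph between these two sets (together with the $P_{2r}$-freeness, which sharply restricts the alternating paths that can appear) with an application of modular decomposition on each side to produce a clique-width expression whose width depends only on $r$. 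The delicate point in executing this plan is showing that the prime modular quotients arising from a $P_{2r}$-free diameter-two graph have bounded order, so that the overall clique-width is controlled by a function of $r$ alone.
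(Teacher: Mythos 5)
Your treewidth half and your first clique-width step (that a $C_{2r+1}$-subgraph-free graph of diameter at most~$2$ has no induced $P_{2r}$) are both correct and essentially coincide with the paper's argument. The problem is the second step, which you defer to a ``strategy'': the lemma you would need, namely that $P_{2r}$-free graphs of diameter at most~$2$ have bounded clique-width, is \emph{false} for every $r\geq 3$. Complements of walls are $2P_2$-free (walls have no induced $C_4$), hence $P_5$-free and therefore $P_{2r}$-free for all $r\geq 3$; they have diameter~$2$; and by \Cref{fact:CW-COMP} and \Cref{fact:walls} they have unbounded clique-width. (The paper itself uses exactly this family in the proof of \Cref{thm:IS-CW}.) These graphs of course contain $C_{2r+1}$ as a subgraph --- they contain large cliques --- so they do not contradict the theorem, but they show that reducing to induced-$P_{2r}$-freeness discards too much information: no analysis of $P_{2r}$-free diameter-two graphs alone, via modular decomposition or otherwise, can deliver the bound you want. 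Only the case $r=2$, where $P_4$-freeness gives cographs of clique-width at most~$2$, survives.

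The paper closes this gap by a different dichotomy. Starting from a graph $G$ in the class of large treewidth (hence large treedepth by \Cref{fact:decrel}), it invokes \Cref{cor:longInducedPath}: $G$ contains either a long induced path or a large complete bipartite subgraph. The first option is excluded exactly by your argument. In the second, it takes a maximal complete bipartite subgraph $K$ with parts $A$ and $B$ and uses $C_{2r+1}$-subgraph-freeness together with diameter~$2$ three more times --- to show that $K$ is induced, that no outside vertex can have neighbours in both $A$ and $B$, and that an outside vertex $u$ seeing only part of $A$ would create a $C_{2r+1}$ through a common neighbour of $u$ and a non-neighbour $x\in A$ --- concluding that $G=K$, a complete bipartite graph of clique-width~$2$. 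Thus every graph in the class either has treewidth below a fixed threshold (and then bounded clique-width by \Cref{fact:decrel}) or is complete bipartite. To repair your proof you should replace your second step with this argument; the essential point is to keep exploiting the forbidden odd cycle on the complete-bipartite side of the dichotomy rather than only on the induced-path side.
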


\begin{proof}
Let $r\geq 2$. As the class of complete bipartite graphs has unbounded treewidth and is a subclass of the class of $C_{2r+1}$-subgraph-free graphs of diameter at most~$2$, we find that $C_{2r+1}$-subgraph-free graphs have unbounded treewidth. It remains to prove that $C_{2r+1}$-subgraph-free graphs of diameter at most~$2$ have bounded clique-width.

Let $G$ be a $C_{2r+1}$-subgraph-free graphs of diameter at most~$2$ of arbitrarily large treewidth. Corollary~\ref{cor:longInducedPath} implies that $G$ either has the complete bipartite graph $K_{p,s}$ as a subgraph for arbitrarily large values of $r$ and $s$, or the path $P_\ell$ as an induced subgraph for arbitrarily large value of $\ell$. 

First suppose the latter case holds. Let $P$ be an induced $P_\ell$ of $G$. Take two vertices $u$ and $v$  that are of distance $2r-1$ from each other on $P$. As $P$ is an induced path in $G$, and $G$ has diameter at most~$2$, there exists a vertex $w$ not on $P$ that is adjacent to both $u$ and $v$. This yields a subgraph of $G$ that is isomorphic to $C_{2r+1}$, a contradiction.

Now suppose the former case holds. Let $K$ be a subgraph of $G$ isomorphic to $K_{p,s}$. Assume that $K$ is a maximal complete bipartite graph of $G$. Let $A$ and $B$ be the partition classes of $K$. Note that $K$ is an induced subgraph of $G$, as otherwise $K$, and thus $G$, contains a subgraph isomorphic to $C_{2r+1}$ (assuming we have chosen $p$ and $s$ large enough). We now claim that $G=K$. If not, then $G$, which is connected as it has diameter at most~$2$, contains a vertex $u$ not in $K$ that is adjacent to at least one vertex of $K$. If $u$ is adjacent to both a vertex of $A$ and a vertex of $B$, we find again that $G$ has a subgraph isomorphic to $C_{2r+1}$. Hence, we may assume without loss of generality that $u$ is only adjacent to one or more vertices of $A$. By maximality of $K$, we find that $u$ is not adjacent to every vertex of $A$, say $u$ is not adjacent to $x\in A$. We recall  that $K$ is an induced subgraph of $G$ and also that $u$ is not adjacent to any vertex of $B$. Hence, as $G$ has diameter at most~$2$, there must exist a vertex $y\notin K$ that is adjacent to $x$ and to $u$. However, now again $G$ has $C_{2r+1}$ as a subgraph, a contradiction.
\qed
\end{proof}

\subsection{Forbidding a $k$-Subdivided Graph}\label{sec:subgraphfor}
 
We first note that the $1$-subdivisions of complete bipartite graphs have
diameter $4$ and unbounded treedepth alongside the graphs
$P_n^{1001}$, i.e. for $b \in \{0,1\}^*$, where $P_n^{b}$ is the graph
obtained from $P_{n}$ after adding a new
vertex $u$ and making $u$ adjacent to the $i$-th
vertex of $P_{n}$ if the $(i \bmod |b|)$-th bit of $b$ is $1$.
Likewise, the $2$-subdivisions of complete graphs
have diameter $5$ and unbounded treedepth. 
The following therefore
holds.
\begin{observation}\label{obs:canonicalUnboundedExamplesDiam4}
  Let $\forbiddenGraph$ be a graph and $\mathcal{C}$ the class of
  $\forbiddenGraph$-subgraph-free graphs and bounded diameter $d
  \geq 4$. If there is no $n\in \mathbb{N}$ such that
  $\forbiddenGraph$ is a subgraph of both the $1$-subdivision of
  $K_{n,n}$ and of $P_n^{1001}$, then
  $\mathcal{C}$ has unbounded treedepth.
\end{observation}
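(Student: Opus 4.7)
The plan is to exploit monotonicity of the two subgraph-containment conditions appearing in the hypothesis to reduce to a case split, and then exhibit in each case an explicit family of diameter-$4$ graphs with unbounded treedepth.

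First, I would observe that the two relations ``$F$ is a subgraph of $\mathrm{sub}_1(K_{n,n})$'' (where $\mathrm{sub}_1$ denotes the $1$-subdivision) and ``$F$ is a subgraph of $P_n^{1001}$'' are both monotone non-decreasing in~$n$, since the $n$-th graph in each family embeds as a subgraph into the $(n{+}1)$-st. Consequently, if no single $n$ makes both conditions hold simultaneously, then taking a hypothetical pair of witnesses $(n_1, n_2)$ and letting $n = \max(n_1, n_2)$ would contradict the hypothesis; thus at least one of the two conditions must fail for every~$n$. This yields the two-case dichotomy: either $F$ embeds into no $\mathrm{sub}_1(K_{n,n})$, or $F$ embeds into no $P_n^{1001}$.

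In the first case, the family $\{\mathrm{sub}_1(K_{n,n})\}_{n\geq 2}$ is $F$-subgraph-free by assumption. A direct distance computation shows that each such graph has diameter~$4$ (the worst pairs being two original vertices on the same side of the bipartition, or two subdivision vertices lying on disjoint edges), and each contains a path of length $\Omega(n)$ obtained by subdividing a Hamiltonian cycle of~$K_{n,n}$, so its treedepth tends to infinity by Fact~\ref{fact:pathLogtd}. In the second case, the family $\{P_n^{1001}\}_n$ is $F$-subgraph-free. Here the bit pattern $1001$ guarantees that among every four consecutive vertices of the underlying path at least one is a neighbour of the apex vertex~$u$; this puts every path vertex at distance at most~$2$ from~$u$, bounding the diameter of $P_n^{1001}$ by~$4$. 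Unbounded treedepth again follows from Fact~\ref{fact:pathLogtd} because $P_n^{1001}$ contains the path $P_n$ as a subgraph.

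Since $d \geq 4$, in each case the corresponding family sits inside $\mathcal{C}$, so $\mathcal{C}$ has unbounded treedepth. The main routine point to handle carefully is the diameter verification for $P_n^{1001}$: one must check that every path vertex whose index is congruent to $1$ or~$2$ modulo~$4$ reaches~$u$ in two hops through a neighbouring path vertex of index congruent to $0$ or~$3$, and that the worst-case pair of such ``non-neighbour of $u$'' vertices on opposite ends of the path indeed realizes distance exactly~$4$ via~$u$. The analogous diameter check for $\mathrm{sub}_1(K_{n,n})$ is a straightforward case analysis by vertex type.
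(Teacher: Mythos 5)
Your proposal is correct and follows essentially the same route as the paper, which simply notes (without further detail) that the $1$-subdivisions of $K_{n,n}$ and the graphs $P_n^{1001}$ form two families of diameter-$4$, unbounded-treedepth graphs, so that failure of the joint containment condition forces one of these families to be $\forbiddenGraph$-subgraph-free. Your explicit monotonicity argument for the case split and the diameter/treedepth verifications (via \Cref{fact:pathLogtd}) are exactly the details the paper leaves implicit, and they check out under the paper's $0$-indexing convention for path vertices.
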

\begin{observation}\label{obs:canonicalUnboundedExamplesDiam5}
    Let $\forbiddenGraph$ be a graph and $\mathcal{C}$ the class of $\forbiddenGraph$-subgraph-free graphs and bounded diameter $d \geq 5$. If there is no $n\in \mathbb{N}$ such that $\forbiddenGraph$ is a subgraph of both the $1$-subdivision of $K_{n,n}$, $2$-subdivision of $K_{n}$ and of $P_n^{1001}$, then $\mathcal{C}$ has unbounded treedepth. 
\end{observation}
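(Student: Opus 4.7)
The plan is to mirror the strategy behind \Cref{obs:canonicalUnboundedExamplesDiam4}. I would exhibit three explicit families of graphs, each lying inside $\mathcal{C}$ (i.e., of diameter at most $5 \leq d$) and each of unbounded treedepth, and then argue that the stated hypothesis on $F$ forces the entire $n$-indexed family of at least one of them to be $F$-subgraph-free. The three families are $G_1(n)$, the $1$-subdivision of $K_{n,n}$; $G_2(n)$, the $2$-subdivision of $K_n$; and $G_3(n)=P_n^{1001}$.

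First I would verify that each family lies in $\mathcal{C}$ and has unbounded treedepth. That $G_1(n)$ has diameter $4$ and $G_3(n)$ has diameter at most $4$ is already noted in the discussion preceding \Cref{obs:canonicalUnboundedExamplesDiam4} (in $G_3(n)$ the attached vertex $u$ sits within distance $2$ of every path vertex because the pattern $1001$ guarantees that every window of four consecutive path vertices contains a neighbour of $u$). For $G_2(n)$, I would bound the diameter by $5$ with a small case analysis: two original vertices of $K_n$ are at distance $3$ via any subdivided edge; an original vertex and a subdivision vertex are at distance at most $4$; and two subdivision vertices lying on disjoint edges $\{v_i,v_j\},\{v_k,v_\ell\}$ of $K_n$ can be joined by going to their nearest original endpoint (cost at most $2$ on each side) and traversing the subdivided edge $\{v_j,v_k\}$ (cost $3$), yielding a path of length at most $5$. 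Unbounded treedepth of each family then follows from \Cref{fact:pathLogtd}: each of the three graphs contains a subgraph path of length $\Omega(n)$ (obtained respectively from subdividing a Hamiltonian path of $K_{n,n}$ and $K_n$, and from $P_n$ itself), so treedepth is at least $\log(\Omega(n)) \to \infty$.

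Finally I would carry out the combinatorial deduction. Each family is nested under subgraph containment in $n$: for $m \geq n$, $G_j(n)$ is a subgraph of $G_j(m)$, obtained by restricting to the first $n$ vertices (of each side, for $K_{n,n}$) and, in the case of $G_3$, retaining the full set of adjacencies of $u$. Hence each set $S_j=\{n \in \mathbb{N} : F \text{ is a subgraph of } G_j(n)\}$ is upward closed in $\mathbb{N}$, and $S_1 \cap S_2 \cap S_3$ is nonempty if and only if each $S_j$ is individually nonempty. Since the hypothesis of the observation says $S_1 \cap S_2 \cap S_3 = \emptyset$, some $S_{j^\star}$ must be empty; equivalently, the entire family $\{G_{j^\star}(n):n \in \mathbb{N}\}$ is $F$-subgraph-free and thus sits inside $\mathcal{C}$ while having unbounded treedepth, giving the desired conclusion. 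The only genuinely new ingredient compared to the $d=4$ observation is the diameter bound on the $2$-subdivision of $K_n$, which is where I expect the main (though modest) obstacle to lie; everything else reduces to a routine use of \Cref{fact:pathLogtd} together with the upward-closure remark above.
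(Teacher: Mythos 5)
Your proof is correct and follows essentially the same route as the paper, which treats the observation as immediate from the stated facts that the three families have diameter at most $5$ and unbounded treedepth; your added details (the explicit diameter-$5$ check for the $2$-subdivision of $K_n$, the appeal to \Cref{fact:pathLogtd}, and the upward-closure argument turning ``no common $n$'' into ``some family is entirely $\forbiddenGraph$-subgraph-free'') are exactly the implicit content. One small bookkeeping remark: in the disjoint-edges case the nearest original endpoint of a subdivision vertex is at distance exactly $1$ (not merely at most $2$), which is what actually gives $1+3+1=5$; as written, ``at most $2$ on each side'' plus $3$ would only yield $7$.
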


If $\forbiddenGraph$ is a subgraph of the $1$-subdivision of a complete bipartite graph and some $P_n^{1001}$, then $\forbiddenGraph$ is a subgraph of $\Hgraph{2}^{\ell}$ or $\subdivStar{\ell_1, \ldots , \ell_k}$ for some $\ell \geq 1$ and $\ell_1,\ldots,\ell_k \geq 1$. If $\forbiddenGraph$ is also a subgraph of the $2$-subdivision of a complete graph, this leaves only $\subdivStar{\ell_1, \ldots , \ell_k}$ for some $\ell \geq 1$.
Forbidding $\subdivStar{\ell_1, \ldots , \ell_k}$ in the first lemma bounds the treedepth of a graph of any diameter $d$ by $c((k\ell)/2+1, (k\ell)/2, \ell(k^d+1))$ where $\ell = \max\{\ell_1, \ldots , \ell_k\}$ and $c$ is the function from \Cref{thm:inducPathComBi}. We did not try to optimize this function.
\begin{figure}[b]
    \centering
    \includegraphics[width=0.4\linewidth]{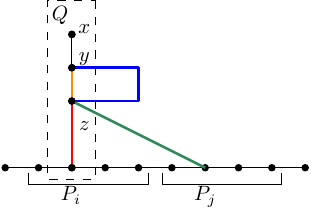}
    \caption{Path $Q$. 
    Paths $Q^1, Q^2,\overline{Q}^1$ and $\overline{Q}^2$ are orange, red, blue, and green, resp.}
    \label{fig:lem10}
\end{figure}
\begin{lemma}
\label{lem:subdivStar}
    Let $d\geq 1$. For all $\ell_1, \ldots, \ell_k\in \mathbb{N}$, the class of $\subdivStar{\ell_1, \ldots , \ell_k}$-subgraph-free graphs of diameter at most~$d$ has bounded treedepth.
\end{lemma}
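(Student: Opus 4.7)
The plan is to apply \Cref{cor:longInducedPath} to reduce to the case where $G$ contains a long induced path, and then to use the diameter bound to carve a copy of $\subdivStar{\ell_1,\ldots,\ell_k}$ out of this path. Writing $\ell = \max_i \ell_i$, observe that $\subdivStar{\ell_1,\ldots,\ell_k}$ is a bipartite tree whose two colour classes have size at most $(k\ell)/2+1$ and $(k\ell)/2$, and hence it embeds as a subgraph of $K_{(k\ell)/2+1,(k\ell)/2}$. So any $\subdivStar{\ell_1,\ldots,\ell_k}$-subgraph-free graph is automatically $K_{(k\ell)/2+1,(k\ell)/2}$-subgraph-free, and \Cref{cor:longInducedPath} then yields that either its treedepth is at most $c((k\ell)/2+1,(k\ell)/2,\ell(k^d+1))$, in which case the lemma holds with this bound, or it contains an induced path $P=(u_0,u_1,\ldots,u_L)$ of length $L = \ell(k^d+1)$.

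In the latter case I would mark the $k^d+1$ equispaced vertices $u_0, u_\ell, u_{2\ell}, \ldots, u_{k^d\ell}$ on $P$ and fix any BFS tree $T$ of $G$ rooted at $u_0$; since $G$ has diameter at most $d$, every marker sits at depth at most $d$ in $T$. Letting $T' \subseteq T$ be the union of the $T$-paths from $u_0$ to the markers, a standard branching-pigeonhole argument on $T'$ supplies some vertex $v \in V(T')$ with at least $k$ distinct children in $T'$, each of whose subtree contains a marker $u_{i_j\ell}$: otherwise $T'$ would contain at most $(k-1)^d < k^d+1$ markers, a contradiction. This produces $k$ internally vertex-disjoint paths from $v$ in $G$ ending at $u_{i_1\ell}, \ldots, u_{i_k\ell}$, each of length at most $d$ --- a provisional subdivided star rooted at $v$ whose arms may be too short.

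To promote these arms to the prescribed lengths $\ell_j \leq \ell$, I would extend each arm beyond its marker endpoint along $P$, always in the direction pointing away from the other chosen markers. Because consecutive markers are spaced $\ell$ apart on $P$, an extension of length up to $\ell-1$ always fits into the $P$-segment next to a marker without touching any other marker; combined with the core $T$-path of length $\geq 1$, this lets me achieve total arm length at least $\ell_j$, after which truncation gives the desired copy of $\subdivStar{\ell_1,\ldots,\ell_k}$, contradicting the assumption on $G$.

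The main obstacle I expect is guaranteeing that all $k$ arms remain pairwise vertex-disjoint. The core $T$-paths are disjoint by construction and the $P$-extensions are disjoint by the marker spacing, but a core $T$-path could in principle pass through some $u_s \in V(P)$ and collide with another arm's $P$-extension. I would handle this by exploiting the inducedness of $P$: whenever a $T$-path enters $P$ at some $u_s$, its remaining tail can be replaced by the appropriate $P$-subpath from $u_s$ to its target marker, which is still a valid path in $G$ of length at most $d$; after iterating this cleanup the disjointness analysis goes through. A small additional case distinction is needed when $v$ itself lies on $P$ (so some arm starts with length $0$ in the tree), and this is precisely what the extra slack in the choice $L = \ell(k^d+1)$ (rather than $\ell k^d$) is intended to absorb.
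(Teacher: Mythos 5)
Your overall strategy --- reduce via \Cref{cor:longInducedPath} to a long induced path, use the diameter bound to force many short connections to far-apart locations on that path, extract by pigeonhole a vertex with $k$ essentially disjoint short paths to such locations, and then lengthen the arms along $P$ --- is the same as the paper's. However, your concrete mechanism (a BFS tree $T$ rooted at $u_0$ and a branching argument in the subtree $T'$ spanning the markers) has a genuine gap in the disjointness step, and the cleanup you sketch does not close it. The $k$ tree-paths from $v$ are indeed pairwise internally disjoint, but nothing prevents the tree-path to marker $u_{i_j\ell}$ from passing through non-marker vertices of $P$ --- in particular through the segment $u_{i_{j'}\ell+1},\ldots,u_{i_{j'}\ell+\ell-1}$ that you intend to use as the $P$-extension of a \emph{different} arm $j'$ (a BFS tree does not respect the linear order of $P$), or through the extension region of its own marker. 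Your proposed repair --- replacing the tail of an offending tree-path from its first $P$-vertex $u_s$ onward by the $P$-subpath from $u_s$ to the target marker --- is incorrect: that $P$-subpath has length $|s-i_j\ell|$, which can be far larger than $d$ (contrary to your claim that the result is ``a valid path of length at most $d$''), and, worse, it sweeps through every intermediate marker and every intermediate extension region, so it creates more collisions than it removes; there is also no reason the iteration terminates in a clean configuration.

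The paper circumvents exactly this difficulty by changing two things. First, the targets are not single marker vertices but pairwise disjoint subpaths $P_1,\ldots,P_{k^d+1}$ of $P$ of length $2\ell$, so each arm can be completed \emph{inside} its own target subpath, in whichever direction is free. Second, instead of tree-paths it works with \emph{minimum-length} paths from a candidate centre $x$ to the targets and runs an induction on the distance $\delta$ (the number of targets reachable within distance $\delta$ is at most $k^\delta$, contradicting the diameter bound at $\delta=d$); minimality is what guarantees that a length-$(\delta+1)$ shortest path to one target cannot pass through any other target at the same distance, and \Cref{claim:overlappingPaths} (two equal-length shortest paths meeting off the root yield a length-$\delta$ connection from the second vertex) bounds how many of these paths can overlap each other, so that $k$ of them meeting only in $x$ can be greedily extracted. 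To salvage your BFS-tree version you would need to import both ingredients, at which point you would essentially have reproduced the paper's argument.
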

\begin{proof}
    Let $G$ be some $\subdivStar{\ell_1, \ldots , \ell_k}$-subgraph-free graph of diameter at most~$d$ and $\ell = \max\{\ell_1, \ldots , \ell_k\}$. We claim $td(G) < c((k\ell)/2+1, (k\ell)/2, \ell(k^d+1))$. Suppose for contradiction $td(G) \geq c((k\ell)/2+1, (k\ell)/2, 2\ell(k^d+1))$. As $K_{(k\ell)/2+1, (k\ell)/2}$ contains $\subdivStar{\ell_1, \ldots , \ell_k}$ as a subgraph, $G$ cannot contain a large complete bipartite graph as a subgraph. From Corollary~\ref{cor:longInducedPath}, $G$ contains some induced path $P$ of length $2\ell(k^d+1)$. We may assume that  $k \geq 2$ by \Cref{fact:pathLogtd}.

    Pick disjoint subpaths $P_1,\ldots,P_{k^d+1}$ of $P$ of length $2\ell$. 
    In the following we prove inductively that for every $x\in V(G)$ and every $\delta\in [d]$ there is a path of length at most $\delta$ from $x$ to some vertex on $P_i$ for at most $k^\delta$ different $i\in [1,k^d+1]$. For $\delta=d$ this yields a contradiction to $G$ having diameter at most $d$.
    First observe that every vertex $x \in V(G)$ can have a neighbour in at most $k-1$ distinct subpaths from $\{P_1,\dots, P_{k^d+1}\}$ else $G$ contains $\subdivStar{\ell_1, \ldots , \ell_k}$. Considering that $x$ might be on $P$, we obtain that there is a path of length at most $1$ from $x$ to some vertex on $P_i$ for at most $k$ different $i\in [k^d+1]$.  
    
    Now, let $x\in V(G)$ and $\mathcal{I}\subseteq [k^d+1]$ be all indices such that there is a path of length at most $\delta+1$ from $x$ to some vertex on $P_i$. Let $\mathcal{Q}$ be a set consisting a single minimum length path $Q$ from $x$ to any vertex $q$ on $P_i$ for every $i\in \mathcal{I}$. Towards a contradiction, assume that $|\mathcal{Q}|> k^{\delta+1}$.
    As there are paths of length at most $\delta$ from any $v \in V(G)$ to some $q\in P_i$ for at most $k^\delta$ different $i\in [k^d+1]$ by assumption, there is a subset $\mathcal{Q}_0\subseteq \mathcal{Q}$ of paths that have length $\delta+1$ with $|\mathcal{Q}_0|\geq (k-1)k^\delta+1$.
    \begin{mclaim}\label{claim:overlappingPaths}
        For any path $Q\in \mathcal{Q}_0$ there are at most $k^\delta-1$ other paths $\overline{Q}\in \mathcal{Q}_0$ that intersect $Q$ in more than one vertex.  
    \end{mclaim}
    \begin{claimproof}
        Let $i\in [k^d+1]$ be the index such that the last vertex of $Q\in \mathcal{Q}$ is on $P_i$. Let $y$ be the second vertex of $Q$. 
        Let $j\in [k^d+1]$, $j\not=i$ be any index such that the path $\overline{Q}\in \mathcal{Q}$ from $x$ to some vertex in $P_j$ intersects $Q$ in some vertex $z\not=x$. We argue that there is a path from $y$ to some vertex on $P_j$ of length $\delta$.  
        Let $Q^1$ be the subpath of $Q$ from $y$ to $z$, $\overline{Q}^1$ be the subpath of $\overline{Q}$ from $y$ to $z$ and $\overline{Q}^2$ subpath of $\overline{Q}$ from $z$ to some vertex on $P_j$. As $Q$ is a shortest path from $x$ to any vertex on $P_i$ of length $\delta+1$ and $\overline{Q}$ is a shortest paths from $x$ to any vertex on $P_j$ of length $\delta+1$, $Q^1$ and $\overline{Q}^1$ have the same length. Hence, the concatenation of $Q^1$ and $\overline{Q}^2$ yields a path of length $\delta$ from $y$ to some vertex on $P_j$. 
        Since, additionally to $i$, there  can be paths of length $\delta$ to some vertex on $P_j$ for at most $k^\delta-1$ different $j\in [k^d+1]$, $j\not=i$, the claim follows.
    \end{claimproof}
    
    We now iteratively choose paths $Q_1,\dots, Q_m$ in such a way that $Q_i$ is any path from the set $\mathcal{Q}_{i}\subseteq \mathcal{Q}_0$ of paths that  intersect each $Q_j$, $j\in [i-1]$ in $x$ only. Here $m$ is the minimum integer such that $\mathcal{Q}_{m+1}=\emptyset$. 
    By \Cref{claim:overlappingPaths} we know that $|\mathcal{Q}_i|\geq |\mathcal{Q}_{i-1}|-k^\delta=(k-i)k^\delta + 1$. Hence, $|\mathcal{Q}_k|\geq 1$ which implies that there are paths $Q_1,\dots, Q_k$ pairwise only intersecting in vertex $x$. 
    Therefore, $G$ contains  $\subdivStar{\ell_1, \ldots , \ell_k}$ as a subgraph, a contradiction.
    \qed
\end{proof}

We obtain \Cref{thm:classificationDiam5} by combining \Cref{obs:canonicalUnboundedExamplesDiam5} and \Cref{lem:subdivStar}.

\begin{lemma}
\label{lem:H2}
    For any $\ell \in \mathbb{N}$, the class of $\Hgraph{2}^{\ell}$-subgraph-free graphs of diameter at most~$4$ has bounded treedepth.
\end{lemma}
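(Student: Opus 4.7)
The plan is to argue by contradiction. Assume some $\Hgraph{2}^{\ell}$-subgraph-free graph $G$ of diameter at most $4$ has treedepth at least $c(r, r, N)$, where $r := 2\ell + 2$ and $N = N(\ell)$ is chosen large enough. Since $\Hgraph{2}^{\ell}$ is bipartite on $4\ell + 3$ vertices it embeds into $K_{r,r}$, so $G$ is $K_{r,r}$-subgraph-free, and \Cref{cor:longInducedPath} yields an induced path $P = (p_0, \ldots, p_N)$ in $G$. Write $P' := \{p_\ell, \ldots, p_{N-\ell}\}$ for the central part of $P$ and, for each $x \in V(G)$, let $I(x) := \{i \in [\ell, N-\ell] : p_i \in N_G(x)\}$.

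The core local observation is the following: \emph{if a vertex $x \in V(G)$ has neighbours $p_a, p_b \in P'$ with $b - a \geq 2\ell + 1$, then $G$ contains $\Hgraph{2}^{\ell}$ as a subgraph}. Take $u := p_a$ and $v := p_b$ as the two centres of $\Hgraph{2}^{\ell}$, with $x$ as the middle vertex; the four legs of length $\ell$ are the sub-paths $p_{a-\ell} \cdots p_a$, $p_a \cdots p_{a+\ell}$, $p_{b-\ell} \cdots p_b$ and $p_b \cdots p_{b+\ell}$ of $P$. These legs are pairwise disjoint apart from $u$ and $v$ since $b - a > 2\ell$, and $x \notin V(P)$ because the only $V(P)$-neighbours of $p_a$ are $p_{a \pm 1}$, the only $V(P)$-neighbours of $p_b$ are $p_{b \pm 1}$, and these two sets are disjoint whenever $b - a \geq 3$. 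Consequently $I(x)$ has $P$-diameter at most $2\ell$ for every $x \in V(G)$. A symmetric application of the same construction with centres $u := p_i$ and $v := x$ (where $x$ has at least two $P'$-neighbours clustered in a window far from $i$) shows that if such an $x$ satisfies $d_G(p_i, x) = 2$, then $G$ again contains $\Hgraph{2}^{\ell}$; so any such $x$ with cluster far from $p_i$ must have $d_G(p_i, x) \in \{3, 4\}$.

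The final step is to combine these local constraints with the diameter-$4$ bound on a sufficiently long $P$. Consider two far-apart path vertices $p_a, p_b$ with $b - a$ very large: a shortest $G$-path between them has length at most $4$ and must quickly leave $P$, because any length-$\leq 4$ shortcut that stays on $P$ forces $b - a \leq 4$, while any shortcut that uses a single path-vertex as an internal step gives, after a unit index shift, a witness to the core observation above. Iterating these local and distance-$2$ arguments along all possible length-$\leq 4$ shortcut witnesses, one extracts a pair of centres at $G$-distance exactly $2$ together with four mutually vertex-disjoint length-$\ell$ legs, producing a forbidden $\Hgraph{2}^{\ell}$. The main technical obstacle will be preserving vertex-disjointness of the four legs when one or two of them must be routed through the external shortcut structure rather than along $P$; here the bound $d = 4$ is essential, matching the $2 + 2$ geometry of $\Hgraph{2}^{\ell}$'s two centres with two legs each, which is precisely why the analogous statement already fails at $d = 5$ (cf.\ \Cref{thm:classificationDiam5}).
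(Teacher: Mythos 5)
Your opening is sound and matches the paper's: the reduction to a long induced path $P$ via \Cref{cor:longInducedPath}, and the observation that a common neighbour of two path vertices at distance at least $2\ell+1$ yields $\Hgraph{2}^{\ell}$ (this is the paper's \Cref{claim:HGraph_distAtLeast3}, and it shows that far-apart path vertices are at $G$-distance at least~$3$). The gap is everything after that. Ruling out distance~$2$ is not enough: you must also rule out distance~$3$ between far-apart path vertices, and this is where the real work lies. The paper does it in two steps: first it shows that for a length-$3$ path $(p_i,x,x',p_j)$ with $|i-j|\geq 3\ell+1$, the internal vertices $x$ and $x'$ have \emph{no} neighbours on $P$ other than $p_i$ and $p_j$ respectively (\Cref{claim:HGraph_pathOfLength3}); then, by bringing in a third path vertex $p_k$ far beyond $p_j$ and analysing how the diameter bound forces $x$ and $x'$ to reach $p_k$ and $p_{k+2}$, it derives a contradiction and concludes that far-apart path vertices are at distance exactly~$4$ (\Cref{claim:HGraph_distance4}). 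Your sketched ``distance-$2$ argument'' for external vertices $x$ with clustered $P$-neighbours does not substitute for this: it presupposes that $x$ has at least two neighbours on $P$, it does not control where the middle vertex of the length-$2$ connection sits (it could lie on $P$ or on one of the four legs), and even granting it, it does not exclude length-$3$ shortcuts between $p_a$ and $p_b$ whose internal vertices each have a single $P$-neighbour.

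The second, larger gap is the endgame. You assert that ``iterating these local and distance-$2$ arguments \dots\ one extracts a pair of centres at $G$-distance exactly $2$ together with four mutually vertex-disjoint length-$\ell$ legs,'' and you yourself flag the disjointness of legs routed through the external shortcut structure as the main technical obstacle --- but you never resolve it, so this is a plan rather than a proof. The paper sidesteps the routing problem entirely with a different final configuration: it takes three consecutive distance-$4$ shortcuts along $P$, shows that their external vertices must chain together (the last internal vertex of one shortcut equals the first internal vertex of the next), and then uses two of the resulting external vertices as the two degree-$3$ centres; the middle vertex is external, but all four legs lie on $P$, so their disjointness is immediate from index arithmetic. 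To complete your argument you would need either to reproduce this chaining step or to supply the disjointness analysis you deferred; as written, the proof is incomplete precisely at the point where the diameter-$4$ hypothesis is actually used.
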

\begin{proof}


    For some $\ell \geq 0$, let $G$ be some $\Hgraph{2}^{\ell}$-subgraph-free graph of diameter at most~$4$. We claim $td(G) \leq c((4\ell+1)/2, (4\ell+1)/2, 16(\ell+1)+1)$. Suppose for contradiction $td(G) > c((4\ell+1)/2, (4\ell+1)/2, 16(\ell+1)+1)$. As $K_{(4\ell+1)/2, (4\ell+1)/2}$ contains $\Hgraph{2}^{\ell}$ as a subgraph, $G$ cannot contain a large complete bipartite. From Corollary~\ref{cor:longInducedPath}, $G$ contains some induced path $P= (p_1, \ldots, p_{16(\ell+1)})$ of length $16(\ell+1)$.

    We first observe the following.
    \begin{mclaim}\label{claim:HGraph_distAtLeast3}
        For any two indices $i,j$ with $\ell \leq i < j \leq 16(\ell+1)-\ell$ and $|i-j| \geq 2\ell+1$ the shortest path between $p_{i}$ and $p_j$ must have length at least $3$.
    \end{mclaim}
    \begin{claimproof}
        As $P$ is induced there cannot be a path of length $1$ from $p_i$ to $p_j$. Furthermore, any path of length $2$ from $p_i$ to $p_j$  yields $\Hgraph{2}^{\ell}$ as a subgraph with degree 3 vertices $p_i$ and $p_j$. 
    \end{claimproof}

    We now argue that the shortest path between $p_i$ and $p_j$ for $i,j$ of sufficient distance has to be of length $4$. To prove this we use the following claim.
\begin{mclaim}\label{claim:HGraph_pathOfLength3}
    For any path $(p_i,x,x',p_j)$  with $2\ell \leq i < j \leq 16(\ell+1)-2\ell$ and $|i-j| \geq 3\ell+1$ it holds that $x$ has no neighbours besides $p_i$ on $P$ and $x'$ has no neighbours beside $p_j$ on $P$. 
\end{mclaim}
\begin{claimproof}
    As $P$ is induced $x, x' \notin P$. Furthermore,  we have that $N(x) \cap \{p_1,\dots, p_{i-(2\ell+1)}, p_{i+2\ell+1},\dots, p_{16(\ell+1)}\} = \emptyset$ by \Cref{claim:HGraph_distAtLeast3}.

    Suppose for contradiction $x$ is adjacent to $p_{k}$ with $k\not=i$ and $i-2\ell \leq  k\leq i+2\ell$. Observe that we can choose two disjoint path $P^i$ and $P^\ell$ of length $\ell-1$ within the vertices $\{p_{i-2\ell},\dots, p_{i+2\ell}\}$. Using $P^i$ and $P^\ell$ it is easy to observe that we found $\Hgraph{2}^{\ell}$ as a subgraph with degree $3$ vertices $x$ and $p_j$. Hence, $x$ cannot have any neighbours beside $x_i$ on $P$. The proof for $x'$ is symmetric.
\end{claimproof}
\begin{mclaim}\label{claim:HGraph_distance4}
    For any two indices $i,j$ with $2\ell \leq i < j \leq 16(\ell+1)-5\ell-1$ and $|i-j| \geq 3\ell+1$ the shortest path between $p_{i}$ and $p_j$ must have length $4$.
\end{mclaim}
\begin{claimproof}
    Suppose for a contradiction that $i,j$ are two indices with  $2\ell \leq i < j \leq 16(\ell+1)-5\ell-1$ and $|i-j| \geq 3\ell+1$ and there is a path $(p_i,x,x',p_j)$. 
    Let $k = j+3\ell+1$ and hence $k\leq 16(\ell+1)-2\ell$. The shortest path $Q$ from $x$ to $p_k$ must contain some vertex $p_{m} \in \{p_{i-\ell}, \dots, p_{i+\ell}\} \cup \{p_{j-\ell}, \dots, p_{j+\ell}\}$ else there is a $\Hgraph{2}^{\ell}$ with degree $3$ vertices $x$ and $p_{j}$. 
    As $m$ and $k$ satisfy the conditions of \Cref{claim:HGraph_distAtLeast3}, the subpath of $Q$ from $p_m$ to $p_k$ must have length at least $3$. As $Q$ is a shortest path this implies that $x$ is adjacent to $p_m$ and hence $k=i$ by \Cref{claim:HGraph_pathOfLength3}. Hence, there is a path of the form $(x,p_i,y,y',p_k)$. 
    We argue in a similar way for $x'$. Let $Q'$ be the shortest path from $x'$ to $p_{k+2}$. To avoid $\Hgraph{2}^{\ell}$ with degree 3 vertices $x'$ and $p_{i}$, $Q'$ must contain a vertex $p_{m'} \in \{p_{i-\ell}, \dots, p_{i+\ell}\} \cup \{p_{j-\ell}, \dots, p_{j+\ell}\}$. We get that $x'$ is adjacent to $p_{m'}$ as $Q'$ has length at most $4$ and the subpath of $Q'$ from $p_{m'}$ to $p_{k+2}$ must have length at least $3$ by \Cref{claim:HGraph_distAtLeast3}. By \Cref{claim:HGraph_pathOfLength3} we get that $j=m'$ and hence there is a path $(x',p_j,z,z',p_{k+2})$. As $y,y',z,z'$ must be pairwise disjoint by \Cref{claim:HGraph_pathOfLength3} this yields a  $\Hgraph{2}^{\ell}$ as a subgraph with degree $3$ vertices $p_k$ and $p_{k+2}$.
    Hence, the shortest path from $p_i$ to $p_j$ cannot be $3$ and therefore it must be $4$ as $G$ has diameter at most $4$.
\end{claimproof}
    
    By \Cref{claim:HGraph_distance4} the shortest path from $p_{2\ell}$ to $p_{5\ell+1}$ must have length $4$ and contain $3$ vertices not on $P$. The same holds for the path from $p_{5\ell+3}$ to $p_{8\ell+4}$ and the path from $p_{8\ell+6}$ to $p_{11\ell+7}$. Let us denote these paths by $(p_{2\ell},x,x',x'',p_{5\ell+1})$ and $(p_{5\ell+3},y,y',y'',p_{8\ell+4})$ and $(p_{8\ell+6},z,z',z'',p_{11\ell+7})$. If $x,x',x'',y,y',y''$ are all distinct then there is some $\Hgraph{2}^{\ell}$ with degree 3 vertices $p_{5\ell+1}$ and $p_{5\ell+3}$. This means $x'' = y$ as all other combinations lead to a shortest path of length at most~$3$ between some pair of vertices of $P$ with distance at least $3\ell+1$ contradicting \Cref{claim:HGraph_distance4}. This also holds for $y,y',y'', z,z',z''$ meaning $y''=z$. However, this leads to a $\Hgraph{2}^{\ell}$ with degree $3$ vertices $y$ and $z$. Thus a contradiction. \qed    
\end{proof}

Combining \Cref{obs:canonicalUnboundedExamplesDiam4}, \Cref{lem:subdivStar} and \Cref{lem:H2} gives us the  classification for diameter $4$ claimed in \Cref{thm:classificationDiam4}.

\subsection{Forbidding a Tree}\label{sec:subgraphtree}
In this subsection we prove the following result.

  \begin{figure}[t]
      \centering
      \includegraphics[width=0.55\linewidth]{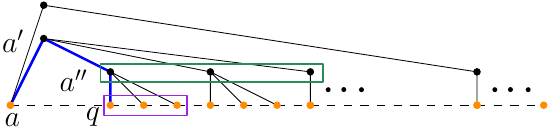}
      \caption{The dashed line is $P$ with vertices of $A$ in orange. Some vertices from $X(A,a)$ with their respective shortest paths are drawn, with the path $(a,a',a'',q)$ is in blue.}
      \label{fig:dis-paths}
  \end{figure}

\begin{theorem}
  \label{thm:diam3aalf}
  For every tree~$\forbiddenGraph$, the class of
  $\forbiddenGraph$-subgraph-free graphs of diameter at most~$3$
  has bounded treedepth if and only if $\forbiddenGraph$ is an
  acyclic apex linear forest.
\end{theorem}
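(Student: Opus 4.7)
The plan is to prove both directions of the equivalence.

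For the \emph{forward} direction, I will observe that a tree~$\forbiddenGraph$ is an apex linear forest if and only if $\forbiddenGraph$ is a subgraph of $P_n \join K_1$ for some $n\in\mathbb{N}$. One direction: given $v^*\in V(\forbiddenGraph)$ with $\forbiddenGraph - v^*$ a linear forest, the path components of $\forbiddenGraph - v^*$ can be arranged consecutively along $P_n$ (with the connecting edges deleted), and $v^*$ is mapped to the apex, retaining only the edges present in~$\forbiddenGraph$. Conversely, any subgraph embedding of $\forbiddenGraph$ into $P_n \join K_1$ either uses the apex---so the preimage serves as~$v^*$ and $\forbiddenGraph - v^*$ embeds into $P_n$, hence is a linear forest---or does not, in which case $\forbiddenGraph$ itself embeds into $P_n$ and is a path. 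Every tree is bipartite, so $\forbiddenGraph \subseteq K_{|V(\forbiddenGraph)|,|V(\forbiddenGraph)|}$, and since $P_n \join K_1$ has diameter $2\leq 3$, \Cref{obs:canonicalUnboundedExamples} immediately yields unbounded treedepth whenever $\forbiddenGraph$ is a tree that is not an apex linear forest.

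For the \emph{backward} direction, let $\forbiddenGraph$ be an apex linear forest tree with apex~$v^*$ and write $\ell$ for the maximum length of a path component of $\forbiddenGraph - v^*$. If $\forbiddenGraph$ has at most one vertex of degree at least $3$, then $\forbiddenGraph$ is a subgraph of some subdivided star, so $\forbiddenGraph$-subgraph-freeness implies subdivided-star-subgraph-freeness and the conclusion is a direct consequence of \Cref{lem:subdivStar}. Otherwise, the apex-linear-forest condition forces every additional degree-$\geq 3$ vertex of $\forbiddenGraph$ to be a neighbour of $v^*$ of degree exactly~$3$ serving as the interior vertex of a path in $\forbiddenGraph - v^*$; call the corresponding branch at $v^*$ a \emph{Y-branch}. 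Let $k = \deg_\forbiddenGraph(v^*)$. I argue by contradiction: suppose $G$ is $\forbiddenGraph$-subgraph-free, has diameter at most~$3$ and has arbitrarily large treedepth. Since $\forbiddenGraph \subseteq K_{|V(\forbiddenGraph)|,|V(\forbiddenGraph)|}$, \Cref{cor:longInducedPath} yields a long induced path $P=(p_0,\ldots,p_L)$ in $G$.

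I then select widely spaced vertices $A = \{a_t = p_{(3\ell+1)t}\} \subseteq V(P)$, with $|A|$ as large as needed. For each distinct pair $a, a' \in A$ the $P$-distance exceeds~$3$, so the diameter-$3$ hypothesis provides an off-path shortcut of length $2$ or $3$ (length $1$ is ruled out since $P$ is induced and the $a_t$ are not $P$-adjacent). A Ramsey/pigeonhole argument, applied iteratively to the first---and in the length-$3$ case also the second---intermediate vertex of these shortcuts (illustrated in Fig.~\ref{fig:dis-paths} by the set $X(A,a)$ and the representative length-$3$ shortcut $(a,a',a'',q)$), produces a vertex $v \in V(G) \setminus V(P)$ together with $k$ indices $t_1 < \cdots < t_k$ such that $v$ is adjacent in $G$ to each of $a_{t_1}, \ldots, a_{t_k}$. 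A copy of $\forbiddenGraph$ in $G$ is then assembled by mapping $v^* \mapsto v$, each neighbour~$u_i$ of $v^*$ in $\forbiddenGraph$ to $a_{t_i}$, and each branch of $\forbiddenGraph$ at $u_i$ to a segment of $P$ around $a_{t_i}$: a path-branch occupies one side of $a_{t_i}$ along $P$, whereas a Y-branch uses both sides (legal because $P$ is induced). The spacing $3\ell+1$ guarantees pairwise disjointness of the branches, contradicting the $\forbiddenGraph$-subgraph-freeness of $G$.

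The main obstacle lies in producing the center vertex $v$ adjacent to $k$ widely spaced path vertices using only the diameter-$3$ hypothesis. Unlike in the proof of \Cref{lem:subdivStar}, where a single-step pigeonhole on the neighbours of a vertex already suffices, shortcuts here may require length $3$, and the Ramsey-style selection must be iterated carefully through both the length-$2$ and the length-$3$ case so that the chosen vertex is genuinely adjacent to (rather than merely close to) the required number of elements of $A$; this is where the set $X(A,a)$ of shortcut targets and the analysis of their second intermediate vertices $a''$ enter the argument.
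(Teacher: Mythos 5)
There is a genuine gap at the central step of your backward direction. You claim that a Ramsey/pigeonhole argument on the intermediate vertices of the length-$2$ and length-$3$ shortcuts between the widely spaced vertices $A=\{a_{t}\}\subseteq V(P)$ produces a single vertex $v$ that is \emph{adjacent} to $k$ elements of $A$. This is not justified, and it is not a consequence of the diameter-$3$ hypothesis: all $\binom{|A|}{2}$ required shortcuts can have length exactly $3$ and be routed through pairwise distinct intermediate vertices, each adjacent to only one element of $A$, so no pigeonholing on first or second intermediate vertices can force a vertex with many neighbours in $A$. What the diameter condition does yield (and what the paper proves, via the counting argument with $X(A,a)$ and $|A|\geq (k'-1)^3+1$) is only a vertex $x$ with $k$ \emph{pairwise disjoint paths of length at most $3$} to elements of $A$. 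That weaker conclusion gives a subdivided star $\subdivStar{k*[\ell]}$ centred at $x$, but it does not realise your Y-branches: the neighbour of $x$ on such a leg is an off-path vertex of degree $2$ on the connecting path, not a path vertex with two directions of $P$ available, so the degree-$3$ vertices of $\forbiddenGraph$ adjacent to $v^*$ cannot be embedded. Your plan works verbatim only when every shortcut happens to have length $2$ with a common midpoint, i.e.\ essentially only when $\forbiddenGraph$ is a subdivided star, which is the case you had already dispatched via \Cref{lem:subdivStar}.

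This missing step is precisely where the bulk of the paper's proof lives: it first extracts $\gamma$ pairwise disjoint, mutually non-adjacent \emph{induced} copies of $\subdivStar{k*[\ell]}$ (\Cref{clm:diam3claim1}, which itself needs the cross-edge cleanup of \Cref{clm:x'branches}), and then runs the iterative $\Gamma_\rho$ argument to find one star $S$ whose $k$ depth-one vertices admit pairwise disjoint paths, avoiding $S$, to $k$ distinct other stars; only this second structure supplies the two long disjoint paths emanating from each neighbour of the centre that the Y-branches of $\forbiddenGraph$ require. Your forward direction and your final embedding step (given a genuinely adjacent centre) are fine, but without a correct replacement for the construction of that centre the proof does not go through.
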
 
  We give a short overview of the proof of \Cref{thm:diam3aalf} before proceeding with the proof.
  The forward direction essentially follows from \Cref{obs:canonicalUnboundedExamples}. For the backwards direction,
  aiming for a proof by contradiction, we assume that for some acyclic apex linear forest $\forbiddenGraph$, the class of $\forbiddenGraph$-subgraph-free graphs of diameter at most $3$ has unbounded treedepth. Since $\forbiddenGraph$ is contained as a subgraph in $K_{n,n}$ for some sufficiently large $n$, we use \Cref{cor:longInducedPath} to obtain a graph $G$ in the class containing a long induced path $P$.
  We will proceed by showing that 
  it follows that $G$ contains, as an induced subgraph, the disjoint union of $\gamma$ copies of
  $S_{k*[\ell]}$ for some large number $\gamma$. Let $\mathcal{S}$ denote this set of $\gamma$ copies of $S_{k*[\ell]}$ and for each $S \in \mathcal{S}$ let $X_1(S)$ denote the set of vertices at distance one from the centre of $S$. We claim that, if $\gamma$ is sufficiently large, there exists some $S \in \mathcal{S}$ such that there is a bijective mapping from the vertices in $X_1(S)$ to $k$ distinct stars in $\mathcal{S} \setminus \{S\}$, such that there are disjoint paths between those vertices in $X_1(S)$ and their corresponding stars which contain no edges from $S$. 
  Finally, we show that the construction results in a copy of $\forbiddenGraph$ in $G$, a contradiction.
\begin{proof}
  A graph is an apex linear forest if, and only if it is a
  subgraph of $P_n\join K_1$ for some integer $n$. This together with
  \Cref{obs:canonicalUnboundedExamples} shows the forward direction.
  
  For the backward direction, let $\forbiddenGraph$
  be an acyclic apex linear forest. There exists some $k$ and $\ell$
  such that $\forbiddenGraph$ is a subgraph of the graph obtained from
  the disjoint union of a vertex $v$ and $k$ paths of
  length $\ell-1$ after making $v$ adjacent to exactly one vertex on
  every path. We claim the class $\mathcal{C}$ of $\forbiddenGraph$-
  subgraph-free graphs of diameter at most $3$ have bounded treedepth.

  Suppose for a contradiction that $\mathcal{C}$
  has unbounded treedepth, it must contain some graph~$G$ such that 
  $\td(G) \geq c((k \ell +1)/2, (k\ell+1)/2, \gamma(\ell+1)(5k)^7)$,
  where $\gamma$ is some function of $k$
  and the function $c$ is given by~\Cref{cor:longInducedPath}.
  As $\forbiddenGraph$ is contained in
  $K_{(k \ell +1)/2,(k\ell +1)/2}$, $G$ has no large
  complete bipartite subgraph. \Cref{cor:longInducedPath} implies that $G$
  contains some induced path~$P$ with length at least $\gamma(\ell+1)(5k)^7$. We further,  
  let $P^{\ell+1} = \{ p_i: i \mod (\ell+1)=0 \}
  \setminus p_{0}$, where $p_i$ is the $i$-th vertex of $P$ (starting
  from $i=0$).

  \begin{mclaim}\label{clm:diam3claim1}
    If $|P^{\ell+1}| > \gamma(5k)^7$, then
    $G$ contains the disjoint union of $\gamma$ copies of
    $S_{k*[\ell]}$ as an induced subgraph.
  \end{mclaim}
  \begin{claimproof}
    Consider some $A \subseteq P^{\ell+1}$ and
    $a \in A$. Let $X(A,a)$ contain all vertices that lie on some shortest
    path from $a$ to some $a' \in A$ in $G$.
    We claim that if $\vert A \vert \geq (k'-1)^3+1$,
    then there is some vertex $x \in X(A,a)$ with $k'$ disjoint
    shortest paths to vertices in $A$ for every $k' \geq k$. Note that every vertex in $G$ is adjacent to at
    most $k-1$ vertices in $A$; else
    $\forbiddenGraph$ is a subgraph of $G$; see the purple vertices in Figure~\ref{fig:dis-paths}.  Consider some $x
    \in X(A,a)$ together with all shortest paths of length at most $2$ from $x$
    to some vertex in $A$. 
    These paths can contain at most $k'-1$
    distinct intermediate vertices, else there are $k'$ disjoint
    paths from $x$ to vertices in $A$. Each intermediate vertex (green in Figure~\ref{fig:dis-paths}) has a path of length at
    most $1$ to at most $k'-1$ vertices in $A$, so there are at most
    $(k'-1)^2$ paths of length at most $2$ from $x$ to some vertex in
    $A$. As $G$ has diameter~$3$, $a$ has a path of length at most
    $3$ to every vertex in $A$. Let
    $(a,a',a'',q)$ be such a path. There are at most $(k'-1)^2$ shortest paths of
    length~$2$ from $a'$ to vertices in $A$. At most $(k'-1)^2$
    shortest paths from $a$ to some vertex in $A$ contain either $a'$
    or $a''$, so at least $(k'-1)^3+1 -(k'-1)^2$ paths (among all
    shortest paths from $a$ to some vertex in $A$) are disjoint
    from this $a$ to $q$ path. As $\vert A \vert \geq (k'-1)^3+1$,
    there are at least $k'$ disjoint shortest paths from $a$ to vertices in
    $A$.

    Let $x \in X(A,a)$ have $k'$ disjoint shortest paths to vertices in $A$, 
    these paths together with sections of $P$
    describe a $S_{k'*[\ell]}$ subgraph, which we call $S$. Let
    $X_1(S)$ and $X_2(S)$ be the sets of vertices of $S$ containing all
    vertices at distance $1$ and $2$
    from $x$ in $S$, respectively. We refer to edges in $G[S]$ but not in $S$ as
    cross edges. Given all vertices with distance at least $3$ from
    $x$ in $S$ are in $P$ and $P$ is an induced path, all cross edges have some endpoint in
    $\{x\} \cup X_1(S)\cup X_2(S)$.
    \begin{mclaim}\label{clm:x'branches}
      There is a set of at most $4k-3$ branches of $S$ whose removal
      from $S$ leaves no cross edges with an endpoint in $X_1(S) \cup
      \{x\}$.
    \end{mclaim}
    \begin{claimproof}
      If a cross edge is incident to $x$, then the other endpoint is
      in $P$ as $N(x) \cap X_2(S) = \emptyset$. If $x$ has some cross
      edge to $2k$ different branches, then $x$ has $k$ neighbours in
      $P$ each with pairwise distance at least $\ell-1$ along $P$,
      which would imply that $\forbiddenGraph$ is a subgraph of $G$.
      Therefore, after removing at
      most $2k-1$ branches, all cross edges have some endpoint in
      $X_1(S)$ or $X_2(S)$.
          
      The graph $G_{X_1}$ has one vertex for every
      branch of $S$ and an edge $(b,b')$ between two distinct
      branches $b$ and $b'$ if a vertex $x' \in
      X_1(S)$ laying on $b$ is adjacent to some vertex in $b'$. If
      $G_{X_1}$  has a matching of size  $k$, then $G$ has
      $\forbiddenGraph$ with centre $x$ as a subgraph. Matching edges indicate $k$
      vertices in $X_1(S)$ with a pair of disjoint paths of length $\ell$,
      the first via its respective branch in $S$ and the second via
      the branch given by the matching. As  $G_{X_1}$ has a maximum
      matching of size at most $k-1$, it has a vertex cover~$R$ of
      size at most $2(k-1)$. By deleting the branches
      indicated by $R$, no cross edge is adjacent to $X_1(S)$ either. 
    \end{claimproof}

\noindent
    If some $x'' \in X_2(S)$ has cross edges to $5k-3$ branches, let
    $S_{x''}$ be that $S_{(5k-3)*[\ell]}$ with centre $x''$ resulting from
    these cross edges. As $X_2(S_{x''}) \subseteq P$ all
    cross edges of $S_{x''}$ have an endpoint in $\{x''\} \cup X_1(S_{x''})$. From \Cref{clm:x'branches},
    removing at most $4k-3$ branches of $S_{x''}$ gives some induced $S_{k*[\ell]}$.
    If each vertex in $X_2(S)$ has a cross edge to at most $5k-4$ other branches
    and $S$ has $k(5k-4)+5k-4$ branches at least $k$ of these have no
    cross edges. That is, if $k' \geq (5k-4)(k+1)$, then there is some induced
    $S_{k*[\ell]}$.
    
    If there are $\gamma k(2k+1)$ disjoint subsets of $P^{\ell+1}$
    of size $(5k-4)^3(k+1)^3$ (recall that if $|A|\geq (k'-1)^3+1$,
    then $G$ has a $S_{k'*[\ell]}$ as a subgraph) there must be $\gamma k(2k+1)$
    induced stars, $S_1, \cdots, S_{\gamma(2k+1)}$, as described
    above. It remains to show that at least $\gamma$ among those stars
    are pairwise disjoint and have no edges between each other. Recall
    that every $S_i$ contains at most $2k+1$ vertices that are external to
    the path $P$. No vertex can be contained in or adjacent to $k$ of
    these stars
    as otherwise $G$ contains $\forbiddenGraph$ as a
    subgraph. Therefore, since we are given
    $\gamma k(2k+1)$ stars,
    there must be at least $\gamma$ that are pairwise disjoint and
    have no edges between each other.
  \end{claimproof}

\noindent
  Let $\mathcal{S} = \{S_1, \cdots, S_{\gamma}\}$ be the set of pairwise
  distinct, non-adjacent copies of
  $S_{k*[\ell]}$ in $G$ obtained
  from~\Cref{clm:diam3claim1}. For every $S \in
  \mathcal{S}$, let $x_S$ be the centre of $S$. For $b \in \{1,2\}$, let
  $X_b(S)$ be the set of all vertices in $S$
  of distance~$b$ from $x_S$ in $S$. 
  
  Let $T \subseteq \mathcal{S}$ with $|T| \geq 2k$. 
  For $S \in T$, if there is some bijective mapping between $k$ vertices
  in $X_1(S)$ and $k$ stars in $T$, such that there are disjoint paths between
  these $k$ vertices in $X_1(S)$ and their respective star in $T$ containing no
  edges from $S$, then $G$ contains $\forbiddenGraph$. For any pair $S,S' \in T$, the shortest path between any pair $u \in X_1(S)$ and $v \in X_1(S')$ has length at most $3$ and has at most two vertices not in $V(S) \cup V(S')$. Hence, there is a set $Z(S,T)$ of $2(k-1)$ vertices such that for every $x' \in X_1(S)$ with a neighbour $x'' \in X_2(S)$, the shortest path from $x'$ to each vertex in $X_1(S')$ contains some vertex in $Z(S,T) \cup \{x_S,x''\}$.

  \begin{figure}[b]
    \centering
    \includegraphics[width=0.7\linewidth]{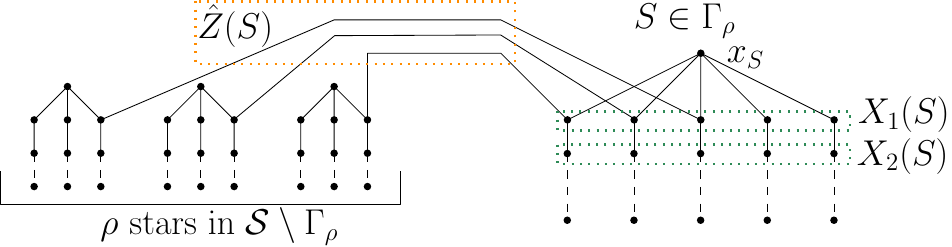}
    \caption{Some $S \in \Gamma_{\rho}$. Boxes in green indicate $X_1(S)$ and $X_2(S)$ and in orange $\hat{Z}(S)$.}\label{fig:gamma-rho}
\end{figure}

  For $\rho \leq k$, let $\Gamma_{\rho} \subseteq \mathcal{S}$ be
  such that for every $S \in \Gamma_{\rho}$ there are $\rho$ vertex
  disjoint shortest paths containing no edges from $S$ each from a
  vertex in $X_1(S)$ to its own star in $\mathcal{S} \setminus
  \Gamma_{\rho}$, i.e. there is a one-to-one mapping between the
  $\rho$ vertices and the $\rho$ stars. See Figure~\ref{fig:gamma-rho}. Note that
  $\Gamma_0=\mathcal{S}$, we claim that that if 
  $|\Gamma_{\rho}| \geq 2^{\rho+2}k^{\rho+2}$, then
  $|\Gamma_{\rho+1}|\geq \frac{|\Gamma_{\rho}|}{2^{\rho+1}k^{\rho+1}}-(k-1)^2$. This implies there is a
  constant $\gamma$ which is a function of $k$ such that if $|\mathcal{S}| = \gamma$, then $|\Gamma_k|\geq 1$. If $|\Gamma_k|\geq 1$ then $G$ contains $\forbiddenGraph$ as a subgraph, a contradiction. Therefore, for the proof of this theorem, it remains only to show that $|\Gamma_{\rho+1}|\geq \frac{|\Gamma_{\rho}|}{2^{\rho+1}k^{\rho+1}}-(k-1)^2$ for every $\rho \in \{0,\dotsc,k-1\}$.
  
  For $S \in \Gamma_{\rho}$, we let $\hat{Z}(S)$ be the set of
  all vertices that lie on the $\rho$ disjoint paths from $X_1(S)$
  to other stars and we denote by $x_{S}'$ an arbitrary vertex in
  $X_1(S)\setminus \hat{Z}(S)$; note that $x_S'$ always exists
  because $k> \rho$. Let $x_{S}''$ denote the vertex in
  $N(x_S') \cap X_2(S)$. For every $S' \in \Gamma_{\rho}$, the shortest
  path from $x_S'$ to $x_{S'}'$ must contain some vertex from
  $Z(S,\Gamma_{\rho}) \cup \{x_S, x_S''\}$. As
  $|Z(S,\Gamma_{\rho}) \cup \{x_S, x_S''\}|\leq 2k$, there exists some
  $z_{S}(\Gamma_{\rho}) \in Z(S,\Gamma_{\rho}) \cup \{x_S, x_S''\}$ that
  lies on at least $\frac{|\Gamma_{\rho}|-1}{2k}$ of these paths.
  Such a $z_{S}(T)$ can also be defined for every
  $T \subseteq \Gamma_{\rho}$ with size at least $2k$.

  \begin{mclaim}\label{claim-z0}
    For every $r \geq k^2$, if $|\Gamma_\rho| \geq r2^{\rho+1}k^{\rho+1}$, then
    there is some $\bar{S} \in \Gamma_\rho$ and $\Gamma_\rho'\subseteq
    \Gamma_\rho$ of size $r$ such that for some
    $z \notin \bigcup_{S' \in \Gamma_\rho'}\hat{Z}(S')$, $z$ lies on
    the shortest path from $x_{\bar{S}}'$ to $x_{S'}'$ for every
    $S' \in \Gamma_\rho'$.
  \end{mclaim}
  \begin{claimproof}
      Say, for contradiction, such set and vertex does not exist.
      We claim there exists a set $Q_{\rho+1} \subseteq \Gamma_\rho$
      of size at least $r$ and $\rho+1$ stars
      $A = \{S_0, \ldots, S_{\rho}\} \subseteq \Gamma_\rho \setminus
      Q_{\rho+1}$ such that the following holds.
      For every $S \in A$ there is a distinct vertex $z$, such that for every
      $S' \in Q_{\rho+1}$, $z \in \hat{Z}(S')$,
      $N(z) \cap V(S') = \emptyset$, and $z$ lies on the shortest path
      from $x_S'$ to $x_{S'}$: a contradiction as for any
      $S \in \Gamma_\rho$ there are at most $\rho$ vertices $\hat{z} \in \hat{Z}(S)$
      with $\hat{z}$ not adjacent to~$S$.

      Consider some arbitrary $S_0 \in \Gamma_{\rho}$ and its respective
      $z_{S_0}(\Gamma_{\rho})$. Let $T_0 \subseteq \Gamma_{\rho}$ be those
      $S \in \Gamma_{\rho}$ such that $z_{S_0}(\Gamma_{\rho})$ lies on the
      shortest path from $x_{S_0}'$ to $x_{S}'$. By assumption there are
      at most $r-1$ different $S \in T_0$ such that
      $z_{S_0}(\Gamma_{\rho}) \notin \hat{Z}(S)$. Given $z_{S_0}(\Gamma_{\rho})$
      has some neighbour in at most $k-1$ different stars in $\mathcal{S}$,
      there must exists some $Q_0 \subseteq T_0$ with size at least
      $|T_0|-(k-1)-r \geq \frac{|\Gamma_{\rho}|-1}{2k}-(k-1)-r$
      such that $z_{S_0}(S) \in \hat{Z}(S)$ and
      $N(z_{S_0}(S)) \cap V(S) = \emptyset$ for every $S \in Q_0$.

      Assume there exists some $Q_{\delta} \subseteq \Gamma_\rho$ and stars $A_{\delta} = \{S_0, \ldots, S_{\delta}\} \subseteq \Gamma_\rho \setminus Q_{\delta-1}$. For every $S \in A_{\delta}$ there is some distinct vertex
      $z$ such that for every $S' \in Q_{\delta}$, $z \in \hat{Z}(S')$;
      $N(z) \cap V(S') = \emptyset$; and $z$ lies on the shortest path
      from $x_S'$ to $x_{S'}$.
      
      Consider some arbitrary $S_{\delta+1} \in Q_{\delta}$ with its respective
      $z_{S_{\delta+1}}(Q_{\delta})$, let $T_{\delta+1} \subseteq Q_{\delta}$
      be those $S \in Q_{\delta}$ such that $z_{S_{\delta+1}}(S_{\delta})$
      lies on the shortest path from $x_{S_{\delta+1}}'$ to $x_{S}'$.
      If there are at most $r$ stars $S \in T_{\delta+1}$ such that
      $z_{S_{\delta+1}}(Q_{\delta}) \notin \hat{Z}(S)$, then there is some
      $Q_{\delta} \subseteq T_{\delta}$ with size at least
      $|T_{\delta}|-(k-1)-r \geq \frac{|Q_{\delta}|-k(r-1)}{2k}$
      such that $z_{S_{\delta+1}}(Q_{\delta}) \in \hat{Z}(S)$ and
      $N(z_{S_{\delta+1}}(Q_{\delta})) \cap V(S) = \emptyset$ for every
      $S \in Q_{\delta}$. If $|\Gamma_\rho| = 2^{\rho}k^{\rho}+(r-1)(\sum_{i=1}^{\rho} 2^{i}k^{i}) \leq r2^{\rho+1}k^{\rho+1}$ (for sufficiently large $r$ and $\rho$), then $|Q_{\rho+1}| \geq r$. Let $\Gamma_\rho' = Q_{\rho+1}$ which concludes the proof of this claim.
  \end{claimproof}

\noindent
    Let $S \in \Gamma_\rho$, $\Gamma_\rho'\subseteq\Gamma_\rho$ and $z$ be those
    obtained from Claim~\ref{claim-z0}. Let $\PPP$ be the set of
    shortest paths containing $z$ from $x_S'$ to $x_{S'}'$ for every
    $S' \in \Gamma_\rho'$. There are at most $k-1$
    intermediate vertices $q$ on the paths in $\PPP$ with $q \in \hat{Z}(S')$ for some
    $S' \in \Gamma_\rho'$. Each such vertex~$q$ is
    adjacent to some star, so $q$ can lie on at most $k-1$ of these paths
    from $z$ to some $S' \in \Gamma_\rho'$. Hence, there is some 
    $\Gamma_{\rho+1} \subseteq \Gamma_{\rho}'$ of size at least
    $|\Gamma_\rho'|-(k-1)^2$ such that for each $S' \in \Gamma_{\rho+1}$,
    the shortest path from $z$ to $x_{S'}'$ has no vertex from
    $\bigcup_{S'' \in \Gamma_\rho'}\hat{Z}(S'')$, thus concluding our proof. \qed
\end{proof}

\subsection{Forbidding a Unicyclic Graph}\label{sec:unicyclic} 
In this section 
we first focus on forbidding cycles and then on graphs containing only one cycle. We obtain a complete picture for diameter $2$ regarding such forbidden subgraph (\Cref{thm:diam2-unicyclic}) while for diameter $3$ we obtain partial results.
That is, for diameter $3$ we show that forbidding $C_4$ or $C_6$ does not bound the treedepth (\Cref{thm:c4} and \Cref{thm:c6}, respectively) while for $C_8$ we prove that the treedepth is bounded (\Cref{thm:C8d3}). We obtain the last result through an involved case distinction.

We recall that Erd{\H{o}}s, R{\'e}nyii and S{\'o}s~\cite{ERTS66} showed how a family of $C_4$-subgraph-free graphs with diameter $2$ can be constructed from a polarity of a projective plane. Making the observation that this family has unbounded minimum degree and so also unbounded treewidth we observed that the class $\mathcal{C}$ of $C_{4}$-subgraph-free graphs of diameter $2$ has unbounded treedepth.

\begin{theorem}[\cite{ERTS66}]
\label{thm:c4}
    The class of $C_{4}$-subgraph-free graphs of diameter $2$ has unbounded treedepth.
\end{theorem}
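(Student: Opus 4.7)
The plan is to exhibit an explicit infinite family of $C_4$-subgraph-free graphs of diameter $2$ whose treedepth grows without bound, and this is precisely what the Erd{\H{o}}s--R{\'e}nyi--S{\'o}s polarity construction provides. For every prime power $q$, the \emph{ER polarity graph} $\mathit{ER}_q$ is defined on the $q^2+q+1$ points of the projective plane $PG(2,q)$ by making two distinct points $p,p'$ adjacent whenever $p$ lies on the polar line of $p'$ (equivalently, $p'$ lies on the polar line of $p$). I would cite~\cite{ERTS66} for the classical facts that $\mathit{ER}_q$ has diameter $2$, contains no $C_4$ as a subgraph, and has minimum degree $q$ (non-absolute points having degree $q+1$ and the $q+1$ absolute points having degree~$q$). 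Letting $q$ range over prime powers then gives an infinite family of $C_4$-subgraph-free graphs of diameter~$2$ with unbounded minimum degree.

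From this it remains only to convert large minimum degree into large treedepth. For this I would invoke the standard greedy observation: if $G$ has minimum degree $\delta$ and $P=(v_0,v_1,\dots,v_\ell)$ is a longest path in $G$, then every neighbour of $v_0$ must already lie on $P$ (otherwise $P$ could be extended), so $\ell \geq \delta$. Applying Fact~\ref{fact:pathLogtd} to $\mathit{ER}_q$ then yields
\[
  \td(\mathit{ER}_q) \;\geq\; \log(\ell) \;\geq\; \log(q),
\]
and the right-hand side is unbounded in $q$. Hence the class of $C_4$-subgraph-free graphs of diameter~$2$ has unbounded treedepth, as claimed.

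The only non-routine ingredient is the existence of the polarity graphs themselves, which is the content of~\cite{ERTS66} and which I would simply quote; the remaining argument (longest-path lower bound and the logarithmic treedepth bound) is elementary. There is no genuine obstacle beyond properly stating the parameters ($n=q^2+q+1$ vertices, minimum degree $\approx \sqrt{n}$), and being careful that the prime powers form an unbounded set so that $q\to\infty$ is legitimate.
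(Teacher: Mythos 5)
Your proposal is correct and follows essentially the same route as the paper: both rely on the Erd\H{o}s--R\'enyi--S\'os polarity graphs as an explicit family of $C_4$-subgraph-free, diameter-$2$ graphs with unbounded minimum degree. The only (immaterial) difference is the final conversion step -- the paper passes through unbounded treewidth via $\delta(G)\leq\tw(G)\leq\td(G)$, whereas you use the longest-path greedy bound together with Fact~\ref{fact:pathLogtd}; both are valid one-line arguments.
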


Considering geometries of higher dimensions an analogous result can be shown for diameter~3.

\begin{theorem}
\label{thm:c6}
    The class $\mathcal{C}$ of $C_{6}$-subgraph-free graphs of diameter $3$ has unbounded treedepth.
\end{theorem}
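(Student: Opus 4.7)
The plan is to construct, for infinitely many prime powers $q$, a $C_6$-subgraph-free graph $G_q$ of diameter at most $3$ with minimum degree at least $q$. Since any graph of minimum degree $\delta$ contains a path of length $\delta$ (by a greedy extension argument), \Cref{fact:pathLogtd} then yields $\td(G_q) \geq \log q$, so the family has unbounded treedepth, as required.

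The construction would mirror the Erd\H{o}s--R\'enyi--S\'os polarity graph underlying \Cref{thm:c4}, but with the projective plane replaced by a generalized quadrangle $\mathcal{Q}$ of order $(q,q)$ admitting a polarity $\pi$; such $\mathcal{Q}$ exist for infinitely many $q$ (for instance, the symplectic quadrangle $W(q)$ for suitable prime powers~$q$). I would define $G_q$ on the point set of $\mathcal{Q}$ by making $p,p'$ adjacent precisely when $p \in \pi(p')$, which is symmetric because $\pi$ is an incidence-preserving involution. Each vertex $p$ has $q$ or $q+1$ neighbours (the points on the line $\pi(p)$, minus $p$ itself when $p$ is absolute), so the minimum degree is at least~$q$. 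Diameter at most~$3$ follows from the generalized-quadrangle axiom: for non-adjacent $p_1, p_2$, the unique line $\ell$ through $p_1$ meeting $\pi(p_2)$ at some point $r$ yields a walk $p_1 \sim \pi(\ell) \sim r \sim p_2$ in $G_q$, where each adjacency is verified using that $\pi$ reverses incidence.

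The main and essentially only delicate step is showing that $G_q$ contains no $C_6$. Given an alleged $C_6 = p_1 p_2 \cdots p_6$ in $G_q$, set $\ell_i := \pi(p_i)$. The six edges translate to $p_{i-1}, p_{i+1} \in \ell_i$ for each $i$; in particular $p_1, p_3 \in \ell_2$, $p_3, p_5 \in \ell_4$, and $p_5, p_1 \in \ell_6$. This produces a closed walk $p_1 - \ell_2 - p_3 - \ell_4 - p_5 - \ell_6 - p_1$ of length~$6$ in the incidence graph of~$\mathcal{Q}$. Since the $p_i$ are pairwise distinct, the lines $\ell_2, \ell_4, \ell_6$ are pairwise distinct (as $\pi$ is a bijection), and points and lines belong to disjoint parts, this walk is in fact a $C_6$ in the incidence graph, contradicting the standard fact that the incidence graph of a generalized quadrangle has girth~$8$. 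The main obstacle I anticipate is simply the bookkeeping needed to cite a concrete infinite family of generalized quadrangles admitting the required polarities (which is standard, e.g.\ in Payne and Thas); the structural part of the argument is short.
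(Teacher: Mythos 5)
Your proposal is correct and follows essentially the same route as the paper: both take the polarity graph of a generalized quadrangle admitting a polarity (existing for $q=p^{2\alpha+1}$, per Tits/Carter), derive $C_6$-freeness by lifting a hypothetical $6$-cycle to a $6$-cycle in the girth-$8$ incidence graph, get diameter $3$ from the incidence graph's diameter, and conclude unbounded width from the minimum degree $q$. The only cosmetic difference is that you pass from minimum degree to a long path and invoke Fact~\ref{fact:pathLogtd}, whereas the paper argues unbounded treewidth directly; both are fine.
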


\begin{proof}
	Let $(\mathcal{P}, \mathcal{L},\mathcal{I})$ be a geometry defined by a set of points $\mathcal{P}$, lines $\mathcal{L}$ and incidence relation $\mathcal{I} \subseteq \mathcal{P} \times \mathcal{L}$. The corresponding instance graph~$G_I$ has vertices $\mathcal{P} \cup \mathcal{L}$ with $E(G_I) = \mathcal{I}$. In particular we consider consider regular generalised $m$-gons, these are finite geometries such that their incidence graph is $r$-regular with diameter $m$ and girth $2m$. In particular we consider where $m=4$, these are called generalised quadrangles.
	A polarity $\pi$ is a bijective function mapping points to lines and lines to points which both an involution and incidence is preserved i.e $\forall p \in \mathcal{P}, \forall l \in \mathcal{L}$ then $(\pi(l), \pi(p)) \in \mathcal{I}$ if and only if  $(p,l) \in \mathcal{I}$. From a finite geometry $(\mathcal{P}, \mathcal{L},\mathcal{I})$ and polarity $\pi$ the polarity graph~$G_{\pi}$ has the set of vertices $\mathcal{P}$ and edges $\{\{p, q\} : p, q \in \mathcal{P}, p \neq q, (p, \pi(q)) \in \mathcal{I}\}$. While such a polarity does not exist for all $(q+1)$-regular $m$-gons, a generalised quadrangles with a polarity exists where $q=p^{2\alpha+1}$, see \cite{Carter89,Tits60}. We denote this family of polarity graphs by $\mathcal{G}_{GQ}$ respectively. As the polarity of a  $(q+1)$-regular $m$-gon has minimum degree $q$,  $\mathcal{G}_{GQ}$ has unbounded treewidth.
	We claim  $\mathcal{G}_{GQ}$ is  $C_{6}$-subgraph-free with diameter $3$, more generally, if $\pi$ is a polarity of some regular generalised $m$-gon $(\mathcal{P}, \mathcal{L},\mathcal{I})$ then its corresponding polarity graph~$G_{\pi}$ is $C_{2(m-1)}$-subgraph-free with diameter $m-1$. If $G_{\pi}$ contained some $C_{2(m-1)}$ with vertices $(v_1,v_2, \ldots, v_{2(m-1)-1}, v_{2(m-1)})$ then $G_I$ contains the cycle with vertices
    $(v_1,\pi(v_2), \ldots, \pi(v_{2(m-1)-1}), v_{2(m-1)})$ as  $G_I$ has girth $m$, $G_{\pi}$ is $C_{2(m-1)}$-subgraph-free.
     As $G_I$ is a bipartite graph of diameter $m$, for any $\ell \in \mathcal{L}$ and $p \in \mathcal{P}$ if $(p,\ell) \notin \mathcal{I}$, there must be some path of length at most~$m-1$ between $p$ and $\ell$. Without loss this path can be given by $p, \ell_1,p_2, \ldots, p_{m-2}, \ell$. Let $u,v$ be a pair of non-adjacent vertices of $G_{\pi}$, we claim there must be a path of length at most~$m-1$ between them. As $(u, \pi(v)) \notin \mathcal{I}$ from above there must exist points and lines forming the path $u, \ell_1,p_2, \ldots, p_{m-2}, \pi(v)$ in $G_I$. This leads to the path $u, \pi(\ell_1),p_2, \ldots, p_{m-2}, v$ in $G_{\pi}$ with length at most~$m-1$. \qed	
\end{proof}

\Cref{thm:inducPathComBi} allows (apart from the exception of $C_4$) to classify for which cycles $\forbiddenGraph$ the class of $\forbiddenGraph$-subgraph-free graphs of bounded diameter $2$ has bounded treedepth. We prove a more general characterization concerning all graphs containing exactly one cycle in the following.

\begin{theorem}
\label{thm:diam2-unicyclic}
    Let $\forbiddenGraph$ be any graph containing exactly one cycle. The class of $\forbiddenGraph$-subgraph-free graphs of diameter at most~$2$ has bounded treedepth if and only if $\forbiddenGraph$ does not contain $C_4$, is bipartite and a subgraph of $P_n \join K_1$ for some large $n\in \mathbb{N}$.
\end{theorem}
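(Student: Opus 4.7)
The forward direction is immediate from earlier results. If $\forbiddenGraph$ contains $C_4$ as a subgraph, then by \Cref{thm:c4} the class of $C_4$-subgraph-free graphs of diameter at most $2$ already has unbounded treedepth, and hence so does the superclass of $\forbiddenGraph$-subgraph-free graphs of diameter at most $2$. If $\forbiddenGraph$ is non-bipartite, or is bipartite but is not a subgraph of $P_n\join K_1$ for any $n$, then \Cref{obs:canonicalUnboundedExamples} directly provides $\forbiddenGraph$-subgraph-free graphs of diameter $2$ with unbounded treedepth, so all three conditions are necessary.

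For the backward direction, suppose $\forbiddenGraph$ is bipartite, $C_4$-subgraph-free and a subgraph of some $P_n\join K_1$. Let $2k$ be the length of the unique cycle of $\forbiddenGraph$; bipartiteness and $C_4$-freeness give $k\geq 3$. Pick $v\in V(\forbiddenGraph)$ with $\forbiddenGraph-v$ a linear forest; since removing $v$ must destroy the unique cycle, $v$ lies on the cycle. Writing $a,b$ for the cycle-neighbours of $v$, the structure of $\forbiddenGraph$ is: the cycle $(v,a,x_1,\ldots,x_{2k-3},b)$; a pendant path of length $\alpha\geq 0$ at $a$; a pendant path of length $\beta\geq 0$ at $b$; and $t\geq 0$ further pendant paths attached at $v$, each a path with $v$ joined to one of its vertices (either an endpoint or an internal vertex). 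Let $\ell$ bound the lengths of all these pendants. The plan is to show by contradiction that the treedepth of every $\forbiddenGraph$-subgraph-free graph $G$ of diameter at most $2$ is bounded by a function of $k,\alpha,\beta,\ell,t$. Assume that $\td(G)\geq c(|V(\forbiddenGraph)|,|V(\forbiddenGraph)|,L)$ for a sufficiently large $L=L(k,t,\ell)$ to be fixed later, where $c$ is as in \Cref{cor:longInducedPath}. Since $\forbiddenGraph$ is bipartite on at most $|V(\forbiddenGraph)|$ vertices, $\forbiddenGraph\subseteq K_{|V(\forbiddenGraph)|,|V(\forbiddenGraph)|}$, and the $\forbiddenGraph$-subgraph-freeness of $G$ excludes this biclique as a subgraph; \Cref{cor:longInducedPath} then yields an induced path $P=(p_0,\ldots,p_L)$ in $G$.

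The diameter-$2$ assumption together with the inducedness of $P$ forces every pair $p_i,p_j$ with $|i-j|\geq 3$ to share a common neighbour outside $V(P)$. For indices $i$ and $j=i+2k-2$ chosen so that the segments $p_{i-\alpha},\ldots,p_{i-1}$ and $p_{j+1},\ldots,p_{j+\beta}$ lie on $P$, any such common neighbour $w\notin V(P)$ together with $p_i,\ldots,p_j$ forms a $C_{2k}$ in $G$. We map $v\mapsto w$, $a\mapsto p_i$, $b\mapsto p_j$, $x_h\mapsto p_{i+h}$, and use the two $P$-segments as the pendants at $a$ and $b$. What remains is to embed the $t$ pendant paths attached to $v$ disjointly from everything already placed.

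The main obstacle is the construction of these $t$ pendants at $w$. My plan is to pick positions $r_1<\cdots<r_t$ on $P$ far to the right of $j+\beta$, pairwise separated by more than $2\ell+3$. For each $r_h$, the diameter-$2$ condition gives either a direct edge $wp_{r_h}$, in which case a segment of $P$ around $p_{r_h}$ serves as the $h$-th pendant (extended in one or two directions depending on whether the attachment in $\forbiddenGraph$ is at an endpoint or internal), or a common neighbour $z_h$ of $w$ and $p_{r_h}$ that must lie outside $V(P)$ when $r_h$ is far from all other indices used, in which case $z_h$ together with a segment of $P$ around $p_{r_h}$ provides the pendant. If all the $z_h$'s used are distinct, the $t$ pendants are pairwise disjoint and we have embedded $\forbiddenGraph$ into $G$. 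The remaining case is that many $z_h$'s coincide to a single vertex $z$; this forces $z$ to have many neighbours on $P$ spread over a wide range, and I would then recentre the embedding by using $z$ itself in the role of $v$ (taking two of $z$'s $P$-neighbours at distance $2k-2$ apart as the images of $a,b$, the surrounding $P$-segments as the $a,b$-pendants, and $w$ together with further $P$-neighbours of $z$ as the pendants at $z$). A pigeon-hole argument over the possible identifications among the $z_h$'s shows that, provided $L$ is chosen large enough in terms of $k,t$ and $\ell$, one of these two strategies always succeeds, yielding $\forbiddenGraph\subseteq G$ and the desired contradiction. \qed
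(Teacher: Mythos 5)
Your forward direction and structural decomposition of $\forbiddenGraph$ are correct and match the paper, and the first half of your backward direction (a long induced path via \Cref{cor:longInducedPath}, a $C_{2k}$ through an off-path common neighbour $w$ of $p_i$ and $p_{i+2k-2}$, $P$-segments as the two cycle-pendants) is sound. The argument breaks down, however, exactly where the real difficulty of this theorem sits: the coincidence case. When many of the auxiliary vertices $z_h$ coincide in a single vertex $z$, you propose to recentre at $z$ and close the new cycle by ``taking two of $z$'s $P$-neighbours at distance $2k-2$ apart''. But the only $P$-neighbours of $z$ that you control are the vertices $p_{r_h}$, which you deliberately placed more than $2\ell+3$ apart; nothing forces $z$ to have two neighbours on $P$ at distance exactly $2k-2$, so the recentred cycle need not exist, and no choice of $L$ repairs this. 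This is precisely the obstacle the paper's proof is organised around: it first extracts a vertex $x$ with many widely separated neighbours on $P$, and then closes a $C_{2k}$ through $x$ by routing it through a \emph{second} auxiliary vertex $y$, namely a common neighbour of $p_{i+2}$ and $p_{i'+2k-6}$ for two neighbours $p_i,p_{i'}$ of $x$; to make this legitimate it must first show that $x$ itself is not adjacent to $p_{i+2}$ (so that $y\neq x$) and that the resulting cycle misses all the pendant segments. Some argument of this kind is unavoidable in your case 2 and is entirely missing from the proposal.

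A second, smaller gap concerns pendants of $\forbiddenGraph$ that are attached to $v$ at an \emph{internal} vertex of their path. In your off-path sub-case the only vertex adjacent to $w$ near position $r_h$ is $z_h$, so $z_h$ must be the attachment point; but the only guaranteed neighbours of $z_h$ are $w$ and $p_{r_h}$, so $z_h$ can only serve as an \emph{endpoint} of the path $z_h,p_{r_h},p_{r_h\pm 1},\ldots$, never as an internal vertex. The paper sidesteps this by assembling each pendant from \emph{two} separated $P$-segments linked by two auxiliary vertices ($y_i$ and $z_{ij}$ in its first claim), so that the attachment vertex has path material on both sides; in its second half the attachment vertex is a $P$-vertex $p_j\in X$ sitting in the middle of a segment, which also works. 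You flag the endpoint/internal distinction for the direct-edge case but leave it unresolved in the off-path case, so the embedding of $\forbiddenGraph$ is not completed for general pendant attachments.
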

\begin{proof}
    First note that the forward direction follows directly from \Cref{obs:canonicalUnboundedExamples} and \Cref{thm:c4}.
    
    Any graph $\forbiddenGraph$ that contains exactly one cycle of even length larger than $4$ that is a subgraph of $P_n \join K_1$ for some $n\in \mathbb{N}$ can be constructed by taking a single vertex $v$, and $k$ paths of lengths at most~$\ell$, making $v$ adjacent to one vertex on each path and choosing one paths for which $v$ has a second neighbour of distance $m-2$ from the first for some integers $k, \ell \geq 0$ and even $m>4$.  We claim that any class of $\forbiddenGraph$-subgraph-free graphs of diameter at most~$2$ has treedepth $< c(k(\ell+1)+1,k(\ell+1),2\ell(k+1)^2$.
    
    Towards a contradiction assume that there exists some graph $\forbiddenGraph$-subgraph-free graph, $G$, with diameter $2$ and treedepth at least $c(k(\ell+1)+1,k(\ell+1),2\ell(k+1)^2)$. As $K_{k(\ell+1)+1,k(\ell+1)}$ contains $\forbiddenGraph$, $G$ must contain some induced path $P = (p_0,\cdots, p_{2\ell(k+1)^2})$ by \Cref{cor:longInducedPath}. Let $P^\ell = \{ p_i: i\equiv 0 \mod (2\ell), i\not=0\}$ and note that $|P^\ell| = (k+1)^2$.
    
    \begin{mclaim}
        There exists some vertex of $G$ with at least $k+1$ neighbours in $P^\ell$.
    \end{mclaim}
    \begin{claimproof}
        We proof this by contradiction and hence assume that every vertex of $G$ has at most $k$ neighbours in $P^\ell$.
        Let $x$ be the common neighbour of the pair $(p_{\ell}, p_{\ell+m-2})$. For any vertex $p_i\in P^\ell$ we define a vertex $y_i$ which is a neighbour of $x$ and for any distinct vertices  $p_i,p_j \in P^{\ell}$ we define a vertex  $z_{ij}$ and a path $Q_{ij}$ of length at least $2\ell-1$ with middle vertex $y_i$  as follows.  Note that there  is a path of length at most $2$ from $p_i$ to $x$. We set $y_i$ to be $p_i$ in case $x$ is adjacent to $p_i$ or we choose $y_i$ to be the common neighbour of $x$ and $p_i$.  There is also a path of length at most $2$ from $y_i$ to $p_j$. We set $z_{ij}$ to be either $p_j$ if $y_i$ is adjacent to $p_j$ or to be the common vertex of $y_i$ and $p_j$. Notice that $y_i\not=z_{ij}$ and both $y_i$ and $z_{ij}$ are adjacent to some vertex in $P^{\ell}$. We can choose $Q_{i,j}$ fitting the criteria containing vertices from $\{p_{i},\dots, p_{i+\ell}), y_i,z_{i,j},p_j,\dots, p_{j+\ell}\}$. Given that no vertex is adjacent to $k+1$ vertices in $P^{\ell}$, there are at least $| P^{\ell} |-2k$ vertices $p_{i'}\in P^\ell$ such that $y_{i'} \neq y_i$ and $y_{i'} \neq z_{ij}$. Furthermore, there are at least $| P^{\ell} |-2k-1$ vertices $p_{j'}\not=p_{i'}$ such that $z_{i'j'} \neq y_i$, $z_{i'j'} \neq z_{ij}$ and $Q_{ij}$ and $Q_{i'j'}$ are disjoint. 
        Inductively, we obtain that there must be at least $k-1$ pairs $p_i,p_j\in P^\ell$ for which $Q_{ij}$ are pairwise disjoint and are disjoint from $(p_0,\dots, p_{2\ell})$ as $|P^\ell|\geq (k+1)^2$. Hence, we obtain $\forbiddenGraph$ as a subgraph with high degree vertex $x$.   
    \end{claimproof}

Let $x$ be some vertex in $G$ with at least $k+1$ neighbours in $P^\ell$ and $X$ be the set of neighbours of $x$ in $S^\ell$. If $p_i \in X$ then $p_{i+m-2} \notin X$ else $\forbiddenGraph$ is contained as a subgraph. We claim that $p_{i+2} \notin X$. Assume otherwise. As $p_i$ and $p_{i+m-2}$ must have a common neighbour $x' \neq x$ we obtain $(x',p_{i},x,p_{i+2},\dots, p_{i+m-2},x')$ as a subgraph. As this cycle of length $m$ contains $x$ and does not overlap with $(p_{j-\ell+1},\dots, p_{j+\ell})$ for any $p_j\in X$, $p_j\not=p_i$. Hence, choose any $p_i,p_{i'} \in X$. By our previous argument, the common neighbour of $p_{i+2}$ and $p_{i'+m-6}$ must have a common neighbour $y\not=x$. But then $(x,p_i,p_{i+1},p_{i+2},y,p_{i'+m-6},\dots, p_{i'},x)$ is a $C_m$ which is disjoint from any $(p_{j-\ell+1},\dots,p_{j+\ell})$ for $p_j\in X$ different from $p_i$ and $p_{i'}$. Hence, $G$ contains a copy of $\forbiddenGraph$ with high degree vertex $x$. \qed
\end{proof}

Finally, we show that forbidding $C_8$ bounds the treedepth even for graphs of diameter $3$.
\begin{theorem}\label{thm:C8d3}
    The class of $C_8$-subgraph-free graphs of diameter $d$ at most~$3$ has bounded treedepth.
\end{theorem}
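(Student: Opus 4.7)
The plan is to argue by contradiction. Assume that for every $t$ there is a $C_8$-subgraph-free graph $G$ of diameter at most $3$ with $\td(G)\geq t$. Since $C_8$ is a subgraph of $K_{4,4}$, every such $G$ is $K_{4,4}$-subgraph-free, and hence by \Cref{cor:longInducedPath} we may assume that $G$ contains an induced path $P=(p_0,p_1,\dots,p_n)$ with $n$ as large as we like. The strategy is then to exploit the interaction between the long induced path $P$, the diameter hypothesis, and the absence of $C_8$ to force either an explicit $C_8$ or a $K_{4,4}$-subgraph (which already contains $C_8$), producing the desired contradiction.

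The heart of the argument is a ``distance dictionary'' for pairs of vertices on $P$: any path from $p_i$ to $p_j$ of length $\ell$ that is internally disjoint from the $P$-subpath between them closes a cycle of length $(j-i)+\ell$, which is a $C_8$ exactly when $(j-i)+\ell=8$. Since $P$ is induced and $G$ has diameter at most $3$, working through the cases $j-i\in\{4,5,6,7\}$ forces, for example: for $j-i=6$, $p_i$ and $p_j$ have no common neighbour at all (those outside $P$ would yield $C_8$, those on $P$ are excluded by $P$ being induced), so $d_G(p_i,p_j)=3$; for $j-i=5$, no length-$3$ witness internally disjoint from $P$ can exist, so either $p_i$ and $p_j$ share a common neighbour outside $P$, or any length-$3$ witness passes through $P$ and reduces to the existence of a common neighbour of $p_i$ and $p_{i+4}$ outside $P$. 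Analogous restrictions for $j-i\in\{4,7\}$ sharply limit the local picture around each window of $P$.

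Next I would use the dictionary to harvest many distance-$2$ witnesses. By pigeonholing, along a long subsequence of indices we may assume that every pair $(p_{i_t},p_{i_t+5})$ (or, in the alternative branch, every pair $(p_{i_t},p_{i_t+4})$) has a common neighbour $x_t\notin V(P)$. Each such $x_t$ is itself adjacency-restricted on $P$: any edge from $x_t$ to $p_k$ with $|k-i_t|=6$ or $|k-(i_t+5)|=6$ immediately closes a $C_8$ via the $P$-subpath, and similarly two witnesses $x_s,x_t$ cannot be adjacent when $|i_s-i_t|=5$. Combining these restrictions with the diameter-$3$ constraint applied to pairs of witnesses, and to pairs (witness, far $P$-vertex), one concentrates many witnesses onto a short window of $P$; a $K_{4,4}$-style pigeon-hole argument, in the spirit of the counting step used in the proofs of \Cref{thm:diam2-unicyclic} and \Cref{lem:subdivStar}, then forces either a $C_8$ built from two witnesses and a $P$-subpath, or a $K_{4,4}$-subgraph, completing the contradiction.

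The principal obstacle is the sheer bulk of the case analysis. A length-$3$ shortcut may lie entirely outside $P$, dip into $P$ at exactly one end, or cross $P$ in its interior; a witness can simultaneously serve as a cycle-closing shortcut for one pair and as an intermediate vertex in the distance-$3$ certificate of another; and every new cycle produced by splicing a shortcut with a sub-path of $P$ must be checked against length $8$. A careful organisation, e.g.\ selecting an evenly spaced subsequence of $P$ whose gaps avoid all the ``bad'' differences and then exhaustively classifying how each shortcut can sit relative to this subsequence, seems unavoidable and is precisely the ``involved case distinction'' that the authors signal.
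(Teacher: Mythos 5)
Your overall strategy coincides with the paper's: assume large treedepth, rule out $K_{4,4}$, extract a long induced path $P$ via \Cref{cor:longInducedPath}, build a ``dictionary'' of the ways two path vertices at distance $4$, $5$ or $6$ along $P$ can be within distance $3$ in $G$ (the paper's ``distance $5$ types'' and ``distance $6$ types''), and finally splice two distance-$2$ witnesses together with a subpath of $P$ to close a $C_8$. The preliminary observations you state are correct: no vertex off $P$ is adjacent to both $p_i$ and $p_{i+6}$, no path $(p_i,x,y,p_{i+5})$ with $x,y\notin V(P)$ exists, and the list of surviving configurations for $j-i=5$ is exactly the paper's three types (you omit the symmetric ``common neighbour of $p_{i+1}$ and $p_{i+5}$'' branch, but that is a slip of enumeration, not of method).

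The genuine gap is that the proof stops where the real work begins. Your concluding step --- ``a $K_{4,4}$-style pigeon-hole argument then forces either a $C_8$ built from two witnesses and a $P$-subpath, or a $K_{4,4}$-subgraph'' --- is asserted, not proved, and it does not go through without first eliminating the competing witness types. Two witnesses of \emph{different} types (say a common neighbour of $(p_i,p_{i+4})$ and a common neighbour of $(p_{i'},p_{i'+5})$) do not combine into an $8$-cycle by any routine pigeonholing; concentrating many witnesses on a short window does not by itself produce a forbidden configuration. What is actually needed, and what constitutes the bulk of the paper's argument, is a chain of structural claims showing that certain adjacency patterns on $P$ are impossible: no vertex adjacent to $\{p_i,p_{i+5},p_{i+7}\}$ or $\{p_i,p_{i+2},p_{i+7}\}$ (\Cref{claim:case057}), none adjacent to $\{p_i,p_{i+1},p_{i+4}\}$ or $\{p_i,p_{i+3},p_{i+4}\}$ (\Cref{claim:case014}), and --- the hardest, proved by a three-level case analysis over the distance types of neighbouring indices --- none adjacent to both $p_i$ and $p_{i+4}$ (\Cref{claim:case04}). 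Only the last claim forces every middle index to have distance $5$ type $1$, i.e.\ a genuine common neighbour of $p_i$ and $p_{i+5}$ off $P$, and only then do two such witnesses at offsets $20$ and $22$ (distinct by \Cref{claim:case057}) yield the explicit $C_8$ $(p_{20},x,p_{25},p_{26},p_{27},y,p_{22},p_{21},p_{20})$. You correctly diagnose that an ``involved case distinction'' is unavoidable, but naming the obstacle is not the same as clearing it: without these claims the final contradiction cannot be reached, so the proposal is a correct plan rather than a proof.
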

\begin{proof}
    Assume $G$ is a $C_8$-subgraph-free graph of diameter at most~$3$ and  treedepth at least $c(4,4,42)$. Note that $G$ cannot contain a large complete bipartite subgraph as $K_{4,4}$ contains $C_8$ as a subgraph. Hence, by \Cref{cor:longInducedPath}, $G$ must contain $P_\ell=(p_0,\dots,p_\ell)$ where $\ell=42$ as an induced subgraph.

    First observe that for any $i\in [\ell-6]$  there cannot be a vertex $x$ not on $P$ which is adjacent to both $p_i$ and $p_{i+6}$ since $G$ is $C_8$-subgraph free. 
    Additionally, for any $i\in [\ell-5]$ there cannot be $x,y$ not on $P$ such that $(p_i,x,y,p_{i+5})$ is a path in $G$ as $(p_{i+5},x,y,p_i,\dots, p_{i+5})$ would yield a $C_8$. 
    We say that $i\in [\ell-5]$ is of 
    \begin{description}
        \item[distance $5$ type $1$] if there is $x$ such that $(p_i,x,p_{i+5})$ is a path in $G$;
        \item[distance $5$ type $2$] if there is $x$ such that $(p_{i+1},x,p_{i+5})$ is a path in $G$;
        \item[distance $5$ type $3$] if there is $x$ such that $(p_i,x,p_{i+4})$ is a path in $G$.
    \end{description}
    Since $G$ has diameter $3$ we have to ensure that the distance between $p_i$ and $p_{i+5}$ is at most~$3$. Therefore, any $i\in [\ell-5]$ has to be either of distance $5$ type $1$, $2$ or $3$.
    
    Similarly, if we consider any two vertices on $P$ of distance $6$ we get the following types. We say that $i\in [2,\ell-7]$ has  
    \begin{description}
        \item[distance $6$ type $1$] There are $x,y$ not on $P$ such that $(p_i,x, y, p_{i+6})$ is a path in $G$;
        \item[distance $6$ type $2$] There is $x$ such that $(p_i,x, p_{i+5})$ is a path in $G$;
        \item[distance $6$ type $3$] There is $x$ such that $(p_i,x, p_{i+7})$ is a path in $G$;
        \item[distance $6$ type $4$] There is $x$ such that $(p_{i-1},x, p_{i+6})$ is a path in $G$;
        \item[distance $6$ type $5$] There is $x$ such that $(p_{i+1},x, p_{i+6})$ is a path in $G$.
    \end{description}
    From our above observation it also follows that every $i\in [2,\ell-7]$ has to be of distance $6$ type $1$, $2$, 3, 4 or 5.

    We now use the types defined above to effectively consider all possible cases of how the neighbourhood (on $P$) of vertices not contained on $P$ can look like. In the following we show three claims forbidding certain configurations. Using the claims below, it is straight forward to prove that $G$ must have bounded treedepth. 
     \begin{mclaim}\label{claim:case057}
         There is no vertex $x\in V(G)$ such that $x$ is adjacent to $p_i$, $p_{i+5}$ and $p_{i+7}$ or $x$ is adjacent to $p_{i}$, $p_{i+2},p_{i+7}$ for some $i\in [10,\ell-10]$.
     \end{mclaim}
     \begin{claimproof}
         Assume the statement is not true and there is $x\in V(G)$ and $i\in [10,\ell-10]$ such that $p_i, p_{i+5},p_{i+7}\in N_G(x)$. The case when $p_i, p_{i+2},p_{i+7}\in N_G(x)$ is symmetric. We show that $G$ must contain $C_8$, a contradiction. 
         Note that  $p_{i-1},p_{i+1},p_{i+6}\notin N_G(x)$ as otherwise $(x,v_{i-1},\dots,v_{i+5},x)$ or $(x,v_{i+1},\dots, v_{i+7},x)$ or $(x, v_i,\dots, v_{i+6},x)$ is a $C_8$ in $G$.

         First, assume that $i+1$ has distance $5$ type $1$. As $p_{i+1}$ is not adjacent to $x$, there exists a vertex $y\not= x$ not on $P$ such that $(p_{i+1},y,p_{i+6})$ is a path in $G$. In this case we have $p_{i+4}\notin N_G(y)$ as otherwise $(x,p_i,p_{i+1},y,p_{i+4},\dots, p_{i+7},x)$ is a $C_8$ in $G$; $p_{i+4}\notin N_G(x)$ as otherwise $(y,p_{i+1},\dots, p_{i+4},x,p_{i+5},p_{i+6},y)$ is a $C_8$ in $G$; $p_{i-2}\notin N_G(x)$ as otherwise $(x,p_{i-2}, \dots, p_{i+1},y,p_{i+6},p_{i+7},x)$ is a $C_8$ in $G$.
         Now consider the distance $6$ type of $i-2$ (see \Cref{fig:claim057case1} for an illustration of the different cases). If $i-2$ has distance $6$ type $1$, there are $z_1,z_2$ not on $P$ and different from $x$ (as $p_{i-2}\notin N_G(x)$ and $p_{i+4}\notin N_G(x)$) such that $(p_{i-2},z_1,z_2,p_{i+4})$ is a path in $G$. Then $(p_{i-2},z_1,z_2,p_{i+4}, p_{i+5},x,p_i,p_{i-1}, p_{i-2})$ is a $C_8$ in $G$. In case $i-2$ has distance $6$ type $2$, there is $z\not= x$ (as $p_{i-2}\notin N_G(x)$) such that $(p_{i-2},z, p_{i+3})$ is a path in $G$. Then $(p_{i-2},z, p_{i+3}, p_{i+4}, p_{i+5},x,p_i,p_{i-1}, p_{i-2})$ is a $C_8$ in $G$. If $i-2$ has distance $6$ type $3$ then there is $z\not=x$ (as $p_{i-2}\notin N_G(x)$) such that $(p_{i-2},z, p_{i+5})$ is a path in $G$. Then $(p_{i-2},z, p_{i+5}, p_{i+6}, p_{i+7},x,p_i,p_{i-1}, p_{i-2})$ is a $C_8$ in $G$. If $i-2$ has distance $6$ type $4$ then there is $z\not=x$ (as $p_{i+4}\notin N_G(x)$) such that $(p_{i-3},z, p_{i+4})$ is a path in $G$. Then $(p_{i-3},z, p_{i+4}, p_{i+5},x,p_i,\dots, p_{i-3})$ is a $C_8$ in $G$. If $i-2$ has distance $6$ type $5$ then there is $z\not=y$ (as $p_{i+4}\notin N_G(y)$) such that $(p_{i-1},z, p_{i+4})$ is a path in $G$. Then $(p_{i-1},z, p_{i+4}, p_{i+5}, p_{i+6},y,p_{i+1},p_i, p_{i-1})$ is a $C_8$ in $G$. 
         \begin{figure}
            \centering
            \includegraphics[scale = 0.63]{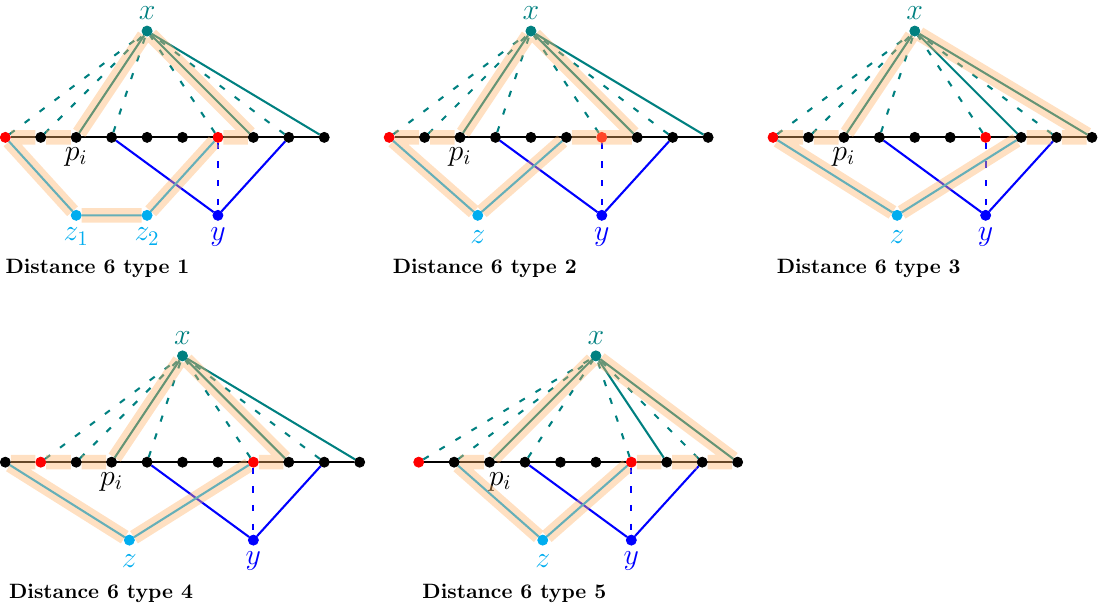}
            \caption{The five different distance $6$ types of $i-2$ in the case when $i+1$ has distance $5$ type $1$ in the proof of \Cref{claim:case057}.}
            \label{fig:claim057case1}
        \end{figure}

         Next, in case $i+1$ has distance $5$ type $2$, there is $y\not=x$ (as $p_{i+6}\notin N_G(x)$) such that $(p_{i+2},y,p_{i+6})$ is a path in $G$ and hence  $(y,p_{i+2}, \dots, p_{i+5},x,p_{i+7},p_{i+6},y)$ is a $C_8$ in $G$.

         Finally, assume that $i+1$ has distance $5$ type $3$. As $p_{i+1}\notin N_G(x)$, there is $y\not=x$ such that $(p_{i+1},y,p_{i+5})$ is a path in $G$. Note that in this case $p_{i+2}\notin N_G(x)$ as otherwise $(p_{i+2},y,p_{i+5},p_{i+6}, p_{i+7},x,p_i,p_{i+1},p_{i+2})$ is a $C_8$ in $G$ and $p_{i+8}\notin N_G(x)$ as in this case $(p_{i+8},x,p_i,p_{i+1},y,p_{i+5},\dots, p_{i+8})$ is a $C_8$ in $G$. 
         We now consider the distance $6$ type of $i+2$ (see \Cref{fig:claim057case1} for an illustration of the different cases). If $i+2$ has distance $6$ type $1$, then there are $z_1,z_2$ not on $P$ and different from $x$ (as $p_{i+2}\notin N_G(x)$ and $p_{i+8}\notin N_G(x)$) such that $(p_{i+2},z_1,z_2,p_{i+8})$ is a path in $G$. Then $(p_{i+2},z_1,z_2,p_{i+8}, p_{i+7},x,p_i,p_{i+1},p_{i+2})$ is a $C_8$ in $G$. In case $i+2$ has distance $6$ type $2$, there is $z\not=x$ (as $p_{i+2}\notin N_G(x)$) such that $(p_{i+2},z,p_{i+7})$ is a path in $G$. Then $(p_{i+2},z,p_{i+7}, p_{i+6}, p_{i+5},x,p_i,p_{i+1},p_{i+2})$ is a $C_8$ in $G$. In case $i+2$ has distance $6$ type $3$, there is $z\not=x$ (as $p_{i+2}\notin N_G(x)$) such that $(p_{i+2},z,p_{i+9})$ is a path in $G$. Then $(p_{i+2},z,p_{i+9},p_{i+8},p_{i+7},x,p_i,p_{i+1},p_{i+2})$ is a $C_8$ in $G$. If $i+2$ has distance $6$ type $4$, there is $z\not=x$ (as $p_{i+8}\notin N_G(x)$) such that $(p_{i+1},z,p_{i+8})$ is a path in $G$. Then $(p_{i+1},z,p_{i+8},\dots, p_{i+5},x,p_i,p_{i+1})$ is a $C_8$ in $G$. Finally, if $i+2$ has distance $6$ type $5$, there is $z\not=x$ (as $p_{i+8}\notin N_G(x)$) such that $(p_{i+3},z,p_{i+8})$ is a path in $G$. Then $(p_{i+3},z,p_{i+8},p_{i+7},x,p_i,\dots, p_{i+3})$ is a $C_8$ in $G$. We conclude that the claim holds.
         \begin{figure}
            \centering
            \includegraphics[scale = 0.65]{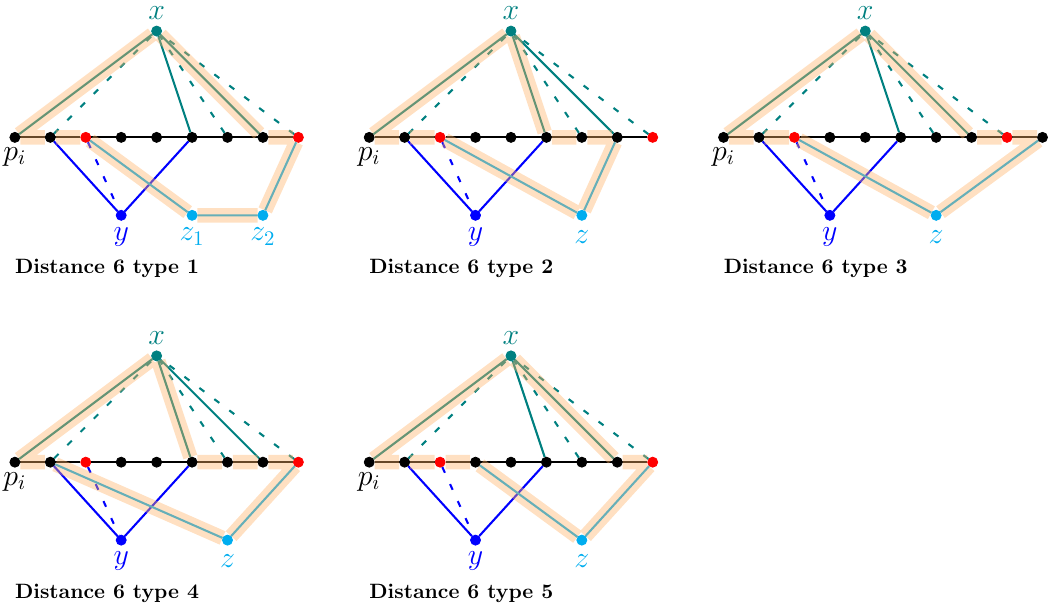}
            \caption{The five different distance $6$ types of $i+2$ in the case when $i+1$ has distance 5 type $3$ in the proof of \Cref{claim:case057}.}
            \label{fig:claim057case2}
        \end{figure}
     \end{claimproof}
     
    \begin{mclaim}\label{claim:case014}
        There is no vertex $x\in V(G)$ such that $x$ is adjacent to $p_i$, $p_{i+1}$ and $p_{i+4}$ or $x$ is adjacent to $p_i$, $p_{i+3}$ and $p_{i+4}$ for some $i\in [10,\ell-10]$.
    \end{mclaim}
    \begin{claimproof}
        Assume the claim is false and there is $x\in V(G)$ and $i\in [10,\ell-10]$ such that $p_i, p_{i+1},p_{i+4}\in N_G(x)$. The case where $p_i, p_{i+3},p_{i+4}\in N_G(x)$ is symmetric. To avoid $C_8$ we directly obtain that $p_{i-2},p_{i+6},p_{i+7}\notin N_G(x)$. 
        
        First, assume that $p_{i-1}\notin N_G(x)$. 
        As $G$ has diameter at most~$3$ there must be a path~$Q$ from $p_{i-1}$ and $p_{i+6}$ of length at most~$3$. As $p_{i-1},p_{i+6}\notin N_G(x)$ the path~$Q$ cannot contain $x$ (both inner vertices are adjacent to $p_{i-1}$ or $p_{i+6}$). As $P$ is induced, at least one vertex of $Q$ is not contained in $P$. Hence, the union of $P$ and $Q$ must contain a cycle $C$ of length 8,9 or 10. In each case, $C$ contains $(p_i,\dots,p_{i+4})$ as a subpath. If $C$ has length 9, then replacing $(p_{i+1},\dots, p_{i+4})$ by $(p_{i+1},x,p_{i+4})$  yields a $C_8$. On the other hand, if $C$ has length 10, then replacing $(p_i,\dots, p_{i+4})$ by $(p_i,x,p_{i+4})$  yields a $C_8$.
        
        On the other hand, in the case that $p_{i-1}\in N_G(x)$ there must be a path~$Q$ from $p_{i-2}$ and $p_{i+6}$ of length at most~$3$. As $p_{i-2},p_{i+6}\notin N_G(x)$ the path~$Q$ cannot contain $x$. Additionally, as $P$ is induced, at least one vertex of $Q$ is not contained in $P$. Therefore, the union of $P$ and $Q$ must contain a cycle $C$ of length 9,10 or 11. Note that $C$ must contain $(p_{i-1},\dots,p_{i+4})$ as a subpath. If $C$ has length 9, then replacing $(p_{i+1},\dots, p_{i+4})$ by $(p_{i+1},x,p_{i+4})$  yields a $C_8$. If $C$ has length 10, then replacing $(p_i,\dots, p_{i+4})$ by $(p_i,x,p_{i+4})$  yields a $C_8$. Finally, if $C$ has length 11, then replacing $(p_{i-1},\dots, p_{i+4})$ by $(p_{i-1},x,p_{i+4})$ yields a $C_8$. We conclude that the claimed must be true.
    \end{claimproof}
    
    \begin{mclaim}\label{claim:case04}
        There is no vertex $v\in V(G)$ such that $v$ is adjacent to $p_i$ and $p_{i+4}$ for some $i\in [20,\ell-20]$.
    \end{mclaim}
    \begin{claimproof}
        Assume the statement is not true and there is $v\in V(G)$ and $i\in [20,\ell-20]$ such that $p_i,p_{i+4}\in N_G(v)$. Note that to avoid $C_8$ we get that $p_{i-2},p_{i+6}\notin N_G(v)$ and additionally $p_{i+1},p_{i+3}\notin N_G(v)$ by \Cref{claim:case014}. \\

        \noindent\textbf{Case 1:} First assume that $i-3$ has distance $5$ type $1$. Hence, there is $w$ such that $(p_{i-3}, w, p_{i+2})$ is a path in $G$. Note that $w\not=v$ as a consequence of \Cref{claim:case057}.  Furthermore, to avoid $C_8$ we have that $p_{i-4},p_{i+3}\notin N_G(w)$ and $p_{i-5},p_{i+4}\notin N_G(w)$ by \Cref{claim:case057}. Additionally, $p_{i-2}\notin N_G(w)$ as otherwise $(p_{i-2},w,p_{i+2}, p_{i+3},p_{i+4},v,p_{i}, p_{i-1},p_{i-2})$ is a $C_8$ in $G$.
        We now consider the distance $6$ type of $i-2$.

        \noindent\textbf{Case 1a:} First consider $i-2$ has distance $6$ type $1$. In this case there  are $x_1,x_2$ not on $P$ and different from $w$ (as $p_{i-2},p_{i+4}\notin N_G(w)$) such that $(p_{i-2},x_1,x_2,p_{i+4})$ is a path in $G$. In this case $(p_{i-2},x_1,x_2,p_{i+4},p_{i+3},p_{i+2},w,p_{i-3},p_{i-2})$ is a $C_8$ in $G$.

        \begin{figure}
            \centering
            \includegraphics[scale = 0.63]{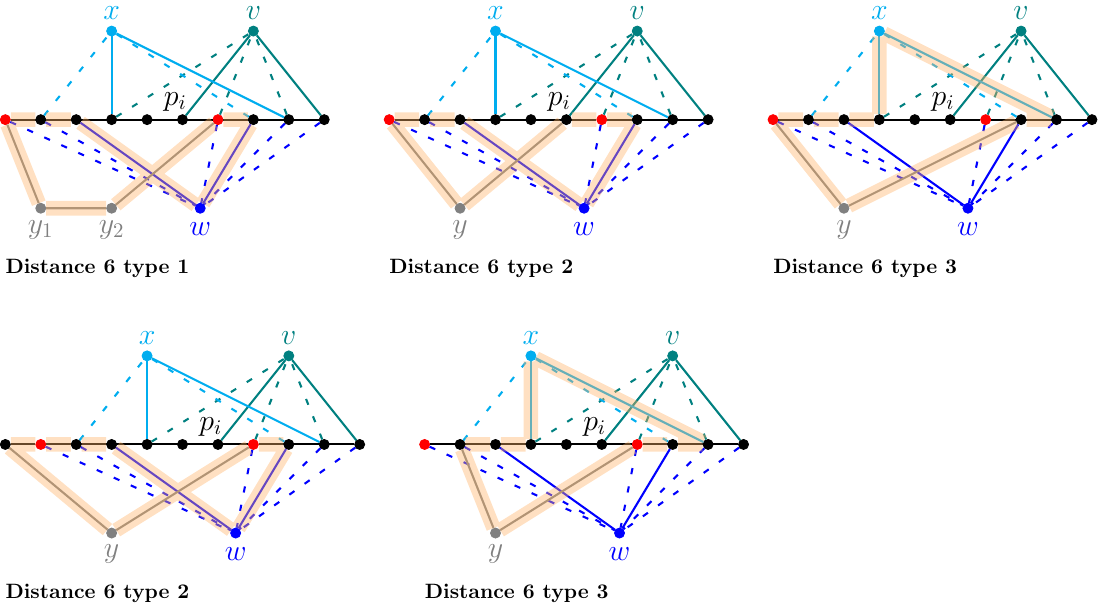}
            \caption{The five different distance $6$ types of $i-5$ in the case that $i-3$ has distance $5$ type $1$ and $i-2$ has distance $6$ type $2$ in the proof of \Cref{claim:case04}.}
            \label{fig:claim04case1}
        \end{figure}
        \noindent\textbf{Case 1b:} Next, assume $i-2$ has distance $6$ type $2$. In this case there is $x\not=v$, $x\not=w$ (as $p_{i-2}\notin N_G(v)$ and $p_{i-2}\notin N_G(w)$) such that $(p_{i-2},x,p_{i+3})$ is a path in $G$. Note that $p_{i-4}\notin N_G(x)$ by \Cref{claim:case057} and $p_{i+2}\notin N_G(x)$ as otherwise $(p_{i-2},x, p_{i+2},p_{i+3},p_{i+4},v, p_{i},p_{i-1},p_{i-2})$ is a $C_8$ in $G$. Furthermore, in this case $p_{i+1}\notin N_G(w)$ as otherwise $(x, p_{i-2},\dots, p_{i+1},w,p_{i+2},p_{i+3},x)$ is a (non-induced) $C_8$ in $G$.   
        We now consider the distance $6$ type of $i-5$ (see \Cref{fig:claim04case1} for an illustration of the different cases). 
        First assume $i-5$ has distance $6$ type $1$. In this case, there are $y_1,y_2$ not on $P$ and different from $w$ (as $p_{i-5},p_{i+1}\notin N_G(w)$) such that $(p_{i-5},y_1,y_2,p_{i+1})$ is a path in $G$. Then $(p_{i-5},y_1,y_2,p_{i+1},p_{i+2},w,p_{i-3},p_{i-4},p_{i-5})$ is a $C_8$ in $G$. 
        Hence, consider that $i-5$ has distance $6$ type $2$. Then there is $y\not= w$ (as $p_{i-5}\notin N_G(w)$) such that $(p_{i-5},y,p_{i})$ is a path in $G$. Then $(p_{i-5},y,p_{i},p_{i+1},p_{i+2},w,p_{i-3},p_{i-4},p_{i-5})$ is a $C_8$ in $G$. 
        Next, consider $i-5$ has distance $6$ type $3$. In this case there is $y\not= x$ (as $p_{i+2}\notin N_G(x)$) such that $(p_{i-5},y,p_{i+2})$ is a path in $G$. Then $(p_{i-5},y,p_{i+2},p_{i+3},x, p_{i-2},\dots, p_{i-5})$ is a $C_8$ in $G$. 
        If $i-5$ has distance $6$ type $4$, then there is $y\not= w$ (as $p_{i+1}\notin N_G(w)$) such that $(p_{i-6},y,p_{i+1})$ is a path in $G$. Then $(p_{i-6},y,p_{i+1},p_{i+2},w,p_{i-3},\dots,p_{i-6})$ is a $C_8$ in $G$. 
        Finally, if $i-5$ has distance $6$ type $5$, there is $y\not= x$ (as $p_{i-4}\notin N_G(x)$) such that $(p_{i-4},y,p_{i+1})$ is a path in $G$. Then $(p_{i-4},y,p_{i+1},p_{i+2},p_{i+3},x, p_{i-2},p_{i-3}, p_{i-4})$ is a $C_8$ in $G$, a contradiction.

        \noindent\textbf{Case 1c:} Next, consider the case that $i-2$ has distance $6$ type $3$. In this case there  is $x\not=w$ (as $p_{i-2}\notin N_G(w)$) such that $(p_{i-2},x,p_{i+5})$ is a path in $G$. In this case $(p_{i-2},x,p_{i+5},\dots,p_{i+2},w,p_{i-3},p_{i-2})$ is a $C_8$ in $G$.

        \begin{figure}
            \centering
            \includegraphics[scale = 0.63]{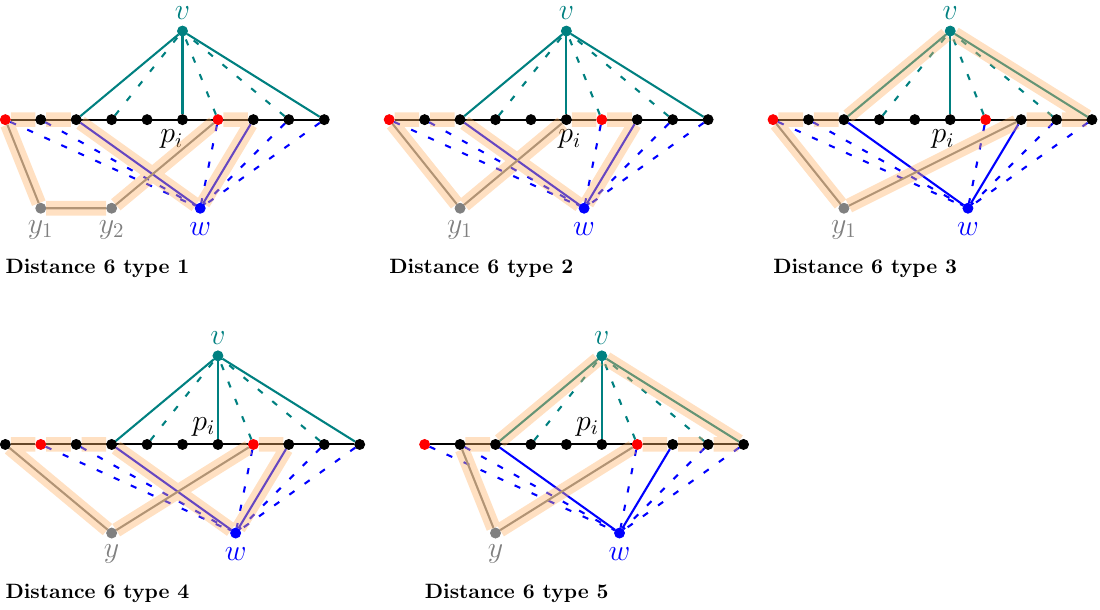}
            \caption{The five different distance $6$ types of $i-5$ in the case that $i-3$ has distance $5$ type $1$ and $i-2$ has distance $6$ type $4$ in the proof of \Cref{claim:case04}.}
            \label{fig:claim04case2}
        \end{figure}
        \noindent\textbf{Case 1d:} In case $i-2$ has distance $6$ type $4$, there is $x\not= w$ (as $p_{i+4}\notin N_G(w)$) such that $(p_{i-3},x, p_{i+4})$ is a path in $G$. First observe that in case $x\not= v$ we obtain $(p_{i-3},x,p_{i+4},v,p_i,p_{i+1},p_{i+2},x,p_{i-3})$ as a $C_8$ in $G$. Hence, $x=v$ and therefore additionally $p_{i-3}\in N_G(v)$. We now consider the distance $6$ type of $i-5$ (see \Cref{fig:claim04case2} for an illustration of the different cases). 
        First assume $i-5$ has distance $5$ type $1$. In this case there are $y_1,y_2$ not on $P$ and different from $v$ (as $p_{i-5},p_{i+1}\notin N_G(v)$, where the former would yields $(p_{i-5},v,p_i,p_{i+1},p_{i+2},w,p_{i-3},p_{i-4},p_{i-5})$ as a $C_8$ in $G$) such that $(p_{i-5},y_1,y_2,p_{i+1})$ is a path in $G$. But then we get $(p_{i-5},y_1,y_2,p_{i+1},p_i,v,p_{i-3},p_{i-4},p_{i-5})$ is a $C_8$ in $G$. Next, assume that $i-5$ has distance $6$ type $2$. In this case there is $y\not=w$ (as $p_{i-5}\notin N_G(w)$) such that $(p_{i-5},y,p_i)$ is a path in $G$. Hence, $(p_{i-5},y,p_i,p_{i+1},p_{i+2},w,p_{i-3},p_{i-4},p_{i-5})$ is a $C_8$ in $G$. 
        Next, consider the case that $i-5$ has distance $6$ type $3$. In this case there is $y\not=v$ (as $p_{i-5}\notin N_G(v)$) such that $(p_{i-5},y,p_{i+2})$ is a path in $G$. But then $(p_{i-5},y,p_{i+2},p_{i+1},p_i,v,p_{i-3},p_{i-4},p_{i-5})$ is a $C_8$ in $G$.
        Assume that $i-5$ has distance $6$ type $4$. Then there is $x\not=v$ (as $p_{i+1}\notin N_G(v)$) such that $(p_{i-6},y,p_{i+1})$ is a path in $G$. Hence, $(p_{i-6},y,p_{i+1}, p_i,v,p_{i-3},\dots, p_{i-6})$ is a $C_8$ in $G$.
        Finally, consider the case that $i-5$ has distance $6$ type $5$. In this case there is $y\not=v$ (as $p_{i+1}\notin N_G(v)$) such that $(p_{i-4},y,p_{i+1})$ is a path in $G$. Then $(p_{i-4},y,p_{i+1},\dots, p_{i+4},v,p_{i-3},p_{i-4})$ is a $C_8$ in $G$.

        \noindent\textbf{Case 1e:} Finally, assume that $i-2$ has distance $6$ type $5$. Hence, there  is $x\not=w$ (as $p_{i+4}\notin N_G(w)$) such that $(p_{i-1},x,p_{i+4})$ is a path in $G$. Then $(p_{i-1},x,p_{i+4},p_{i+3},p_{i+2},w,p_{i-3},p_{i-2},p_{i-1})$ is a $C_8$ in $G$.
        \\

        \noindent\textbf{Case 2:} In the case that $i-3$ has distance $5$ type $2$, there is $x\not=w$ (as $p_{i-2}\notin N_G(w)$) such that $(p_{i-2},x,p_{i+2})$ is a path in $G$. In this case we get that $(p_{i-2},x,p_{i+2},p_{i+3},p_{i+4},w,p_i, p_{i-1},p_{i-2})$ is a $C_8$ in $G$, a contradiction.\\

        \noindent\textbf{Case 3:} It remains to  consider the case that $i-3$ has distance $5$ type $3$. In this case there is $w\not=v$ (as $p_{i+1}\notin N_G(v)$) such that $(p_{i-3},w,p_{i+1})$ is a path in $G$. Note that $p_{i-5},p_{i+3}\notin N_G(w)$ to avoid $C_8$ and $p_{i-2},p_i\notin N_G(w)$ by \Cref{claim:case014}. We consider the distance $5$ type of $i-2$. First observe that in case $i-2$ has distance type $2$, we get $x\not=v$ (as $p_{i-2}\notin N_G(v)$) such that $(p_{i-2},x,p_{i+2})$ is a path in $G$ and hence $(p_{i-2},x,p_{i+2}, p_{i+3},p_{i+4},v,p_i,p_{i-1},p_{i-2})$ is a $C_8$ in $G$. Similarly, in case $i-2$ has distance type $3$ we get $x\not=w$ (as $p_{i+3}\notin N_G(w)$) such that $(p_{i-1},x,p_{i+3}, p_{i+2},p_{i+1},w,p_{i-3},p_{i-2},p_{i-1})$ is a $C_8$ in $G$. Hence, $i-2$ must have distance $5$ type $1$. Therefore, there exists $x\not=v$, $x\not=w$ (as $p_{i-2}\notin N_G(v)$ and $p_{i-2}\notin N_G(w)$) such that $(p_{i-2},x,p_{i+3})$ is a path in $G$. Note that $p_{i-3},p_{i+4}\notin N_G(x)$ to avoid $C_8$ and $p_{i-4},p_{i+5}\notin N_G(x)$ by \Cref{claim:case057}. Furthermore, $p_{i+2}\notin N_G(x)$ as otherwise $(p_{i-2},x,p_{i+2},p_{i+3},p_{i+4},x,p_i,p_{i-1},p_{i-2})$ is a $C_8$ in $G$.
        We now consider the 6 type of $i-4$.  

        \noindent\textbf{Case 3a:} First, assume that $i-4$ has distance $6$ type $1$. In this case there are $y_1,y_2$ not in $P$ and different from $x$ (as $p_{i-4},p_{i+2}\notin N_G(x)$) such that $(p_{i-4},y_1,y_2,p_{i+2})$ is a path in $G$. Then $(p_{i-4},y_1,y_2,p_{i+2},p_{i+3},x,p_{i-2},p_{i-3},p_{i-4})$ is a $C_8$ in $G$.

        \noindent\textbf{Case 3b:} Next, assume that $i-4$ has distance $6$ type $2$. Hence, there is $y\not=x$ (as $p_{i-4}\notin N_G(x)$) such that $(p_{i-4},y,p_{i+1})$ is a path in $G$. In this case  $(p_{i-4},y,p_{i+1},p_{i+2},p_{i+3},x,p_{i-2},p_{i-3},p_{i-4})$ is a $C_8$ in $G$.

        \begin{figure}
            \centering
            \includegraphics[scale = 0.63]{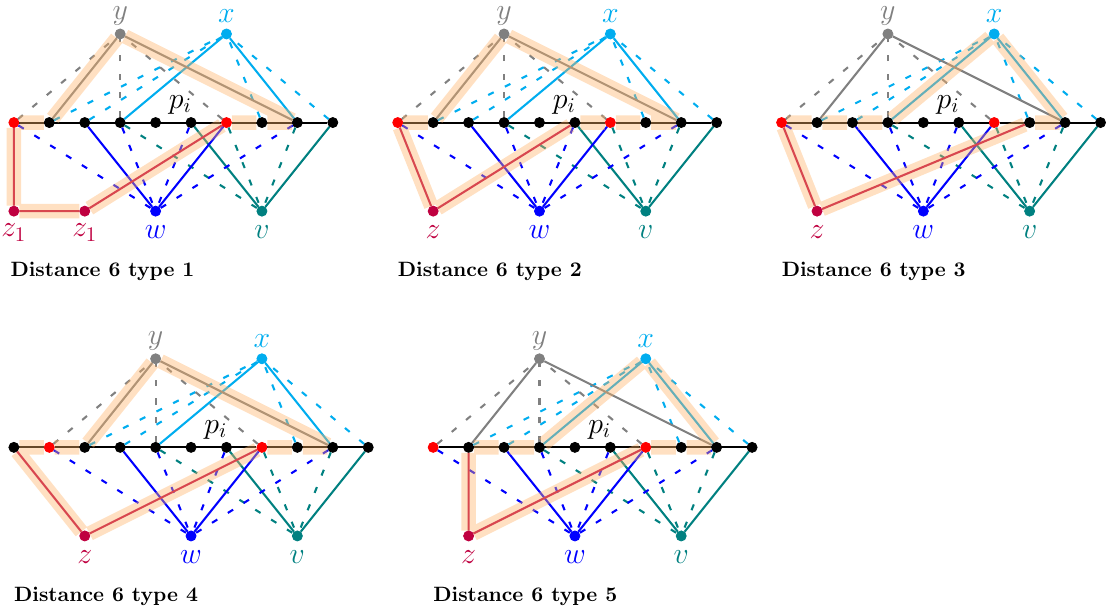}
            \caption{The five different distance $6$ types of $i-5$ in the case that $i-3$ has distance $5$ type $3$ (which implies that $i-2$ has distance $5$ type $1$) and $i-4$ has distance $6$ type $3$ in the proof of \Cref{claim:case04}.}
            \label{fig:claim04case3}
        \end{figure}
        \noindent\textbf{Case 3c:} Consider that $i-4$ has distance $6$ type $3$. Hence, there is $y\not=w$, $y\not=x$ (as $p_{i+3}\notin N_G(w)$ and $p_{i-4}\notin N_G(x)$) such that $(p_{i-4},y,p_{i+3})$ is a path in $G$. Note that $p_{i-2},p_{i+1}\notin N_G(y)$ by \Cref{claim:case057}. Additionally, $p_{i-5}\notin N_G(y)$ as otherwise $(p_{i-5},y,p_{i+3},p_{i+2},p_{i+1},w,p_{i-3},p_{i-4},p_{i-5})$ is a $C_8$ in $G$. We consider the distance $6$ type of $i-5$ (see \Cref{fig:claim04case3} for an illustration of the different cases). First assume that $i-5$ has distance $6$ type $1$. In this case there are $z_1,z_2$ not on $P$ and different from $y$ (as $p_{i-5},p_{i+1}\notin N_G(y)$) such that $(p_{i-5},z_1,z_2,p_{i+1})$ is a path in $G$. Hence,  $(p_{i-5},z_1,z_2,p_{i+1},p_{i+2},p_{i+3},y,p_{i-4},p_{i-5})$ is a $C_8$ in $G$. 
        Similarly, in case $i-5$ has distance $6$ type $2$ there is
        $z\not=y$ (as $p_{i-5}\notin N_G(y)$) such that $(p_{i-5},z,p_{i})$ is a path in $G$. Then $(p_{i-5},z,p_{i},\dots,p_{i+3},y,p_{i-4},p_{i-5})$ is a $C_8$ in $G$. 
        Now assume that $i-5$ has distance $6$ type $3$. Then there is $z\not=x$ (as $p_{i+2}\notin N_G(x)$) such that $(p_{i-5},z,p_{i+2})$ is a path in $G$. Hence, $(p_{i-5},z,p_{i+2},p_{i+3},x,p_{i-2},\dots,p_{i-5})$ is a $C_8$ in $G$.
        In case $i-5$ has distance $6$ type $4$, there is $z\not=y$ (as $p_{i+1}\notin N_G(y)$) such that such that $(p_{i-6},z,p_{i+1})$ is a path in $G$. Then $(p_{i-6},z,p_{i+1},p_{i+2},p_{i+3},y,p_{i-4},p_{i-5},p_{i-6})$ is a $C_8$ in $G$. 
        Finally, assume that $i-5$ has distance $6$ type $5$. Hence, there is $z\not=x$ (as $p_{i-4}\notin N_G(x)$) such that $(p_{i-4},z,p_{i+1})$ is a path in $G$. Then $(p_{i-4},z,p_{i+1},p_{i+2},p_{i+3},x,p_{i-2},p_{i-3},p_{i-4})$ is a $C_8$ in $G$.

        \noindent\textbf{Case 3d:} Next assume that $i-4$ has distance $6$ type $2$. In this case there is $y\not=x$ (as $p_{i+2}\notin N_G(x)$) such that $(p_{i-5},y,p_{i+2})$ is a path in $G$. In this case $(p_{i-5},y,p_{i+2},p_{i+3},x,p_{i-2},\dots,p_{i-5})$ is a $C_8$ in $G$.

        \begin{figure}
            \centering
            \includegraphics[scale = 0.63]{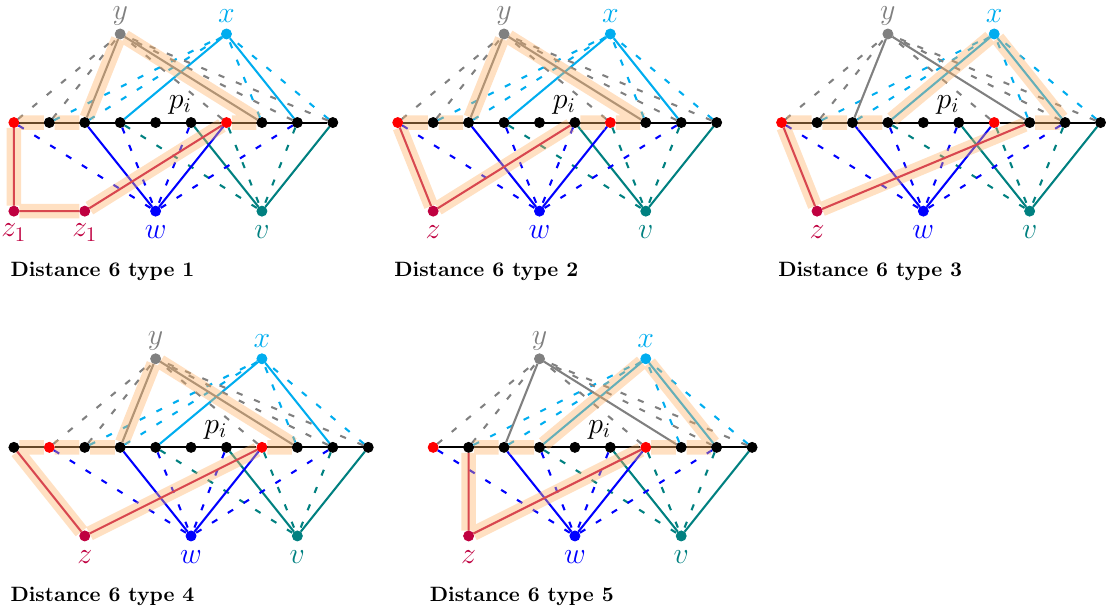}
            \caption{The five different distance $6$ types of $i-5$ in the case that $i-3$ has distance $5$ type $3$ (which implies that $i-2$ has distance $5$ type $1$) and $i-4$ has distance $6$ type $5$ in the proof of \Cref{claim:case04}.}
            \label{fig:claim04case4}
        \end{figure}
        \noindent\textbf{Case 3e:} Finally, consider the case that $i-4$ has distance $6$ type $5$. Hence, there is  $y\not=x$ (as $p_{i+2}\notin N_G(x)$) such that $(p_{i-3},y,p_{i+2})$ is a path in $G$.  To avoid $C_8$ we get that $p_{i-4},p_{i+3}\notin N_G(y)$. Additionally, $p_{i-5},p_{i+4}\notin N_G(y)$ by \Cref{claim:case057} and $p_{i+1}\notin N_G(y)$ as otherwise $(p_{i+1},y,p_{i+2},p_{i+3},x,p_{i-2},\dots,p_{i+1})$ is a (non-induced) $C_8$ in $G$. We consider the distance $6$ type of $i-5$ (see \Cref{fig:claim04case4} for an illustration of the different cases). 
        First assume that $i-5$ has distance $6$ type $1$. In this case there are $z_1,z_2$ not on $P$ and different from $y$ (as $p_{i-5},p_{i+1}\notin N_G(y)$) such that $(p_{i-5},z_1,z_2,p_{i+1})$ is a path in $G$. Hence, $(p_{i-5},z_1,z_2,p_{i+1},p_{i+2},y,p_{i-3},p_{i-4},p_{i-5})$ is a $C_8$ in $G$. 
        Similarly, in case $i-5$ has distance $6$ type $2$ there is
        $z\not=y$ (as $p_{i-5}\notin N_G(y)$) such that $(p_{i-5},z,p_{i})$ is a path in $G$. In this case $(p_{i-5},z,p_{i},\dots,p_{i+2},y,p_{i-3},p_{i-4},p_{i-5})$ is a $C_8$ in $G$. 
        Now assume that $i-5$ has distance $6$ type $3$. Then there is $z\not=x$ (as $p_{i+2}\notin N_G(x)$) such that $(p_{i-5},z,p_{i+2})$ is a path in $G$. Hence, $(p_{i-5},z,p_{i+2},p_{i+3},x,p_{i-2},\dots,p_{i-5})$ is a $C_8$ in $G$.
        In case $i-5$ has distance $6$ type $4$, there is $z\not=y$ (as $p_{i+1}\notin N_G(y)$) such that such that $(p_{i-6},z,p_{i+1})$ is a path in $G$. Then $(p_{i-6},z,p_{i+1},p_{i+2},y,p_{i-3},p_{i-4},p_{i-5},p_{i-6})$ is a $C_8$ in $G$. 
        Finally, assume that $i-5$ has distance $6$ type $5$. Hence, there is $z\not=x$ (as $p_{i-4}\notin N_G(x)$) such that $(p_{i-4},z,p_{i+1})$ is a path in $G$ and $(p_{i-4},z,p_{i+1},p_{i+2},p_{i+3},x,p_{i-2},p_{i-3},p_{i-4})$ is a $C_8$ in $G$. 
        
        As we found a $C_8$ in each possible case contradicting our assumption that $G$ is $C_8$-subgraph free, the claim must be true.
    \end{claimproof}
    
    Consider the distance $5$ type of $20$. By  \Cref{claim:case04} we know that $20$ must have distance $5$ type $1$ and hence there is $x$ such that $(p_{20},x,p_{25})$ is a path in $G$. Now consider the distance $5$ type of 22. Similarly, by \Cref{claim:case04} we know that 22 has distance $5$ type $1$ and hence there is $y$ such that $(p_{22},y,p_{27})$ is a path in $G$. 
    Furthermore, we know that $x\not=y$ by \Cref{claim:case057}.
    Therefore, $(p_20,x,p_{25},p_{26},p_{27},y,p_{22},p_{21},p_{20})$ is a $C_8$ in $G$. We conclude that $G$ has treedepth at most $c(4,4,42)$. \qed
\end{proof}

\subsection{Forbidding a Graph with Multiple Cycles}\label{sec:subgraphmult}
In this section we forbid subgraphs containing multiple cycles. By \Cref{obs:canonicalUnboundedExamples} we only need to consider graphs that are subgraphs of $K_{n,n}$ and $P_n \bowtie K_1$ for some sufficiently large $n$. We identify V-type and E-type graph as natural classes of forbidden subgraphs to study.
Note that for even numbers $\ell_1,\dots,\ell_k$,
both the graph $\sharedVertex{\ell_1,\dots,\ell_k}$ and $\sharedEdge{\ell_1,\dots,\ell_k}$ are subgraphs of both $K_{n,n}$ and $P_n\join K_1$ for some large enough $n$.
We show that for diameter $2$, forbidding a V-type or E-type graph with two arbitrary even length (larger than $4$) cycles  as a subgraph bounds the treedepth (\Cref{thm:diam2-CV24} and \Cref{thm:CEtwoLength}, respectively). Furthermore, forbidding a V-type graph with an arbitrary number of cycles of the same even length (larger than $4$) bounds the treedepth (\Cref{thm:diam2-CVktimesl}) while for equivalent E-type graphs treedepth remains unbounded (\Cref{thm:edgekCl}). Finally, we show that forbidding a V-type graph containing multiple cycles of two fixed even length the treedepth becomes unbounded (\Cref{thm:CVunbounded}).
For diameter $3$ we did not find any examples of forbidden V-type or E-type graphs that bound the treedepth. However, we are able to show that for certain combinations of length forbidding V-type or E-type graphs does not bound treedepth (\Cref{lem:samecyc}). Notably, forbidding a V-type graph containing just two cycles of the same length does not bound treedepth, a clear distinction from diameter $2$.

\begin{theorem}
\label{thm:diam2-CV24}
    For any $\ell_1,\ell_2>2$ the class of all $\sharedVertex{2\ell_1,2\ell_2}$-subgraph-free graphs of diameter at most~$2$ has bounded treedepth.
\end{theorem}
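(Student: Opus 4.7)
The plan is to argue by contradiction. Suppose $G$ is a $\sharedVertex{2\ell_1,2\ell_2}$-subgraph-free graph of diameter at most~$2$ whose treedepth is arbitrarily large. Since $\sharedVertex{2\ell_1,2\ell_2}$ embeds into $K_{N,N}$ for some integer $N=N(\ell_1,\ell_2)$, the graph $G$ is $K_{N,N}$-subgraph-free, so \Cref{cor:longInducedPath} supplies an induced path $P=(p_0,\ldots,p_L)$ of arbitrarily large length $L$. Write $a = 2\ell_1 - 2$ and $b = 2\ell_2 - 2$; since $\ell_1,\ell_2 > 2$ we have $a,b\geq 4$.

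The preparatory observation is that any common neighbour of $p_i$ and $p_j$ with $|i-j|\geq 3$ must lie outside $V(P)$ (otherwise it would be a $P$-neighbour of both, forcing $|i-j|\leq 2$). For each $s$ with $a\leq s\leq L-b$ I set $X_s = N(p_{s-a})\cap N(p_s)$ and $Y_s = N(p_s)\cap N(p_{s+b})$; by the diameter-$2$ hypothesis both are non-empty, and by the observation both are disjoint from $V(P)$. If for some valid $s$ there exist $x\in X_s$ and $y\in Y_s$ with $x\neq y$, then
\[
C^{(1)} = (p_{s-a},p_{s-a+1},\ldots,p_s,x,p_{s-a}) \text{\quad and\quad} C^{(2)} = (p_s,p_{s+1},\ldots,p_{s+b},y,p_s)
\]
are cycles of lengths $a+2=2\ell_1$ and $b+2=2\ell_2$ meeting only in $p_s$: the two $P$-segments share only $p_s$ and $x,y\notin V(P)$ are distinct. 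Thus $C^{(1)}\cup C^{(2)}$ is a subgraph of $G$ isomorphic to $\sharedVertex{2\ell_1,2\ell_2}$, a contradiction.

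In the remaining case we must have $X_s = Y_s = \{w_s\}$ for every valid $s$, where $w_s\notin V(P)$ is adjacent to $p_{s-a}$, $p_s$ and $p_{s+b}$. I next plan to rule out any collision $w_s = w_{s+c}$ with $c>0$: such a vertex $w$ would be adjacent to the six distinct points $p_{s-a},p_s,p_{s+b},p_{s+c-a},p_{s+c},p_{s+c+b}$, and the $a$-pair $\{p_{s-a},p_s\}$ together with the $b$-pair $\{p_{s+c},p_{s+c+b}\}$ lie on the disjoint $P$-intervals $[s-a,s]$ and $[s+c,s+c+b]$ (disjoint since $c>0$). The two cycles through $w$ obtained from these pairs then have lengths $2\ell_1$ and $2\ell_2$ and share only $w$, again yielding $\sharedVertex{2\ell_1,2\ell_2}\subseteq G$. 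Hence all $w_s$ are pairwise distinct, producing $\Theta(L)$ off-$P$ vertices each with a rigid three-vertex adjacency pattern on $P$.

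The final step, which I expect to be the main obstacle, is to convert this rigid structure into a contradiction. The plan is to apply the diameter-$2$ condition to pairs $w_s,w_{s'}$ whose prescribed $P$-neighbourhoods are disjoint (which forces $|s-s'|\notin\{a,b,a+b\}$); each such pair must then be adjacent or share an off-$P$ common neighbour. By tracking these additional edges and auxiliary common neighbours for many well-separated indices, and exploiting the forced uniqueness of the $X_t$ and $Y_t$ singletons at every valid $t$, the aim is to chain a short bypass through some $w_{s'}$ with one of the standard $2\ell_1$- or $2\ell_2$-cycles already produced to obtain two cycles of lengths $2\ell_1$ and $2\ell_2$ sharing exactly one vertex, yielding the forbidden subgraph. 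The delicate point is controlling the overlap of these cycles under the singleton constraint; the hypothesis $a,b\geq 4$ is essential, giving enough slack on $P$ to keep the two intended cycles geometrically separated through the $w$-vertices.
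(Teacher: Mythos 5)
Your setup through the claims that $X_s = Y_s = \{w_s\}$ and that the $w_s$ are pairwise distinct is correct, and it is essentially the same machinery as the paper's proof: a long induced path from \Cref{cor:longInducedPath} together with the observation that two cycles of lengths $2\ell_1$ and $2\ell_2$, hung on suitably spaced path vertices via \emph{distinct} external common neighbours, meet in exactly one path vertex and hence form $\sharedVertex{2\ell_1,2\ell_2}$. The genuine gap is your final step, which you explicitly leave as an unexecuted plan; as written the proof is incomplete, and the route you sketch --- applying the diameter-$2$ condition to pairs $w_s,w_{s'}$ and tracking their common neighbours and bypasses --- is not needed and would be substantially more delicate than the problem requires.

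The missing observation is that the two facts you have already established contradict each other. Apply your step-3 argument once more, but to $Y_s = N(p_s)\cap N(p_{s+b})$ and $X_{s+a+b} = N(p_{s+b})\cap N(p_{s+a+b})$: the $2\ell_2$-cycle through $w_s$ on the interval $[s,s+b]$ and the $2\ell_1$-cycle through $w_{s+a+b}$ on $[s+b,s+a+b]$ meet only in $p_{s+b}$ unless $w_s = w_{s+a+b}$, so this equality is forced --- contradicting the pairwise distinctness from your step with $c=a+b$. Equivalently (and this is in effect how the paper argues), chaining gaps $a,b,a,b$ along $P$ forces a single external vertex $x$ adjacent to $p_0,p_a,p_{a+b},p_{2a+b},p_{2a+2b}$, whereupon the $2\ell_1$-cycle on $[0,a]$ and the $2\ell_2$-cycle on $[2a+b,2a+2b]$ through $x$ share only $x$. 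Either way a path of length about $2a+2b$ suffices and no analysis of adjacencies among the $w_s$ themselves is needed.
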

\begin{proof}
    Let $G$ be some $\sharedVertex{2\ell_1,2\ell_2}$-subgraph-free graphs with diameter at most~$2$, we claim $td(G) < c(2\ell_1+2\ell_2, 2\ell_1+2\ell_2, 4(\ell_1+\ell_2-1)+1)$. Suppose for contradiction $td(G) \geq c(2\ell_1+2\ell_2, 2\ell_1+2\ell_2, 4(\ell_1+\ell_2-1)+1)$. As $G$ cannot contain  $K_{2\ell_1+2\ell_2, 2\ell_1+2\ell_2}$ as a subgraph, Corollary~\ref{cor:longInducedPath} implies $G$ contains an induced path~$P$ of length $4(\ell_1+\ell_2-1)$. Let $P= (p_0, \ldots, p_{4(\ell_1+\ell_2-1)})$.

    As $G$ has diameter at most~$2$, $p_0$ and $p_{2\ell_1-1}$ must have a common neighbour, call this vertex $x$. The common neighbour of $p_{2\ell_1-1}$ and  $p_{2(\ell_1+\ell_2-1)}$ must also be $x$ otherwise $\sharedVertex{2\ell_1,2\ell_2}$ is contained as a subgraph with the pair of cycles of length $2\ell_1$ and $2\ell_2$ sharing the common vertex $x$. This also implies the common neighbours of $p_{2(\ell_1+\ell_2-1)}$ and $p_{4\ell_1+2\ell_2-3}$ as well as  $p_{4\ell_1+2\ell_2-3}$ and $p_{4\ell_1+4\ell_2-4}$ are $x$. However this leads to cycles of length $2\ell_1$ and $2\ell_2$ with a single common vertex $x$. \qed
  \end{proof}

\begin{theorem}  
\label{thm:diam2-CVktimesl}
    For any integers $\ell\geq 3$ and $k\geq 1$ the class of all $\sharedVertex{k*[2\ell]}$-subgraph-free graphs of diameter at most~$2$ has bounded treedepth.
\end{theorem}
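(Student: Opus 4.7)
The plan is to argue by contradiction: assume that $G$ is $\sharedVertex{k*[2\ell]}$-subgraph-free, has diameter at most~$2$, and has very large treedepth. Since $\sharedVertex{k*[2\ell]}$ embeds as a subgraph into $K_{N,N}$ for a suitable $N = N(k,\ell)$ (e.g.\ $N = k\ell+1$), the graph $G$ is $K_{N,N}$-subgraph-free, and Corollary~\ref{cor:longInducedPath} therefore yields an induced path $P = (p_0,\ldots,p_L)$ in $G$ of length~$L$, where $L$ can be driven to any chosen value in terms of $k$ and $\ell$ by pushing the treedepth large enough. The goal is to use $P$ together with the diameter-$2$ assumption to exhibit $\sharedVertex{k*[2\ell]}$ as a subgraph of $G$, contradicting the initial hypothesis.

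Next, I would fix many pairwise vertex-disjoint subpaths $T_1,\ldots,T_m$ of $P$, each of length $2\ell-2$, with $T_j = (p_{a_j},p_{a_j+1},\ldots,p_{a_j+2\ell-2})$ (so $T_j$ has $2\ell-1$ vertices and is closed into a $C_{2\ell}$ by a single extra vertex). Using the diameter-$2$ assumption, the endpoints $p_{a_j}$ and $p_{a_j+2\ell-2}$ admit a common neighbour $x_j$; since $P$ is induced and the distance between these endpoints along $P$ is $2\ell-2\geq 4$, this $x_j$ must lie outside $V(P)$. Each $x_j$ together with $T_j$ produces a cycle $C_j$ of length exactly~$2\ell$ whose intersection with $V(P)$ is exactly $V(T_j)$. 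The critical observation is then: if any single vertex~$x$ coincides with $x_{j_1}=\cdots=x_{j_k}$ for $k$ distinct indices, the corresponding cycles $C_{j_1},\ldots,C_{j_k}$ share only~$x$ (because the $T_{j_i}$ are pairwise vertex-disjoint and $x\notin V(P)$), which means $G$ contains $\sharedVertex{k*[2\ell]}$ as a subgraph, contradicting the assumption. Hence every vertex of $G\setminus V(P)$ can serve as $x_j$ for at most $k-1$ indices.

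The main obstacle is to convert this local constraint into a global contradiction, since the set of candidate common neighbours is a priori unbounded and a naive pigeonhole over the $x_j$'s does not close. To overcome this, I would enrich the setup by considering, for every $i$ in a long contiguous range, the pair $(p_i,p_{i+2\ell-2})$ and the set $I_v := \SB i : v\in N(p_i)\cap N(p_{i+2\ell-2})\SE$ for each $v\notin V(P)$. The exclusion of $\sharedVertex{k*[2\ell]}$ implies $I_v$ contains no $k$ elements at pairwise distance $\geq 2\ell-1$, which by a standard interval greedy argument forces $|I_v|\leq(k-1)(2\ell-1)$, hence the number of distinct witnesses grows linearly in $L$. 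Feeding this abundance back through the diameter-$2$ assumption between pairs of distinct common neighbours $x_j,x_{j'}$ in the spirit of the relay argument used in the proof of Theorem~\ref{thm:diam2-unicyclic}, I would locate a single vertex $x$ that is adjacent to $k$ endpoint-pairs $(p_{a'_i},p_{b'_i})$ with $b'_i-a'_i = 2\ell-2$ and with the intervals $[a'_i,b'_i]$ pairwise disjoint on~$P$; these produce the $k$ pairwise-disjoint-except-at-$x$ copies of $C_{2\ell}$ needed to exhibit $\sharedVertex{k*[2\ell]}$ at $x$. Quantifying this relay argument carefully (choosing $L$ large enough to absorb all the losses) is the hard part of the proof.
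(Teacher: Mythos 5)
Your opening moves coincide with the paper's: you rule out large complete bipartite subgraphs, invoke Corollary~\ref{cor:longInducedPath} to get a long induced path $P$, pair up endpoints of disjoint length-$(2\ell-2)$ subpaths with common neighbours off $P$, and observe that no single vertex may serve as the common neighbour of $k$ such pairs with pairwise disjoint intervals. That observation is exactly where the paper also starts. The problem is that everything after it --- which you yourself flag as ``the hard part'' --- is not actually carried out, and this is where essentially all of the work in the paper's proof lives. Saying that the abundance of distinct witnesses should be fed back through the diameter-$2$ assumption ``in the spirit of the relay argument'' of \Cref{thm:diam2-unicyclic} is not a proof step; the relay in that theorem exploits the single forced even cycle length $m$ in a unicyclic $\forbiddenGraph$ and does not transfer to producing $k$ cycles through one hub.

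There is also a more concrete reason to doubt your stated target. You aim to locate a single vertex $x$ adjacent to $k$ endpoint-pairs $(p_{a_i'},p_{b_i'})$ with $b_i'-a_i'=2\ell-2$ and disjoint intervals, i.e.\ $k$ cycles each consisting of \emph{one} external vertex plus $2\ell-1$ consecutive path vertices. The paper never forces such a rigid configuration, and nothing in your outline shows it can be forced. Instead, the paper's final copy of $\sharedVertex{k*[2\ell]}$ has the \emph{path vertex} $p_0$ as the hub, with $k-1$ of the cycles each using \emph{two} external vertices together with a short run of a far-away segment of $P$ (plus one cycle $(x_0,p_0,\dots,p_{2(\ell-1)})$). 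Reaching that configuration requires two substantial intermediate claims that your proposal does not anticipate: a bound of $(k-1)^2$ on the number of disjoint neighbour-pairs of any vertex at path-distance $2\ell-4$ (proved via ``connector'' vertices joining the midpoints of such pairs), and a bound of $4^k k^{k+1}$ on the number of spread-out path neighbours of any vertex (proved by a $k$-step recursive extraction of auxiliary vertices $x_0,\dots,x_{k-1}$ and nested sets). Only with these in hand can one find, for each of $k-1$ rounds, a fresh segment of $P$ avoiding all previously used external vertices and close a new $C_{2\ell}$ through $p_0$ inside it. Your counting of the sets $I_v$ (the bound $|I_v|\le (k-1)(2\ell-1)$ is fine) gives many distinct witnesses, but by itself it yields no mechanism for concentrating $k$ disjoint cycles on a common vertex, so the proposal as written has a genuine gap at its core.
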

\begin{proof}
    Let $G$ be a $\sharedVertex{k*[2\ell]}$-subgraph-free graph of diameter at most~$2$, we claim $td(G) \leq c(k (2\ell-1)+1, k (2\ell-1),4^{k+1}k^{k+4}\ell)$. Suppose for contradiction $td(G) > c(k (2\ell-1)+1, k (2\ell-1),4^{k+1}k^{k+4}\ell)$. $G$ cannot contain a large complete bipartite subgraph as $K_{k (2\ell-1)+1, k (2\ell-1)}$ contains $\sharedVertex{k*[2\ell]}$ as a subgraph. From Corollary~\ref{cor:longInducedPath},  $G$ contains an induced path, $P=(p_0,\dots,p_m)$ 
    of length $m=4^{k+1}k^{k+4}\ell$. 

    We define the distance of two pairs  $(p_i,p_j)$ and $(p_{i'},p_{j'})$ of vertices on $P$ with $i<j$ and $i'<j'$ as follows. If either $i\leq i'\leq j$ or $i\leq j'\leq j$ we set the distance of $(p_i,p_j)$ and $(p_{i'},p_{j'})$ to be $0$. Otherwise, the distance of $(p_i,p_j)$ and $(p_{i'},p_{j'})$ is the positive integer $d$ for which $j'+d=i$ if $j'<i$ or $j+d=i'$ if $j<i'$.
    
    Suppose some vertex $v \in V(G)$ has $k$  pairs of neighbours in $P$, $(p_i,p_j)$ such that $j = i+2\ell-2$ and these $k$ pairs have pairwise distance at least $1$. Such a vertex results in \forbiddenGraph\ as there are $k$ cycles of length $2\ell$ each containing the single common vertex $v$. Hence, no vertex is adjacent to $k$ pairs of neighbours $(p_i,p_{i+2\ell-2})$ of pairwise distance at least $1$.

    \begin{mclaim}\label{clm:dist2l-4}
            Let $x$ be some vertex in $G$, then $N(x) \cap \{p_0,\dots,p_{m-\ell}\}$ contains at most $(k-1)^2$ disjoint pairs $\{p_i,p_j\}$ where $j = i+2\ell-4$, $j \leq m- \ell$ and each pair has pairwise distance at least $\ell$.
    \end{mclaim}
    \begin{claimproof}
            Say $x$ has neighbours $p_i, p_{i+2\ell-4} \in \{p_0,\dots,p_{m-\ell}\}$. The vertices $p_{i+\ell-3}$, $p_{i+3\ell-5}$ must have a common neighbour, $x'$. We call the vertex $x'$, that is the common neighbour of $p_{i+\ell-3}$ and $p_{i+3\ell-5}$ the \emph{connector} of the pair $(p_i,p_j)$ of neighbours of $x$. If $x'=x$ then there is some $C_{2\ell}$ containing only $x$ and the path vertices $p_{i+\ell-3}, \dots, p_{i+3\ell-5}$, otherwise $x'\neq x$ and there is a $C_{2\ell}$ given by $(x, p_i, \ldots, p_{i+\ell-3}, x', p_{i+3\ell-5}, \ldots, p_{i+2\ell-4})$. Note there is a cycle of length $2\ell$ containing $x'$ and the path vertices $p_{i+\ell-3}$, \ldots, $p_{i+3\ell-5}$.

            Suppose $N(x) \cap \{p_0,\dots,p_{m-\ell}\}$ contains $(k-1)^2+1$ disjoint pairs with pairwise distance at least $\ell$. Let $x'_r$ denote the \emph{connector} for the $r$th pair with $X' = \{x'_1,\dots, x'_{(k-1)^2+1}\}$. If $X'$ contains $k$ pairwise distinct vertices, then $G$ contains \forbiddenGraph\ with $k$ cycles of length $2\ell$ each cycle, apart from possibly one, containing $x$ and a distinct vertex $x' \in X'$. Note that one of these distinct connectors could be the vertex $x$ itself, in which case the cycle does not contain an additional vertex $x'\in X'$. 
            On the other hand, consider $|X'| \leq k-1$. As there are $(k-1)^k+1$ connector there must be some $x'\in X'$ which is the connector of at least $k$ distinct pairs. Assume without loss of generality, $x'_1=x'_2=\ldots=x'_k$. However, this is a contradiction as there are $k$ cycles of length $2\ell$ containing vertices from $P$ and the single common vertex $x'_1$.
    \end{claimproof}

    \begin{mclaim}\label{clm:vbndnei}
            For every $x \in V(G)$, there are less than $4^kk^{k+1}$ vertices in $N(x) \cap \{p_0,\dots,p_{m-3\ell-4}\}$ 
            with pairwise distance at least $3\ell-4$.
    \end{mclaim}
    \begin{claimproof}
        Consider some vertex $x_0$ and let $Z_0 \subseteq N(x_0) \cap \{p_0,\dots,p_{m-3\ell-4}\}$ be some set of vertices with pairwise distance at least $3\ell-4$ along the path. Suppose $|Z_0| \geq 4^kk^{k+1}$. 
        Let $Z^+_0 = \{ p_{i+2\ell-4}: p_i \in Z_0 \}$, note $|Z^+_0| = |Z_0|$.
        
         In the following we recursively construct vertices $x_0, x_1,\ldots,x_{k-1}$, sets $Z_{0} \supseteq Z_1 \supseteq \ldots \supseteq Z_{k-1}$ and sets $\hat{Z}_{0} \supseteq \hat{Z}_{1} \supseteq \ldots \supseteq \hat{Z}_{\delta-1} \supseteq \hat{Z}_{k-1}$ such that for every $1\leq i \leq k-1$
         \begin{enumerate}
             \item $|Z_i|\geq \left\lfloor {\frac{|Z_{i-1}|- (k-1)^2|}{2(k-1)}} \right\rfloor \geq \left\lfloor {\frac{|Z_{i-1}|}{4(k-1)}} \right\rfloor$, where $|Z_{i-1}| \geq 2(k-1)^2$;
             \item $|\hat{Z}_i|\geq |Z_i|-(k-1)^2$;
             \item $Z_i\subseteq N(x_0, \dots, x_i)\cap \{p_0,\dots, p_m\}$;
             \item $\hat{Z}_i\cap N(x_0,\dots, x_{i})=\emptyset$;
             \item $N(x_i)\cap \hat{Z}_{i-1}\neq \emptyset$; and 
             \item $\hat{Z}_i\subseteq Z_0^+$.
         \end{enumerate} 
        
        Suppose for $\delta \leq k-2$ we have constructed vertices $x_0,x_1,\ldots,x_{\delta}$ and sets $Z_{0} \supseteq Z_1 \supseteq  \ldots \supseteq Z_{\delta}$ and $\hat{Z}_0 \supseteq \hat{Z}_1 \supseteq \ldots \supseteq \hat{Z}_{\delta-1}$ with the properties above. Let $Z^+_{\delta} = \{ p_{i+2\ell-4}: p_i \in Z_{\delta-1}\}$. Again from Claim~\ref{clm:dist2l-4}, at most $(k-1)^2$ vertices in $Z^+_{\delta}$ are adjacent to $x_{\delta}$. Let $\hat{Z}_{\delta} =Z^+_{\delta} \setminus N(x,x_1, \ldots, x_{\delta})$ with $|\hat{Z}_{\delta}| \geq |Z_{\delta}| -(k-1)^2$. Thus properties $2$, $4$ and $6$ hold.
        Let $(p_i,p_j)$ be some pair such that $p_i \in Z_{\delta}$, $j\not=i+2\ell-4$ and $p_j \in \hat{Z}_{\delta}$. As $p_i,p_j$ must have some common neighbour, $x' \neq x$, there is some $C_{2\ell}$ given by $(x,p_i, x', p_j, \cdots, p_{j-(2\ell-4)})$. Note that we can choose  $\left\lfloor {\frac{|Z_{\delta-1}- (k-1)^2|}{2}} \right\rfloor$ such pairs with distance at least $1$.
        Let $(p_{r_1},p_{r_2})$ denote the $r$th pair and let $x'_r$ denote that common neighbour of $p_{r_1},p_{r_2}$. Let $X' = \{x'_1,\dots, x'_{\left\lfloor {\frac{|Z_{\delta}|- (k-1)^2}{2}} \right\rfloor}\}$. If $X'$ contains $k$ distinct vertices, then without loss of generality $x'_1 \neq \ldots \neq x'_k$. For every $1 \leq r \leq k$, there is some $C_{2\ell}$ containing $x_0, x'_r$ and vertices from $\{p_{r_1}, \ldots, p_{r_2}\}$, given the pairs $(p_{r_1},p_{r_2})$ have pairwise distance at least $1$ this results in \forbiddenGraph. Therefore, by the pigeon hole principle there is some vertex, we call this $x_{\delta+1}$, which is the common vertex for at least $\left\lfloor {\frac{|Z_{|\delta}|- (k-1)^2}{2(k-1)}} \right\rfloor$ different pairs. Note that $x_{\delta+1}$ is adjacent to at least $\left\lfloor {\frac{|Z_{\delta}|- (k-1)^2}{2(k-1)}} \right\rfloor$ vertices in $\hat{Z}_{\delta}$, and $\left\lfloor {\frac{|Z_{\delta}|- (k-1)^2}{2(k-1)}} \right\rfloor$ vertices in $Z_{\delta}$ that is properties $1$, $3$ and $5$ hold.

        Given $|Z_0| \geq 4^kk^{k+1} \geq 4^{k-1}(k-1)^{k-1}\cdot ((k-1)^2+1)$ there exist vertices $x_0, \ldots, x_{k-1}$, and sets $Z_{0} \supseteq \ldots \supseteq Z_{k-1}$ and $\hat{Z}_{0} \supseteq \ldots \supseteq \hat{Z}_{k-1}$ with each of the properties described above. From property $5$, for every $1 \leq i \leq k-1$, $N(x_{i}) \cap \hat{Z}_{i-1} \neq \emptyset$, this implies there is some vertex in $\hat{Z}_{i-1}$ adjacent to $x_i$, let $y_i$ be some arbitrary vertex in $\hat{Z}_{i-1}$ adjacent to $x_i$. Let $y_k$ be some arbitrary vertex in $\hat{Z}_{k-1}$. From property $4$, $\hat{Z}_i\cap N(x_0,\dots, x_{i})=\emptyset$ meaning, vertices $y_i$ are distinct for each $1 \leq i \leq k$. Further from property $6$, $\{y_1, \ldots, y_k\} \subseteq Z^+_0$. That is for every $1 \leq i \leq k$ there is some distinct $p_j \in Z_0$, we call this vertex $y'_i$, such that $p_{j+2\ell-4} = y_i$. Let $Y_i = \{p_{j}, \ldots, p_{j+2\ell-4}\}$ where $j$ is chosen to satisfy $p_{j+2\ell-4}=y_i$. Note the sets $Y_1, \ldots, Y_k$ are pairwise disjoint.

        Further note $|Z_{k-1}| \geq 2k$, meaning that there exist a set of vertices $C = \{c_1,\ldots,c_k\} \subseteq Z_{k-1}$ such that $y'_i \notin C$ for any $1 \leq i \leq k$.
        Let $x_k$ be the common neighbour of $y_k$ and $c_k$, as $y_k \in \hat{Z}_{k-1}$, $x_k \neq x_i$ for any $0 \leq i \leq k-1$. However, this is a contradiction as for each $1 \leq i \leq k$, there is some cycle of length $2\ell$ containing $x_0$, $c_i$, $x_i$ and $Y_i$, given these cycles have a single common vertex, $x_0$, this describes \forbiddenGraph.
    \end{claimproof}

    We divide $P$ into pairwise disjoint subpaths each containing $m'$ vertices, where $m' = (2\ell-2)((k-1)^2+1)+\ell \geq 10\ell k^2$. We call each of these subpath a segment of $P$. As $m = 4^{k+1}k^{k+4}\ell$, there are at least $4^{k+1}k^{k+2}$ such segments. Let $x_0$ be the common neighbour of $p_0$ and $p_{2(\ell-1)}$. From Claim~\ref{clm:vbndnei}, $x_0$ cannot have neighbours in $4^kk^{k+1}$ different segments (excluding the final segment, i.e. that containing $p_m$). Given there are at least $4^kk^{k+1}+1$ segments there is some segment which does not contain a neighbour of $x_0$.

    \begin{mclaim}\label{clm-seg}
        Let $X$ be a set of external vertices and $Q = (q_0, \cdots, q_{m'})$ be some segment which does not contain a neighbour of any vertex in $X$. Then there is a $C_{2\ell}$ containing $p_0$ and exactly $2$ external vertices $ x\notin X$ and $x' \notin X$ and vertices from $Q$.
    \end{mclaim}
    \begin{claimproof}
        Let $x$ be the common neighbour of $p_0$ and $q_0$. Let $y$ the common neighbour of $p_0$ and $q_{2\ell-4}$. As both $x$ and $y$ have a neighbour in $Q$ we know that $x\notin X$ and $y\notin X$ by assumption. If $y \neq x$, we let $x'=y$, our claim holds as the cycle $(p_0,x,q_0, \dots, q_{2\ell-4},x')$ is of length $2\ell$ and contains $p_0$ and exactly $2$ external vertices $ x\neq x_0$ and $x' \neq x_0$ and vertices from $Q$. Otherwise $y = x$.

        Repeating this argument, as $m' > 2(2\ell-4)((k-1)^2+1)+\ell$, either there is some $1 \leq \delta \leq 2((k-1)^2+1)$ and vertex $x' \neq x$, $x' \notin  X$ such that $x$ is adjacent to $q_{\delta(2\ell-4)}$ and $x'$ is the common neighbour of $q_{(\delta+1)(2\ell-4)}$ and $p_0$ or $q_{\delta(2\ell-4)}$ is adjacent to $x$ for all $0 \leq \delta \leq 2((k-1)^2+1)$. In the former case, we have found a cycle $(p_0, x, q_{\delta(2\ell-4)}, \dots, q_{(\delta+1)(2\ell-4)}, x')$ of length $2\ell$ containing $p_0$, two external vertices $x\notin X$, $x'\notin X$ and vertices from $Q$ as claimed.  On the other hand, the latter case contradicts Claim~\ref{clm:dist2l-4} as $x$ has $(k-1)^2+1$ pairs of neighbours with pairwise distance at least $\ell$. 
    \end{claimproof}

    In the following we recursively define $k-1$ distinct segments $Q_1,\dots, Q_{k-1}$ and $2k-2$ distinct external vertices $x_1, \dots, x_{k-1}$ and $x_1',\dots, x_{k-1}'$  such that for every $i\in [k-1]$ we have that $x_i\neq x_0$, $x_i'\neq x_0$ and there is a cycle of length $2\ell$ containing $p_0$, $x_i, x_i'$ and vertices from $Q_i$.

    As highlighted above, there are at least $4^kk^{k+1}+1$ segments, from Claim~\ref{clm:vbndnei}, and hence there is some segment which does not contain a neighbour of $x_0$. Let $Q_1$ be some segment which does not contain any neighbour of $x_0$, from Claim~\ref{clm-seg} there is some pair of vertices, call these $x_1, x'_1 \neq x_0$ such that there is a cycle of length $2\ell$ containing $p_0$, $x_1, x_1'$ and vertices from $Q_1$.
    Suppose now there are segments $Q_1,\dots, Q_{\delta}$ and distinct external vertices $x_1, \dots, x_{\delta}$ and $x_1',\dots, x_{\delta}'$ for some $\delta < k-1$. Let $X_{\delta} = \{x_0,x_1, \dots, x_{\delta}, x_1',\dots, x_{\delta}'\}$. From Claim~\ref{clm:vbndnei}, no vertex in $X_{\delta}$ can have neighbours in $4^kk^{k+1}$ different segments, as $|X_{\delta}| = 2\delta+1$ and there are at least $(2\delta+1) \cdot 4^kk^{k+1}+1$ segments, there is some segment which does not contain a neighbour in $X$, let $Q_{\delta+1}$ be such a segment. Again from Claim~\ref{clm-seg} there is some pair of vertices, call these $x_{\delta+1}, x'_{\delta+1} \notin X_{\delta}$ such that there is a cycle of length $2\ell$ containing $p_0$, $x_{\delta+1}, x'_{\delta+1}$ and vertices from $Q_{\delta+1}$.

    Given segments $Q_1,\dots, Q_{k-1}$ and vertices $x_1, \dots, x_{k-1}$ and $x_1',\dots, x_{k-1}'$ as described above, there are $k-1$ cycles with a common vertex $p_0$. Further the cycle $(x_0, p_0, \ldots, p_{2(\ell-1)})$ has length $2\ell$ and contains no vertex from the segments $Q_1,\dots, Q_{k-1}$, as these segments do not contain a neighbour of $x_0$. That is, this describes \forbiddenGraph. \qed
\end{proof}

The two results above 
might suggest that we obtain bounded treedepth for any class of $\sharedVertex{\ell_1,\dots,\ell_k}$-subgraph-free graphs for $\ell_1,\dots,\ell_k>4$ even number of bounded diameter $2$. But in fact, this is not the case. If we allow only two different length ($6$ and $8$) and take sufficiently many $C_6$ and $C_8$ sharing a vertex, the treedepth is unbounded. That is the inverse of \Cref{obs:canonicalUnboundedExamples} is not true.

\begin{theorem}
\label{thm:CVunbounded}
    The class of $\sharedVertex{12\times [6],12\times [8]}$-subgraph-free graphs of diameter at most~$2$ has unbounded treedepth.
\end{theorem}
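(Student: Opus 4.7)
The plan is to construct, for each $N$, a graph $G$ of diameter at most $2$ with $\td(G) \geq N$ that is $\sharedVertex{12\times[6], 12\times[8]}$-subgraph-free. A natural candidate, in the spirit of the proof of \Cref{thm:c4}, is the Erdős--Rényi--Sós polarity graph $G_q$ of $PG(2,q)$ for a prime power $q$: it has $q^2+q+1$ vertices, diameter $2$, is $C_4$-subgraph-free, and has minimum degree at least $q$. In particular $G_q$ contains a path of length at least $q$, so by \Cref{fact:pathLogtd} its treedepth grows without bound.

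The main task is then to show that $G_q$ (possibly after a mild modification) contains no copy of $\sharedVertex{12\times[6], 12\times[8]}$. Such a copy consists of a vertex $v$ together with $12$ six-cycles and $12$ eight-cycles through $v$, all pairwise meeting only at $v$. Each $2\ell$-cycle through $v$ corresponds to an ordered pair of distinct neighbours $a,b \in N(v)$ plus an internally disjoint $a$-$b$ path of length $2\ell-2$ in $G_q - v$. So one is really trying to embed, at $v$, $24$ disjoint subdivided ``handles'' between ordered pairs of neighbours of $v$, of two prescribed lengths.

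The core argument is a pigeonhole/counting step exploiting diameter $2$ together with $C_4$-freeness. Diameter $2$ forces pairs of interior vertices on distinct handles to share common neighbours, while $C_4$-freeness forbids two distinct vertices of $N(v)$ from sharing a common neighbour other than $v$. A careful accounting of the $24$ cycles at $v$ should produce two neighbours of $v$ with two distinct common neighbours, yielding a $C_4$ in $G_q$ and hence a contradiction.

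I anticipate the main obstacle to be making this counting step quantitatively correct: the numbers $12$ and $12$ and the lengths $6$ and $8$ must combine so that the pigeonhole really fires against $C_4$-freeness. If the unmodified polarity graph is not sufficient, I would refine the construction, for example by passing to a large induced subgraph with an additional girth or degree condition, or by replacing $G_q$ with a collinearity graph of a related algebraic incidence geometry, while preserving diameter $2$ and unbounded treedepth.
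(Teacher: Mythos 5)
There is a genuine gap, and it is not merely the quantitative one you anticipate: the mechanism you propose does not exist. A copy of $\sharedVertex{12\times [6],12\times [8]}$ centred at $v$ consists of $24$ cycles that are pairwise disjoint except at $v$, so the handles themselves never force two vertices to acquire a second common neighbour; $C_4$-freeness only tells you that two neighbours of $v$ have no common neighbour besides $v$, which is perfectly compatible with $48$ distinct neighbours of $v$ being joined in pairs by internally disjoint paths of lengths $4$ and $6$. In fact the Erd\H{o}s--R\'enyi--S\'os polarity graph $G_q$ almost certainly \emph{contains} the forbidden configuration for large $q$: any two vertices of $G_q$ have a common neighbour, so between any two neighbours $a,b$ of $v$ there are on the order of $q^2$ paths of length $4$ avoiding $v$ (choose $x\in N(a)$, $z\in N(b)$, and take their unique common neighbour), and a greedy argument then extracts $24$ internally disjoint handles of the required lengths. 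Note also that your fallback intuition cuts the wrong way: by \Cref{thm:diam2-CV24} and \Cref{thm:diam2-CVktimesl}, any diameter-$2$ graph of large treedepth must already contain $\sharedVertex{6,8}$, $\sharedVertex{12\times[6]}$ and $\sharedVertex{12\times[8]}$ as subgraphs, so no ``generic'' dense diameter-$2$ construction can be expected to dodge this particular pattern; avoiding it requires a deliberately engineered obstruction.

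The paper's proof uses exactly such an obstruction and is unrelated to $C_4$-freeness. It takes a long path $P_n$ together with a clique $Z$ of $12$ hub vertices, four vertices $x_{i,j}$ attached to $P$ with period $4$ and eight vertices $y_{i,j}$ attached with period $8$; the residue patterns are chosen so that the graph has diameter $2$, yet no $C_8$ through an $x$-hub and no $C_6$ through a $y$-hub (and no cycle of either length through a path vertex) can close up using path vertices alone -- each such cycle is forced to pass through a \emph{second} vertex of $Z$. Since the $12$ eight-cycles (respectively six-cycles, respectively all $24$ cycles) in a copy of $\sharedVertex{12\times [6],12\times [8]}$ would each need their own distinct second hub vertex and $|Z|=12$, the configuration cannot embed. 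If you want to salvage your write-up, you would need to replace the $C_4$-freeness pigeonhole with a bounded ``hub'' set of this kind; the polarity graph has no such set.
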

\begin{proof}
    For every $n\in \mathbb{N}$ we construct a 
    graph~$G_n$ which is $\sharedVertex{12\times [6],12\times [8]}$-subgraph-free, has diameter at most~$2$ and treedepth at least $\log(n)$.
    We construct $G_n$ from the disjoint union of $P_n=(p_0,\dots,p_n)$ and a complete graph~$K_{12}$ on vertex set $$Z=\{x_{0,1}, x_{1,2},x_{2,3},x_{3,0}, y_{0,2}, y_{1,3},y_{2,4}, y_{3,5},y_{4,6},y_{5,7},y_{6,0},y_{7,1}\}$$ 
    by adding edges as follows. For all $i,j\in [0,3]$, $j\equiv i+1\mod 4$ and $k\in [0,n]$ we add the edge $x_{i,j}p_k$ if and only if $k\equiv i \mod 4$ or $k\equiv j \mod 4$. Similarly, for all $i,j\in [0,7]$, $j\equiv i+2\mod 8$ and $j\in [0,n]$ we add the edge $y_{i,j}p_k$ if and only if $k\equiv i \mod 8$ or $k\equiv j \mod 8$. For an illustration of the construction see \Cref{fig:CVunbounded}.\\

    We first argue that $G_n$ has diameter $2$. Trivially, any two vertices in $Z$ are of distance $1$ of each other. Furthermore, the distance between any $z\in Z$ and $p_k$ is at most~$2$ as $p_k$ is adjacent to some vertex in $Z$ and $Z$ is a clique. Therefore, consider $p_k,p_\ell$ with $k<\ell$. 
    
    First assume that $k\equiv \ell \mod 4$.  Set $i,j\in [0,3]$, $j\equiv i+1\mod 4$ such that $k\equiv \ell \equiv i \mod 4$ and observe that both $p_k$ and $p_\ell$ are adjacent to $x_{i,j}$ and hence are of distance $2$.
    
    Next assume that $k+1\equiv \ell \mod 4$. We set  $i,j\in [0,3]$ such that $k\equiv i\mod 4$ and $\ell \equiv j \mod 4$ and observe that $x_{i,j}$ is a vertex in $G_n$. Hence, $p_k$ and $p_\ell$ have distance at most~$2$ as $p_k$ and $p_\ell$ are adjacent to $x_{i,j}$.

    Next assume that $k+3 \equiv \ell \mod 4$. We define $i,j\in [0,3]$ such that $\ell\equiv i\mod 4$ and $k \equiv j \mod 4$ and observe again that $x_{i,j}$ is a vertex in $G_n$. Hence, $p_k$ and $p_\ell$ have distance at most~$2$ as both $p_k$ and $p_\ell$ are adjacent to $x_{i,j}$.

     Finally, assume that $k+2 \equiv \ell \mod 4$. In this case either $k+2\equiv \ell \mod 8$ or $\ell+6 \equiv \ell \mod 8$. In the former case, let $i,j\in [0,7]$ such that $k\equiv i\mod 8$ and $\ell\equiv j\mod 8$ and observe that $y_{i,j}$ is a vertex in $G_n$. In the latter case we choose $i,j\in [0,7]$ such that $\ell\equiv i\mod 8$ and $k\equiv j\mod 8$ and remark that $y_{i,j}$ is a vertex in $G_n$. We conclude that $p_k$ and $p_\ell$ are of distance at most~$2$ as in both cases $p_k$ and $p_\ell$ are adjacent to $y_{i,j}$.\\

    We now argue that $G_n$ is $\sharedVertex{12\times [6],12\times [8]}$-subgraph-free. First observe, that for any $i,j\in [0,3]$, $j\equiv i\mod 4$ any $C_8$ in $G_n$ which contains $x_{i,j}$ must contain a second vertex from $Z$. Indeed, if this is not the case then there is a $C_8$ which comprise of $x_{i,j}$ and a subpath of $P$ of length $7$ with start and end vertex adjacent to $x_{i,j}$ which is impossible by construction. Additionally, for any $i,j\in [0,7]$, $j\equiv i \mod 8$ any $C_6$ in $G_n$ which contain $y_{i,j}$ must contain a second vertex from $Z$. Indeed, this follows from $y_{i,j}$ not being adjacent to any pair of vertices of distance $4$ on $P$. 

    Towards a contradiction, assume $G_n$ contains $\sharedVertex{12\times [6],12\times [8]}$ as a subgraph. We distinguish three cases. First assume that the shared vertex in the subgraph $\sharedVertex{12\times [6],12\times [8]}$ is $x_{i,j}$ for some $i,j\in [0,3]$. In this case, each of the $12$ $C_8$'s contained in $\sharedVertex{12\times [6],12\times [8]}$ must contain a second vertex in $Z$ and these additional vertices have to be pairwise different for different $C_8$'s. A contradiction as $Z$ contains only $12$ vertices in total. We obtain a contradiction in the same way considering the shared vertex to be $y_{i,j}$ for some $i,j\in [0,7]$ with the requirement of the $C_6$'s contained in $\sharedVertex{12\times [6],12\times [8]}$. Finally, in case the shared vertex in the subgraph $\sharedVertex{12\times [6],12\times [8]}$ is on $P$, each of the $24$ cycles of $\sharedVertex{12\times [6],12\times [8]}$ has to contain a vertex not on $P$ and all of them have to pairwise different, a contradiction.\\

    Finally, $G_n$ contains $P_n$ and hence has treedepth at least $\log(n)$ by \Cref{fact:pathLogtd}. We conclude that the class of $\sharedVertex{12\times [6],12\times [8]}$-subgraph-free graph of diameter at most $2$ has unbounded treedepth. \qed
\begin{figure}
        \centering
        \includegraphics{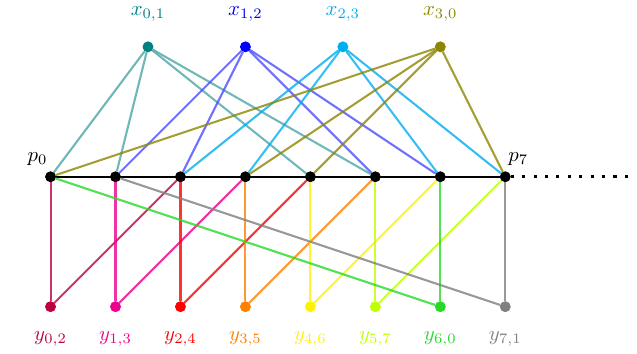}
        \caption{Construction of the graph~$G_{n}$ from the proof of \Cref{thm:CVunbounded}. The edges within the clique $\{x_{0,1}, x_{1,2},x_{2,3},x_{3,0}, y_{0,2}, y_{1,3},y_{2,4}, y_{3,5},y_{4,6},y_{5,7},y_{6,0},y_{7,1}\}$ are omitted.}
        \label{fig:CVunbounded}
\end{figure}
\end{proof}

\begin{theorem}
\label{thm:CEtwoLength}
    For any numbers $\ell_1,\ell_2>2$ the class of all $\sharedEdge{2\ell_1,2\ell_2}$-subgraph-free graphs of diameter at most~$2$ has bounded treedepth.
\end{theorem}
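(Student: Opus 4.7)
The plan is to adapt the approach of \Cref{thm:diam2-CV24}. Since $\sharedEdge{2\ell_1,2\ell_2}$ is bipartite with each part of size $\ell_1+\ell_2-1$, it is a subgraph of $K_{\ell_1+\ell_2-1,\ell_1+\ell_2-1}$. Hence, by \Cref{cor:longInducedPath}, a $\sharedEdge{2\ell_1,2\ell_2}$-subgraph-free graph $G$ of sufficiently large treedepth and diameter at most~$2$ contains an induced path $P=(p_0,\ldots,p_m)$ whose length $m$ can be chosen large enough in terms of $\ell_1,\ell_2$; the remaining task is to derive a contradiction from such a long induced path.

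For each valid~$i$, the diameter-$2$ condition yields a common neighbour $x_i\notin V(P)$ of $p_i$ and $p_{i+2\ell_1-2}$, giving a cycle $C^{(1)}_i=(x_i,p_i,\ldots,p_{i+2\ell_1-2},x_i)$ of length exactly $2\ell_1$; symmetrically, a common neighbour $y_j$ of $p_j$ and $p_{j+2\ell_2-2}$ gives a cycle $C^{(2)}_j$ of length $2\ell_2$. The crucial observation is that whenever $C^{(1)}_i$ and $C^{(2)}_j$ share exactly one edge (hence exactly two consecutive vertices), their union contains $\sharedEdge{2\ell_1,2\ell_2}$ as a subgraph and we are done. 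I would identify two natural configurations in which this happens. If $i=j-2\ell_1+3$ then the path portions overlap in exactly $\{p_j,p_{j+1}\}$, so $x_i\neq y_j$ already produces $\sharedEdge{}$ via the edge $p_jp_{j+1}$; a mirror configuration arises from $j=i-2\ell_2+3$. If $j=i+2\ell_1-2$ with $x_i=y_j$, the two cycles instead meet exactly in the edge $x_ip_{i+2\ell_1-2}$. Avoiding the first configuration over every choice of common neighbours forces $N(p_i)\cap N(p_{i+2\ell_1-2})$ to be a singleton $\{x_i\}$ coinciding with $N(p_{i+2\ell_1-3})\cap N(p_{i+2\ell_1+2\ell_2-5})$, and together with the mirror version makes the map $i\mapsto x_i$ periodic with period $T:=2\ell_1+2\ell_2-6$. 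Avoiding the second configuration then forces $x_i\neq x_{i+1}$.

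The concluding step, which I expect to be the main obstacle, extracts a contradiction from these structural constraints by exploiting the length of~$P$. The identifications above force each $x_i$ to be adjacent on~$P$ to $p_i$, $p_{i+1}$, $p_{i+2\ell_1-3}$, $p_{i+2\ell_1-2}$, and (via the identities $x_i=y_{i+2\ell_1-3}$ and $x_i=y_{i-2\ell_2+3}$) also to $p_{i-2\ell_2+3}$ and $p_{i+2\ell_1+2\ell_2-5}$, with the whole pattern repeating modulo~$T$. For $m$ chosen sufficiently large in terms of $\ell_1,\ell_2$, a combination of this periodic adjacency structure, a residue-class pigeonhole argument modulo~$T$, and further applications of the diameter-$2$ condition to pairs $(p_a,p_c)$ at distance $2\ell_1+2\ell_2-4$ ultimately produces three neighbours $p_a,p_b,p_c$ of some common~$x_i$ on~$P$ with $b-a=2\ell_1-2$ and $c-b=2\ell_2-2$; the cycles $(x_i,p_a,\ldots,p_b,x_i)$ and $(x_i,p_b,\ldots,p_c,x_i)$ then share exactly the edge $x_ip_b$, producing $\sharedEdge{2\ell_1,2\ell_2}$ and the desired contradiction. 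Bounding $m$ in terms of $\ell_1,\ell_2$ thus bounds the treedepth of~$G$.
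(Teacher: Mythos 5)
Your first two paragraphs reconstruct, essentially, the key structural lemma of the paper's own proof (\Cref{claim:impliedAdjCE}): any common neighbour of $p_j$ and $p_{j+2\ell_1-2}$ must coincide with any common neighbour of $p_{j+2\ell_1-3}$ and $p_{j+2\ell_1+2\ell_2-5}$ (else two even cycles share exactly one path edge), which yields the extra forced adjacencies, the periodicity of $i\mapsto x_i$ with period $T=2\ell_1+2\ell_2-6$, the non-adjacency of $x_i$ to $p_{i+2\ell_1+2\ell_2-4}$ and to $p_{i-1}$, and $x_i\neq x_{i+1}$. That part is correct and matches the paper.

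The gap is exactly where you flag it: the endgame. The configuration you aim to force --- one external vertex adjacent to $p_a$, $p_{a+2\ell_1-2}$ and $p_{a+2\ell_1+2\ell_2-4}$ --- is precisely the configuration your own second-configuration analysis shows cannot occur, so you would have to prove that the accumulated constraints are self-contradictory; but the adjacency pattern you have derived for the vertices $x_i$ (neighbours at offsets $0$, $1$, $2\ell_1-3$, $2\ell_1-2$ modulo $T$) contains no two positions at distance $2\ell_1+2\ell_2-4=T+2$ unless $\ell_1=3$, so the common neighbour $w$ of a pair $(p_a,p_{a+T+2})$ is not forced to be any $x_i$, and nothing forces $w$ to be adjacent to the intermediate vertex $p_{a+2\ell_1-2}$; a pigeonhole over residues modulo $T$ does not help either, since the number of distinct external vertices is not bounded a priori. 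The paper closes the argument by a different device: using that $x_0$ is adjacent to both $p_{2\ell_1-3}$ and $p_{2\ell_1-2}$ while $x_1$ is adjacent to both $p_{2\ell_1-2}$ and $p_{2\ell_1-1}$, and that $x_0\neq x_1$, it splices the two into a single cycle of length $2\ell_2$, namely $(x_0,p_{2\ell_1-3},p_{2\ell_1-2},x_1,p_{2\ell_1-1},\ldots,p_{2\ell_1+2\ell_2-6},x_0)$, and pairs it with the $2\ell_1$-cycle through a third external vertex $z$, the common neighbour of $p_{2\ell_1+2\ell_2-7}$ and $p_{4\ell_1+2\ell_2-9}$; the recorded non-adjacencies guarantee $z\notin\{x_0,x_1\}$, and the two cycles meet in exactly the path edge $p_{2\ell_1+2\ell_2-7}p_{2\ell_1+2\ell_2-6}$, producing $\sharedEdge{2\ell_1,2\ell_2}$. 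Some such concrete construction (in particular, a cycle passing through \emph{two} of the external vertices) is needed; as written, your concluding step is an unproven assertion, not a proof.
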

\begin{proof}
     Let $\ell_1,\ell_2>2$ be integers and assume $G$ is a $\sharedEdge{2\ell_1,2\ell_2}$-subgraph-free graph of diameter at most~$2$ and  treedepth at least $c(\ell_1+\ell_2,\ell_1+\ell_2,3\ell_1+2\ell_2-13)$. Note that $G$ cannot contain a large complete bipartite subgraph as $K_{\ell_1+\ell_2,\ell_1+\ell_2}$ contains $\sharedEdge{2\ell_1,2\ell_2}$ as a subgraph. Hence, by \Cref{cor:longInducedPath}, $G$ must contain $P_\ell=(p_0,\dots,p_\ell)$ where $\ell=6\ell_1+4\ell_2-13$ as an induced subgraph.

     First observe that $G$ having diameter at most~$2$ implies that for any two vertices $p,q$ of distance at least 3 on $P$ there must be a vertex $x$ not on $P$ adjacent to both $p$ and $q$. We also remark that a vertex $x$ not on $P$ being adjacent to two vertices $p,q$ of distance $k-2$ on $P$ forms a $C_k$ with the part of $P$ from $p$ to $q$.
     \begin{mclaim}\label{claim:impliedAdjCE}
        Let $i\in \{1,2\}$, $j\in [\ell-2\ell_1-2\ell_2+5]$ and $x$ a vertex not on $P$.
         If  $x$ is adjacent to $p_j$ and $p_{j+2\ell_i-2}$ then 
         \begin{enumerate}
             \item $x$ is adjacent to $p_{j+2\ell_i-3}$ and to $p_{j+2\ell_i+2\ell_{3-i}-5}$, and
             \item $x$ is not adjacent to $p_{j+2\ell_i+2\ell_{3-i}-4}$ and to $p_{j-1}$ if $j>1$.
         \end{enumerate}
     \end{mclaim}
     \begin{claimproof}
         Assume $x$ is adjacent to $p_j$ and $p_{j+2\ell_i-2}$. Let $y$ (possibly equal to $x$) be a vertex not on $P$ which is adjacent to $p_{j+2\ell_i-3}$ and $p_{j+2\ell_i+2\ell_{3-i}-5}$ ($y$ must exist as $p_{j+2\ell_i-3}$ and $p_{j+2\ell_i+2\ell_{3-i}-5}$ are of distance $(j+2\ell_i+2\ell_{3-i}-5) - (j+2\ell_i-3)=2\ell_{3-i}-2\geq 3$). If $x\not=y$ we get the copy of $H$ consisting of cycles $(x,p_j,\dots, p_{j+2\ell_i-2},x)$ and $(y,p_{j+2\ell_i-3},\dots, p_{j+2\ell_i+2\ell_{3-1}-5},y)$ with shared edge $\{p_{j+2\ell_i-3},p_{j+2\ell_i-2}\}$. Hence, $x=y$ and the first statement follows. 
         
         To show the second statement simply observe, that $x$ cannot be adjacent to $p_{n+m-4}$ as otherwise we get the copy of $H$ with cycles $(x,p_j,\dots, p_{j+2\ell_i-2},x)$ and $(x,p_{j+2\ell_i-2},\dots, p_{j+2\ell_i+2\ell_{3-i}-4})$ with shared edge $\{x,p_{j+2\ell_i-2}\}$. The assumption that $x$ is adjacent to $p_{j-1}$ (in case $j>1$) yields a contradiction with a symmetric argument.
     \end{claimproof}
     
     Let $x$ be a vertex not on $P$ which is adjacent to $p_0$ and $p_{2\ell_1-2}$ ($x$ must exist as the distance of $p_0$ and $p_{2\ell_1-2}$ is $2\ell_1-2\geq 3$). By \Cref{claim:impliedAdjCE}, this implies that $p_{2\ell_1-3}$ and $p_{2\ell_1+2\ell_2-5}$ are also adjacent to $x$ while $p_{2\ell_1+2\ell_2-4}$ is not adjacent to $x$. Applying \Cref{claim:impliedAdjCE} again (for $p_{2\ell_1-3}$ and $p_{2\ell_1+2\ell_2-5}$), implies that $p_{2\ell_1+2\ell_2-6}$ and $p_{4\ell_1+2\ell_2-8}$ are adjacent to $x$. Finally, applying \Cref{claim:impliedAdjCE} ($p_{2\ell_1+2\ell_2-6}$ and $p_{4\ell_1+2\ell_2-8}$) implies that $p_{2\ell_1+2\ell_2-7}$ is not adjacent to $x$.

     We now let $y$ (possibly equal to $x$) be a vertex not on $P$ which is adjacent to $p_{1}$ and $p_{2\ell_1-1}$ (which exists as $p_{1}$ and $p_{2\ell_1-1}$ are of distance $2\ell_1-2\geq 3$ on $P$). Applying \Cref{claim:impliedAdjCE} four times sequentially (similar as before) yields that $p_{2\ell_1-2}$, $p_{2\ell_1+2\ell_2-4}$, $p_{2\ell_1+2\ell_2-5}$, $p_{4\ell_1+2\ell_2-7}$, $p_{4\ell_1+2\ell_2-8}$ and $p_{4\ell_1+4\ell_2-10}$ are adjacent to $y$ while $p_{2\ell_1+2\ell_2-9}$ is not adjacent to $y$. As $y$ is adjacent to $p_{2\ell_1+2\ell_2-4}$ while $x$ is not adjacent to $p_{2\ell_1+2\ell_2-4}$ we know that $x\not=y$.

     Finally, we let $z$ be a vertex not on $P$ which is adjacent to $p_{2\ell_1+2\ell_2-7}$ and $p_{4\ell_1+2\ell_2-9}$ (such a vertex exists as $p_{2\ell_1+2\ell_2-7}$ and $p_{4\ell_1+2\ell_2-9}$ are of distance $2\ell_1-2\geq 3$ on $P$). Note that $z\not=x$ as $x$ is not adjacent to $p_{2\ell_1+2\ell_2-7}$ and $z\not=y$ as $y$ is not adjacent to $p_{4\ell_1+2\ell_2-9}$. Furthermore, $C_1=(z, p_{2\ell_1+2\ell_2-7},\dots, p_{4\ell_1+2\ell_2-9}, z)$ is a cycle of length $2\ell_1$ in $G$. Additionally, we obtain a second cycle  $C_2=(x, p_{2\ell_1-3}, p_{2\ell_1-2},y,p_{2\ell_1-1},\dots,p_{2\ell_1+2\ell_2-6},x)$ of length $2\ell_2$. As $C_1$ and $C_2$ precisely share the edge $p_{2\ell_1+2\ell_2-7}p_{2\ell_1+2\ell_2-6}$ we obtain a copy of $\sharedEdge{2\ell_1,2\ell_2}$. See \Cref{fig:CEtwoLength} for illustration. Since this contradicts the assumption that $G$ is $\sharedEdge{2\ell_1,2\ell_2}$-subgraph-free, $G$ has treedepth less than $c(\ell_1+\ell_2,\ell_1+\ell_2,6\ell_1+4\ell_2-13)$. \qed
    \begin{figure}
        \centering
        \includegraphics[scale=0.85]{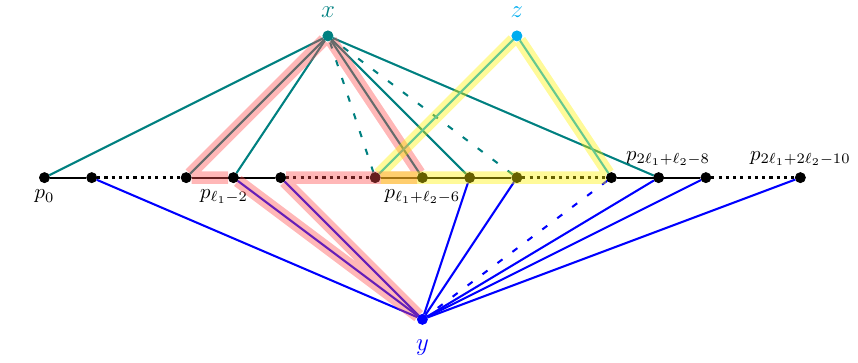}
        \caption{The construction from the proof of \Cref{thm:CEtwoLength} of a subgraph $\sharedEdge{\ell_1,\ell_2}$ of $G$ under the assumption that $G$ is $\sharedEdge{\ell_1,\ell_2}$-free, has diameter at most~$2$ and large treedepth.}
        \label{fig:CEtwoLength}
    \end{figure}
\end{proof}

\begin{theorem}
\label{thm:edgekCl}
    For any $\ell \geq 3$, there exists some $k \leq 2 (2\ell -3)$ such that the class of $\sharedEdge{k*[2\ell]}$-subgraph-free graphs of diameter at most~$2$ has unbounded treedepth.
\end{theorem}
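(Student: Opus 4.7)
The plan is to exhibit the family of fan graphs $F_n := P_n \join K_1$ as the witness, and to show that $k = 3$ already suffices — since $3 \leq 2(2\ell-3)$ for every $\ell \geq 3$, this meets the required bound. Here $F_n$ consists of the path $p_1 p_2 \cdots p_n$ together with a universal vertex $u$ adjacent to every $p_i$. Two of the three required properties are essentially immediate: $u$ being universal gives $F_n$ diameter $2$, and $P_n \subseteq F_n$ combined with \Cref{fact:pathLogtd} gives $\td(F_n) \geq \log(n-1)$, so the family $\{F_n\}_{n\geq 1}$ has unbounded treedepth.

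The main step will be to verify that $F_n$ is $\sharedEdge{3*[2\ell]}$-subgraph-free. The guiding observation is that $F_n - u = P_n$ is acyclic, so every cycle of $F_n$ contains $u$ exactly once; consequently every length-$2\ell$ cycle of $F_n$ has the shape $(u, p_a, p_{a+1}, \ldots, p_{a+2\ell-2}, u)$ for some index $a$.

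Using this, I would argue by contradiction. Suppose $F_n$ contains $\sharedEdge{3*[2\ell]}$; then there are three $C_{2\ell}$'s $C_1, C_2, C_3$ in $F_n$ sharing a single common edge $e=xy$ and otherwise pairwise meeting only in the set $\{x,y\}$. All three cycles contain $u$, so $u$ lies in every pairwise vertex-intersection and therefore must be one of $x,y$; write $e = u p_i$. But a $C_{2\ell}$ through the edge $u p_i$ is completely determined by which direction it takes from $p_i$ along $P_n$, so at most two such cycles exist, namely $(u,p_i,p_{i+1},\ldots,p_{i+2\ell-2},u)$ and $(u,p_i,p_{i-1},\ldots,p_{i-2\ell+2},u)$. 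This contradicts the existence of three distinct cycles through $e$.

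The only step that requires any attention is the cycle enumeration, which is essentially a one-line consequence of $F_n - u$ being a single path, so no serious obstacle is anticipated; no machinery beyond \Cref{fact:pathLogtd} is needed.
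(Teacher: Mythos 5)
Your argument hinges on reading $\sharedEdge{3*[2\ell]}$ as three $2\ell$-cycles that \emph{all} share one common edge $xy$ and pairwise meet only in $\{x,y\}$. That is not the graph the paper intends and uses. The written definition is admittedly ambiguous, but in $\sharedEdge{\ell_1,\dots,\ell_k}$ only \emph{consecutive} cycles share an edge: the identifications are performed using the two \emph{different} neighbours of the hub $v$ inside each intermediate cycle, so the result is a hub $v$ with neighbours $a_0,a_1,\dots,a_k$ and cycles $C^i$ passing through $a_{i-1},v,a_i$, where $C^i\cap C^{i+1}=\{v,a_i\}$ (sharing the edge $va_i$) and $C^i\cap C^j=\{v\}$ for $|i-j|\geq 2$. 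This is confirmed by the caption of Fig.~\ref{fig:forbiddenSubgraphs} (``set of consecutive cycles sharing an edge''), by the introduction's description (``cycles share \dots an edge with a common end-vertex''), by the remark at the start of \Cref{sec:subgraphmult} that $\sharedEdge{\ell_1,\dots,\ell_k}$ with even lengths \emph{is} a subgraph of $P_n\join K_1$, and by the phrase ``the vertex common to all cycles'' (singular) in the paper's own proof of this theorem. Under this definition your witness family collapses: the fan $F_n=P_n\join K_1$ contains $\sharedEdge{k*[2\ell]}$ for \emph{every} $k$ once $n\geq k(2\ell-2)+1$, namely take $v=u$ and the cycles $\bigl(u,p_{(i-1)(2\ell-2)+1},\dots,p_{i(2\ell-2)+1},u\bigr)$ for $i=1,\dots,k$; consecutive cycles share the edge $u\,p_{i(2\ell-2)+1}$ and non-consecutive ones share only $u$. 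So $F_n$ is not $\sharedEdge{3*[2\ell]}$-subgraph-free and your main step fails.

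Your cycle enumeration inside the fan is correct as far as it goes (every $2\ell$-cycle of $F_n$ has the form $(u,p_a,\dots,p_{a+2\ell-2},u)$, and at most two of them pass through any fixed edge $up_i$), but it only excludes three cycles through a \emph{single} common edge, which is not the configuration that must be excluded. This is precisely why the paper cannot use $P_n\join K_1$ and instead builds a path together with $2\ell-3$ hub vertices attached according to the periodic pattern $R=1(10)^{\ell-2}$, arranged so that no single hub can support a long run of consecutive edge-sharing $2\ell$-cycles; even then it only obtains unboundedness for $k=2(2\ell-3)$, leaving $3\leq k<2(2\ell-3)$ open (and \Cref{thm:CEtwoLength} shows $k=2$ gives \emph{bounded} treedepth). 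A correct proof along your lines would need a diameter-$2$ family of unbounded treedepth in which no vertex is the hub of $k$ consecutive edge-sharing $2\ell$-cycles, which is a genuinely harder requirement than the one you verified.
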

\begin{proof}
    We set  $k = 2(2\ell-3)$. For all $n\in \mathbb{N}$ we construct a $\sharedEdge{k\times [2\ell]}$-subgraph-free graph~$G_n$ with diameter at most~$2$ and treedepth at least $\log(n)$ by taking $2\ell-3$ vertices $x_0,\dots, x_{2\ell-4}$  and a path $P=(p_0,\dots,p_n)$. Consider the binary string  $R=1(10)^{\ell-2}$ consisting of $1$ followed by $\ell-2$ repetitions of the string $10$.  
    We add the edge $p_ix_j$ if the   $i-j \mod (2\ell-3)$th bit of $R$ is $1$. See \Cref{fig:CEunbounded}. By \Cref{fact:pathLogtd}, $\td(G_n)\geq \log(n)$. 
    \begin{figure}[b]
        \vspace*{-4mm}
        \centering
        \includegraphics[scale=0.85]{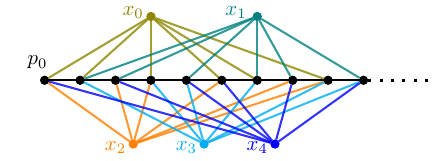}
        \caption{The construction from \Cref{thm:edgekCl} for $\ell=4$ for which the pattern is $R=11010$.}
        \label{fig:CEunbounded}
    \end{figure}

    Next we argue that $G_n$ has diameter $2$ by showing that any pair of vertices has distance at most $2$. Each $x_i$ is either adjacent to $p_{j-1}$ or $p_{j}$ for every $j\in [n]$. 
    Next consider $p_i$ and $p_j$ and observe that $p_i$ is both adjacent to $x_{i\mod (2\ell-3)}$ and $x_{i+1\mod (2\ell-3)}$. Furthermore, $p_j$ must be either adjacent to $x_{i\mod (2\ell-3)}$ or $x_{i+1\mod (2\ell-3)}$. 
    Finally, consider $x_i$ and $x_j$. Note that $x_i$ is adjacent to both $p_i$ and $p_{i+1}$ and $x_j$ must be adjacent to either $p_i$ or $p_{i+1}$. 
    
    Now, for a contradiction, assume $G_n$ contains $\sharedEdge{k\times [2\ell]}$ as a subgraph and $x\in V(G_n)$ is the vertex common to all cycles.
    As each vertex on $P$ has degree $\ell-1$, we have $x\in \{x_0,\dots, x_{2\ell-4}\}$.  
    Any $C_{2\ell}$ in $G[\{x\}\cup V(P)]$ is of the form $(x, p_i,\dots,p_{2\ell-2},x)$. As $R$ is of length~$2\ell-3$,  there is either one or two indices $j\in \{i,\dots, 2\ell-3\}$ such that $x$ is adjacent to both $p_j$ and $p_{j+1}$. If there is only one such index~$j$, $x$ cannot be adjacent to both $p_i$ and $p_{i+2\ell-2}$. Hence, there are two. But then there is only one index $j\in \{2\ell-2,\dots,4\ell-5\}$ such that $x$ is adjacent to both $p_j$ and $p_{j+1}$. Hence, $x$ cannot be adjacent to $p_{i+4\ell-4}$ and thus not both $(x,p_i,\dots, p_{i+2\ell-2},x)$ and $(x,p_{i+2\ell-2},\dots, p_{i+4\ell-4},x)$ are subgraphs of $G_n$. Therefore, at least every second of the $k$ cycles  of the subgraph isomorphic to $\sharedEdge{k\times [2\ell]}$ contains some $y\in \{x_0,\dots,x_{2\ell-3}\}\setminus\{x\}$. As $x$ is not adjacent to any vertex in $\{x_0,\dots,x_{2\ell-3}\}\setminus\{x\}$ we get that each of the vertices $y\in \{x_0,\dots,x_{2\ell-3}\}\setminus\{x\}$ can be in at most one cycle, a contradiction as we may only obtain $2(2\ell-4)+1<k$ cycles.
     \qed
  \end{proof}

\begin{theorem}
\label{lem:samecyc}
    For any integers $\ell_1,\ell_2\geq 2$, $\ell\geq 4$  the class of $\{\sharedVertex{4\ell_1,4\ell_2}, \sharedEdge{4\ell_1,4\ell_2}\}$-subgraph-free graphs of diameter at most~$3$ and the class of $\{\sharedVertex{2\ell,2\ell}, \sharedEdge{2\ell,2\ell}\}$-subgraph-free graphs of diameter at most~$3$ have unbounded treedepth.
\end{theorem}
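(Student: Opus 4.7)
The plan is to prove each of the two assertions by exhibiting an explicit family $(G_n)_{n\in\mathbb{N}}$ of graphs of diameter at most~$3$ and treedepth at least $\log(n)$ that is subgraph-free for the specified $\sharedVertex$ and $\sharedEdge$ graphs. In both cases I would take $G_n$ to consist of a long induced path $P_n=(p_0,\dots,p_n)$---which alone forces $\td(G_n)\geq \log(n)$ by Fact~\ref{fact:pathLogtd}---together with a constant-size collection of auxiliary ``apex'' vertices whose adjacencies to $P_n$ follow a carefully chosen periodic (modular) pattern, and a small number of apex--apex edges to bring the overall diameter down to~$3$. This parallels the template of the diameter-$2$ constructions used in Theorems~\ref{thm:CVunbounded} and~\ref{thm:edgekCl}, but the looser diameter constraint of~$3$ permits a sparser apex structure.

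The arithmetic lever is as follows. If an apex $x$ is adjacent precisely to $\{p_j : j \equiv r \pmod t\}$ for some residue~$r$ and modulus~$t$, then every cycle of the form $x-p_a-\cdots-p_b-x$ through $x$ alone satisfies $t\mid b-a$ and hence has length $\equiv 2\pmod t$. We choose $t$ so that the forbidden cycle lengths are \emph{not} $\equiv 2\pmod t$: taking $t=4$ for assertion~(i) rules out single-apex cycles of length $4\ell_1$ or $4\ell_2$ (both $\equiv 0\pmod 4$), and taking $t=2\ell-1$ for assertion~(ii) rules out single-apex cycles of length $2\ell$ (since $2\ell\equiv 1\pmod{2\ell-1}$ and $\ell\geq 4$). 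A parallel analysis of cycles through two apex vertices $x_i,x_j$---closing either via the direct edge $x_ix_j$, or via a second disjoint path segment---yields a further modular equation relating $j-i$ to the cycle length; for our choice of~$t$ this pins $j-i\pmod t$ down to a single value, so every cycle of the forbidden length is constrained to traverse a designated apex--apex edge.

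The final step is a case analysis classifying cycles of the forbidden length by their shape (number of apex vertices visited, number of apex--apex edges used to close, number of path segments traversed), and using the modular constraints to argue that any two such cycles share substantially more than a single vertex or a single edge, precluding both the $\sharedVertex$ and the $\sharedEdge$ configurations. The main obstacle is the bookkeeping in this case analysis: the apex--apex edges required for diameter~$3$ open several cycle shapes (two apex vertices with a direct closing edge; two apex vertices closed via two disjoint path segments; cycles through three or more apex vertices with various combinations of direct edges), each requiring its own modular computation. In particular, the hardest point is to rule out the $\sharedEdge$ pattern: even once the modular arithmetic forces every forbidden-length cycle to contain a given apex--apex edge, one must still ensure that any two such cycles overlap beyond just that edge, which requires fine-tuning the periodic attachment and the apex--apex edge set so that the forced path segments of any two forbidden-length cycles necessarily intersect.
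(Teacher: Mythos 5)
Your overall template---a long induced path plus a constant number of ``apex'' vertices attached by a periodic pattern---is exactly the paper's, but the concrete execution you sketch has a gap that I do not think can be closed within your framework. With one residue class per apex (modulus $t=4$, resp.\ $t=2\ell-1$) you are forced to introduce apex--apex edges to keep the diameter at $3$ (two path vertices in residue classes $r$ and $r'$ that are far apart on $P$ can only be joined by a length-$3$ path of the form $p_i\hy x_r\hy x_{r'}\hy p_j$), and these edges create the very configurations you must exclude. Concretely, for any apex edge $x_rx_{r'}$ with $r'-r$ in the ``bad'' residue, the long path supplies arbitrarily many pairwise disjoint segments $p_a,\dots,p_b$ with $a\equiv r$, $b\equiv r'$ and $b-a=2\ell-3$; any two of the resulting $2\ell$-cycles intersect in exactly the edge $x_rx_{r'}$, which is precisely $\sharedEdge{2\ell,2\ell}$. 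Similarly, for $t=4$ a cycle through two apexes and two disjoint path segments automatically has length $\equiv 0 \pmod 4$, so one can build two cycles of lengths $4\ell_1$ and $4\ell_2$ meeting in a single apex, i.e.\ $\sharedVertex{4\ell_1,4\ell_2}$. The ``fine-tuning'' you defer to is therefore not a bookkeeping issue but a genuine obstruction: no choice of apex--apex edge set avoids it, because deleting the offending edges breaks the diameter bound.

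The missing idea is to use only \emph{two} apex vertices $x,y$, keep them \emph{non-adjacent}, and let each be adjacent to several residue classes so that their path-neighbourhoods partition the positions of $P$ (this is what restores diameter $3$ without any apex--apex edge). The paper takes $x\sim p_i$ iff $i\bmod 4\in\{0,1\}$ and $y\sim p_i$ otherwise for the first class, and for the second class uses the period-$(4\ell-4)$ word $11(01)^{\ell-2}00(10)^{\ell-2}$ for $x$ and its complement for $y$; in both cases the pattern is chosen so that no single apex together with a subpath of $P$ yields a cycle of the forbidden length ($x\sim p_i$ implies $x\not\sim p_{i+4m-2}$, resp.\ $x\not\sim p_{i+2\ell-2}$). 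Every forbidden-length cycle must then contain both $x$ and $y$, so any two such cycles share the two non-adjacent vertices $x$ and $y$---which is incompatible with sharing exactly one vertex (V-type) or exactly one edge (E-type). I would encourage you to redo your construction along these lines; your modular analysis of single-apex cycles is the right first step, but the rest of the argument should eliminate apex--apex edges rather than reason about them.
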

\begin{proof}
    We use two different constructions for the two classes of graphs.
    First consider any pair of integers $\ell_1,\ell_2\geq 2$.
    For every $n \in \mathbb{N}$  we construct in the following a $\{\sharedVertex{4\ell_1,4\ell_2}, \sharedEdge{4\ell_1,4\ell_2}\}$-subgraph-free graph $G_n$. We construct $G_n$ from the disjoint union of a path $P = (p_0, \cdots, p_n)$ and two isolated vertices $x$ and $y$.  We add an edge $xp_i$ whenever the $i\mod 4$ is either $0$ or $1$ and we add an edge $yp_i$ otherwise. By \cref{fact:pathLogtd} $\td(G_n)\geq \log(n)$ as $G_n$ contains $P_n$.

    To see that $G_n$ has diameter at most $3$ observe that $x$ and $y$ have distance at most $3$ as $x$ is adjacent to $p_1$ while $y$ is adjacent to $p_2$. Furthermore, $x$ ($y$ resp.) has distance at most $3$ to every vertex $p_i$ of the path as $x$ ($y$ resp.) is adjacent to either $p_i$ or $p_{i+1}$ or $p_{i+2}$. Finally, consider any two vertices $p_i$ and $p_j$ and assume without loss of generality that $p_i$ is adjacent to $x$. Now either $p_j$ is also adjacent to $x$ or $p_j$ has a neighbour which is adjacent to $x$
    by construction. Hence, any pair of vertices on $P$ has distance at most $3$.  

    Finally, we argue that $G_n$ is $\{\sharedVertex{4\ell_1,4\ell_2}, \sharedEdge{4\ell_1,4\ell_2}\}$-subgraph-free. Observe that if $x$ (or $y$ resp.) is adjacent to $p_i$ then $x$ (or $y$ resp.) is not adjacent to $p_{i+4m-2}$ for any $m\geq 1$. Hence, $G_n[\{x\}\cup V(P)]$ and symmetrically $G_n[\{y\}\cup V(P)]$ are $\{C_{4\ell_1}, C_{4\ell_2}\}$-subgraph-free. We conclude that $G_n$ contains neither $\sharedVertex{4\ell_1,4\ell_2}$ nor $ \sharedEdge{4\ell_1,4\ell_2}$ as a subgraph.\\

    \begin{figure}[b]
        \vspace*{-7mm}
        \centering
        \includegraphics[scale=1]{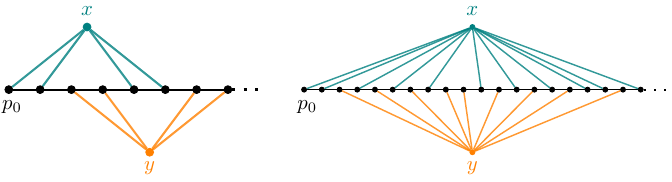}
        \caption{The constructions from \Cref{lem:samecyc}. To the left the first construction with pattern $1100$ and to the left the second construction for $\ell=5$ for which the pattern is $R=1101010100101010$.}
        \label{fig:diam3unbounded}
        \vspace*{-3mm}
    \end{figure}

    Next consider any $\ell\geq 2$.
    For every $n \in \mathbb{N}$ we construct a $\{\sharedVertex{2\ell,2\ell}, \sharedEdge{2\ell,2\ell}\}$-subgraph-free graph $\overline{G}_n$. We construct $\overline{G}_n$ from the disjoint union of a path $P = (p_0, \cdots, p_n)$ and two isolated vertices $x$ and $y$. Let $R\in \{0,1\}^\ast$ be the binary word $R=11 (01)^{\ell-2} 00 (10)^{\ell-2}$ consisting of the word $11$ followed by $\ell-2$ repetitions of the word $01$, the word $00$ and $\ell-2$ repetitions of the word $10$. We add an edge $xp_i$ whenever the $i\mod 4\ell-4$th bit of $R$ is $1$ and we add an edge $yp_i$ whenever the $i\mod 4\ell-4$th bit of $R$ is $0$. By \cref{fact:pathLogtd} $\td(\overline{G}_n)\geq \log(n)$ as $\overline{G}_n$ contains $P_n$.

    To see that $\overline{G}_n$ has diameter at most $3$ observe that $x$ and $y$ have distance at most $3$ as $x$ is adjacent to $p_1$ while $y$ is adjacent to $p_2$. Furthermore, $x$ ($y$ resp.) has distance at most $3$ to every vertex $p_i$ of the path as $x$ ($y$ resp.) is adjacent to either $p_i$ or $p_{i+1}$ or $p_{i+2}$. Finally, consider any two vertices $p_i$ and $p_j$ and assume without loss of generality that $p_i$ is adjacent to $x$. Now either $p_j$ is also adjacent to $x$ or $p_j$ has a neighbour which is adjacent to $x$
    by construction. Hence, any pair of vertices on $P$ has distance at most $3$.  

    Finally, we argue that $\overline{G}_n$ is $\{\sharedVertex{2\ell,2\ell}, \sharedEdge{2\ell,2\ell}\}$-subgraph-free. To see this, we observe that $RR$ has the property that the $i$th bit of $RR$ is $1$ if and only if the $i+2\ell-2$th bit of $RR$ is $0$ for every $i\in [4\ell-4]$. This directly implies that if $x$ (or $y$ resp.) is adjacent to $p_i$ then $x$ (or $y$ resp.) is not adjacent to $p_{i+2\ell-2}$. Hence, both $\overline{G}_n[\{x\}\cup V(P)]$ and $\overline{G}_n[\{y\}\cup V(P)]$ are $C_{2\ell}$-subgraph-free. We conclude that $\overline{G}_n$ can contain neither $\sharedVertex{2\ell,2\ell}$ nor $\sharedEdge{2\ell,2\ell}$ as a subgraph.
    \qed
\end{proof}

\section{Conclusions}\label{s-con}

We showed that bounding the diameter has a significant impact on the boundedness of width of a graph class, in particular for the subgraph relation and treedepth. We pose some open problems.
First, recall that Table~\ref{t-table} still contains three open cases: which classes of $F$-subgraph-free graphs have the diameter-width property for clique-width, treewidth or pathwidth? Towards solving these questions, we can show there are graphs $F$ such that $F$-subgraph-free graphs of diameter~$2$ have bounded pathwidth but unbounded treedepth; and also that 
there are graphs $F$ such that $F$-subgraph-free graphs of diameter~$2$ have bounded clique-width but unbounded treewidth 
(see \Cref{sec:subgraph}).

Second, we note that Demaine and Hajiaghayi~\cite{DH04} proved the diameter-treewidth property with even a linear (and optimal) diameter bound. We leave it as an open problem whether our diameter bound in Theorem~\ref{thm:classificationDiam5} can be optimized.

Third, recall that we classified boundedness of treedepth of $F$-subgraph-free graphs of diameter at most~$d$ for every constant $d\geq 4$. Completing the classification for $d=2$ is challenging due to the cases where $F$ is of $V$-type $C^V_x$ or $E$-type $C^E_x$. For $d=3$, we must consider $F=C_{2r}$ for $r\geq 2$. We showed the treedepth is unbounded for $r\in \{2,3\}$ and $d=2$, but bounded for $r=4$ and $d=3$. Our proof for the latter is an involved case analysis 
(see \Cref{sec:unicyclic}),
which seems not easy to extend to larger values of $r$. But by use of a computer~\cite{Ea25} we can also prove boundedness for 
$r\in \{5,\ldots,12\}$ and $d=3$. We therefore conjecture: 
\begin{conjecture}\label{con:diam3cyc}
    For every $r\geq 4$, the class of $C_{2r}$-subgraph-free graphs of diameter at most~$3$ has bounded treedepth.
\end{conjecture}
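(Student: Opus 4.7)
The plan is to extend the approach used to prove Theorem~\ref{thm:C8d3} from the case $r=4$ to general $r\geq 4$. Assume for contradiction that $G$ is a $C_{2r}$-subgraph-free graph of diameter at most $3$ whose treedepth exceeds $c(r,r,N(r))$ for a suitable function $N(r)$. Since $K_{r,r}$ contains $C_{2r}$ as a subgraph, \Cref{cor:longInducedPath} yields an induced path $P=(p_0,\dots,p_{N(r)})$ in $G$.

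The first step is to generalize the distance-$5$ and distance-$6$ type classifications used in the proof of \Cref{thm:C8d3}. For every pair $p_i,p_{i+d}$ with $d\geq 4$ the diameter-$3$ condition forces an external path of length $2$ or $3$, and a length-$s$ shortcut between $p_i$ and $p_{i+d'}$ produces, together with the sub-path of $P$ between its endpoints, a cycle of length $s+d'$. Thus the $C_{2r}$-free condition immediately forbids length-$3$ shortcuts at distance $2r-3$ and length-$2$ shortcuts at distance $2r-2$, which in turn reduces, up to a bounded number of ``distance-$d$ types'' for $d\in\{4,\dots,2r+2\}$, the shortcut patterns that are compatible with both hypotheses.

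The second step mirrors Claims~\ref{claim:case057}, \ref{claim:case014}, and \ref{claim:case04}: I would establish a sequence of forbidden-configuration claims showing that certain local combinations of external vertices and their neighbourhoods on $P$ always yield a $C_{2r}$ and hence cannot appear in $G$. Once sufficiently many restrictions are in place, a pigeonhole argument on a window of $P$ of length polynomial in $r$ should produce two shortcuts of the same type whose union with a portion of $P$ is itself a $C_{2r}$, giving the desired contradiction.

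The principal obstacle is the rapid combinatorial explosion of the case analysis with $r$. Already for $r=4$ the authors needed three nested claims and several dozen sub-cases, and both the number of admissible shortcut types and the number of their pairwise combinations grow with $r$; the fact that the authors resort to computer-assisted verification for $r\in\{5,\dots,12\}$ strongly suggests that a purely hand-checked case analysis will not suffice. A more structural reformulation therefore seems essential. One promising direction is to encode each external vertex's neighbourhood on $P$ as a binary pattern (in the spirit of the constructions in \Cref{thm:edgekCl} and \Cref{lem:samecyc}) and to prove that no periodic pattern of period $O(r)$ can simultaneously avoid every $C_{2r}$-producing adjacency combination while still meeting the diameter-$3$ requirement on a long enough induced path; adapting the Galvin--Rival--Sands machinery to argue directly about such periodic patterns would be the key novel ingredient.
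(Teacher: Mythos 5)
There is a fundamental issue to flag first: the statement you are asked to prove is stated in the paper as \Cref{con:diam3cyc}, an open \emph{conjecture}. The authors prove only the case $r=4$ (\Cref{thm:C8d3}), report computer-assisted verification for $r\in\{5,\ldots,12\}$, and explicitly leave the general case unresolved. So there is no proof in the paper to compare against, and any complete argument you gave would be new mathematics.

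Your proposal, however, is not a proof but a plan, and the gap is exactly where you yourself locate it: everything after the first step is deferred. The first step (classifying shortcut types: a length-$s$ external shortcut between $p_i$ and $p_{i+d'}$ closes a cycle of length $s+d'$, so lengths $2$ and $3$ are forbidden at path-distances $2r-2$ and $2r-3$ respectively) is correct and does generalize the opening observations of the $r=4$ proof. But the substance of that proof lies in Claims~\ref{claim:case057}, \ref{claim:case014} and~\ref{claim:case04}, which eliminate specific adjacency configurations by exhibiting a $C_8$ in each of several dozen sub-cases, and in the final pigeonhole step, whose cycle-length bookkeeping is tuned to $r=4$ (two type-$1$ shortcuts at offset $2$ close a cycle of length exactly $8$; for general $r$ you would need offset $r-2$ and, more importantly, you would first need to know that \emph{every} index admits a type-$1$ shortcut, which is precisely what the unproved forbidden-configuration claims deliver). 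You do not supply the general-$r$ analogues of these claims, nor the ``structural reformulation'' via periodic binary patterns that you correctly identify as the only plausible way to avoid an unbounded case analysis. Sketching that a reformulation ``seems essential'' and that adapting the Galvin--Rival--Sands machinery ``would be the key novel ingredient'' names the missing idea without providing it. As it stands, the proposal establishes nothing beyond the trivial preliminary reductions; the conjecture remains open.
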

If a stronger version of Conjecture~\ref{con:diam3cyc} involving all graphs with one cycle
(analogous to Theorem~\ref{thm:diam2-unicyclic}) is true, just like the following conjecture (which is supported by all our results so far), then we will obtain a complete classification for $d=3$.

\begin{conjecture}\label{con:diam3TwoCyc}
    For a graph~$\forbiddenGraph$ with at least two cycles, the class $\mathcal{C}$ of $\forbiddenGraph$-subgraph-free graphs of diameter at most~$3$ has unbounded treedepth.
\end{conjecture}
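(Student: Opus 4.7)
My approach would be to produce, for each connected graph $\forbiddenGraph$ with cyclomatic number at least $2$, an infinite family of $\forbiddenGraph$-subgraph-free graphs of diameter at most~$3$ containing arbitrarily long induced paths; by Fact~\ref{fact:pathLogtd} this already yields unbounded treedepth. The first step is a structural reduction. Since $\forbiddenGraph$ is connected and has two independent cycles, it contains a minimal two-cycle subgraph $\forbiddenGraph_0$ of one of the following types: (a) two vertex-disjoint cycles $C_p, C_q$ joined by a (possibly trivial) path, (b) a $V$-type graph $\sharedVertex{p,q}$, (c) an $E$-type graph $\sharedEdge{p,q}$, or (d) a proper theta graph $\theta_{a,b,c}$, in each case possibly decorated with pendant trees. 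As any $\forbiddenGraph_0$-subgraph-free graph is also $\forbiddenGraph$-subgraph-free, it suffices to treat each base type.

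For cases (a)--(c), the plan is to adapt the path-plus-two-pendants construction from \Cref{lem:samecyc}. There a long induced path $P_n$ is decorated with two extra vertices $x,y$ using a periodic binary incidence pattern $R$; cycles through $x$ or $y$ then have lengths determined by the distances between consecutive $1$s in $R$, which can be tuned so that the forbidden configurations of $\forbiddenGraph_0$ never appear. For (b) and (c) this amounts to extending the modular analysis in \Cref{lem:samecyc} beyond parities $4\ell$ and $2\ell$; for (a), where the two forbidden cycles are disjoint, one further forces every cycle to pass through a single hub vertex by choosing $R$ so that $x$ (and analogously $y$) has a neighbour in every window of $\max(p,q)$ consecutive path vertices. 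In each sub-case the resulting graphs contain $P_n$ and hence have unbounded treedepth, while the diameter bound follows as in \Cref{lem:samecyc}.

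The main obstacle is case (d). A theta graph has no cut vertex, so it cannot be forced to pass through a small hub, and two internally disjoint paths between the same endpoints arise naturally in any diameter-$3$ graph. When $\forbiddenGraph$ contains $C_6$ as a subgraph the problem collapses to \Cref{thm:c6}, and more generally whenever some cycle of $\forbiddenGraph_0$ has length~$6$ the same reduction applies. For the remaining theta configurations, I would turn to algebraic constructions: modify the polarity graph of a finite generalised quadrangle used in \Cref{thm:c6} by controlled subdivisions or local modifications, or use polarities of higher-dimensional projective geometries and norm graphs so that a specific $\theta_{a,b,c}$ is blocked by an incidence condition while the diameter stays at $3$ and the minimum degree (hence treedepth) grows. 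Proving such an algebraic non-embedding statement uniformly in $(a,b,c)$---especially when no cycle of $\forbiddenGraph_0$ equals $6$---is where I expect the difficulty to concentrate, and is consistent with the partial nature of the analogous classification in \Cref{t-d23} for $d=2$.
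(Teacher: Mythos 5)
This statement is posed as a conjecture in the paper: the authors offer no proof of it, and explicitly present \Cref{lem:samecyc} as only partial evidence. Your proposal is therefore not being measured against an existing argument, and on its own terms it is a research plan with a genuine gap rather than a proof.

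Two concrete points. First, you misplace the difficulty. Your case (a) (two vertex-disjoint cycles) and your case (d) for \emph{proper} theta graphs (all three paths of length at least $2$) are already dispatched by \Cref{obs:canonicalUnboundedExamples}: such a graph $\forbiddenGraph$ is not an apex linear forest (deleting any single vertex leaves either a cycle or a vertex of degree $3$), hence is not a subgraph of $P_n\join K_1$, and unboundedness follows already at diameter~$2$, a fortiori at diameter~$3$. No polarity graphs or norm graphs are needed there. After also invoking bipartiteness (via $K_{n,n}$) and \Cref{thm:c6} for any $\forbiddenGraph$ containing $C_4$ or $C_6$, the surviving cases are exactly connected bipartite apex linear forests with at least two cycles, i.e.\ graphs whose minimal two-cycle subgraph is a $V$-type or $E$-type graph on two even cycle lengths, both at least $8$.

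Second, for precisely those surviving cases your proposal reduces to ``extend the modular analysis of \Cref{lem:samecyc} beyond the parities $4\ell$ and $2\ell$'', and that is the entire open content of the conjecture. \Cref{lem:samecyc} covers only the pairs $(4\ell_1,4\ell_2)$ and $(2\ell,2\ell)$; a pair such as $(8,10)$ or $(10,14)$ is not covered, and it is not routine to produce a periodic incidence pattern $R$ such that neither hub sees two neighbours at distance $2\ell_1-2$ or $2\ell_2-2$ along $P$, no forbidden configuration arises through a path vertex or through both hubs simultaneously, and yet every pair of vertices is within distance $3$. The tension is real: for $d=2$ the analogous classes have \emph{bounded} treedepth (\Cref{thm:diam2-CV24}, \Cref{thm:CEtwoLength}), so any such construction must exploit the extra slack of diameter $3$ in an essential way that your sketch does not identify. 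Until a construction (or a boundedness proof going the other way) is supplied for these remaining length pairs, the conjecture stands open and your argument does not close it.
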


\bibliography{bib}
\end{document}